\definecolor{myurlcolor}{rgb}{0,0,0.7}% Color highlighted for numbers in citations and theorems
\newcommand{\tinyspace}{\mspace{1mu}}
\newcommand{\op}[1]{\operatorname{#1}}
\newcommand{\abs}[1]{\left\lvert\tinyspace #1 \tinyspace\right\rvert}
\newcommand{\norm}[1]{\left\lVert\tinyspace #1 \tinyspace\right\rVert}
\renewcommand{\t}{{\scriptscriptstyle\mathsf{T}}}
\newcommand{\setft}[1]{\mathrm{#1}}
\newcommand{\lin}[1]{\setft{L}\left(#1\right)}
\newcommand{\density}[1]{\setft{D}\left(#1\right)}
\newcommand{\trans}[1]{\setft{T}\left(#1\right)}
\newcommand{\pd}[1]{\setft{Pd}\left(#1\right)}
\renewcommand{\vec}{\op{vec}}
\newcommand{\rank}{\op{rank}}
\newcommand{\sign}{\op{sign}}
\def\complex{\mathbb{C}}
\def\real{\mathbb{R}}
\def\natural{\mathbb{N}}
\def\I{\mathbb{1}}
\newenvironment{mylist}[1]{\begin{list}{}{
    \setlength{\leftmargin}{#1}
    \setlength{\rightmargin}{0mm}
    \setlength{\labelsep}{2mm}
    \setlength{\labelwidth}{8mm}
    \setlength{\itemsep}{0mm}}}
    {\end{list}}
\def\ot{\otimes}
\newcommand{\inner}[2]{\langle #1 , #2\rangle}
\newcommand{\iinner}[2]{\langle #1 | #2\rangle}
\newcommand{\out}[2]{| #1\rangle\langle #2 |}
\newcommand{\Inner}[2]{\left\langle #1 , #2\right\rangle}
\newcommand{\Innerm}[3]{\left\langle #1 \left| #2 \right| #3 \right\rangle}
\newcommand{\defeq}{\stackrel{\smash{\textnormal{\tiny def}}}{=}}
\newcommand{\pa}[1]{(#1)}
\newcommand{\Pa}[1]{\left(#1\right)}
\newcommand{\Br}[1]{\left[#1\right]}
\newcommand{\set}[1]{\{#1\}}
\newcommand{\Set}[1]{\left\{#1\right\}}
\newcommand{\bra}[1]{\langle#1|}
\newcommand{\ket}[1]{|#1\rangle}
\DeclareMathOperator{\vectorize}{vec}
\newcommand{\col}[1]{\vectorize\pa{#1}}
\DeclareMathOperator{\trace}{Tr}
\newcommand{\Ptr}[2]{\trace_{#1}\Pa{#2}}
\newcommand{\Tr}[1]{\Ptr{}{#1}}
\def\cA{\mathcal{A}}\def\cB{\mathcal{B}}\def\cC{\mathcal{C}}
\def\cH{\mathcal{H}}\def\cI{\mathcal{I}}
\def\cM{\mathcal{M}}\def\cN{\mathcal{N}}
\def\cP{\mathcal{P}}
\def\cX{\mathcal{X}}\def\cY{\mathcal{Y}}
\def\bH{\mathbf{H}}
\def\bP{\mathbf{P}}\def\bS{\mathbf{S}}
\def\rH{\mathrm{H}}
\def\rR{\mathrm{R}}\def\rS{\mathrm{S}}
\def\sH{\mathscr{H}}
\newcommand{\cmp}{Comm. Math. Phys.~}
\newcommand{\jmp}{J. Math. Phys.~}
\newcommand{\jpa}{J. Phys. A~}
\newcommand{\lmp}{Lett. Math. Phys.~}
\newcommand{\prl}{Phys. Rev. Lett.~}
\newcommand{\rmp}{Rev. Math. Phys.~}
\newtheorem{thrm}{Theorem}[section]
\newtheorem{lem}[thrm]{Lemma}
\newtheorem{prop}[thrm]{Proposition}
\newtheorem{cor}[thrm]{Corollary}
\theoremstyle{definition}
\newtheorem{definition}[thrm]{Definition}
\newtheorem{remark}[thrm]{Remark}
\newtheorem{exam}[thrm]{Example}
\numberwithin{equation}{section}
\newcounter{questionnumber}
\begin{document}

\title{\large Tomita-Takesaki Modular Theory vs. Quantum Information Theory}
\author{Lin Zhang\footnote{E-mail: linyz@zju.edu.cn; godyalin@163.com}\\
{\small\it Institute of Mathematics, Hangzhou Dianzi
University, Hangzhou 310018, PR~China}\\[1mm]
Junde Wu\footnote{E-mail: wjd@zju.edu.cn}\\[1mm]
{\small \it Department of Mathematics, Zhejiang University, Hangzhou
310027, PR~China}}
\date{}\maketitle

%=============================================================================%
\mbox{}\hrule\mbox{}
\begin{abstract}

In this paper, we will try to find out the relationship
between separating and cyclic vectors in the theory of von Neumann
algebra and entangled states in the theory of quantum information.
The corresponding physical interpretation is presented as well.

\end{abstract}
\mbox{}\hrule\mbox{}
\tableofcontents

%==================================================================%
\section{Introduction}
%==================================================================%

First we recall that some concepts about Tomita-Takesaki modular
theory in the setting of von Neumann algebras. For this parts, the
readers are referred to \cite{Bratteli}.

%=========================================================================
\subsection{von Neumann algebras}
%=========================================================================

Let $\cH$ be a Hilbert space and $\lin{\cH}$ the all bounded linear
operator defines on $\cH$. Assume that $\cM$ is a subset of $\lin{\cH}$,
we denote its commutant $\cM'$ by the set of all bounded operators
on $\cH$ commuting with every operator in $\cM$, that is
$$
\cM' \defeq \Set{M'\in\lin{\cH}: [M',M]=M'M-MM'=0\text{\ for all\ }M\in\cM}.
$$
One has
\begin{eqnarray*}
\cM &\subseteq& \cM'' = \cM^{(\mathrm{iv})} = \cM^{(\mathrm{vi})} =
\cdots\\
\cM' &=& \cM''' = \cM^{(\mathrm{v})} = \cM^{(\mathrm{vii})} = \cdots
\end{eqnarray*}

\begin{definition}
A $\ast$-algebra $\cM$ on $\cH$ is said to be a \emph{von Neumann algebra}, if
$$
\cM=\cM''.
$$

The \emph{center} $\cC$ of a von Neumann algebra $\cM$ is
defined by
$$
\cC \defeq \cM\cap\cM'.
$$

A von Neumann algebra is called a \emph{factor} if $\cC=\complex\I$.
\end{definition}

\begin{definition}
If $\cA$ is a subset of $\lin{\cH}$ and $\cX$ is a subset of $\cH$,
let $[\cA\cX]$ denote the closure of the linear span of elements of
the form $Ax$ for all $A\in\cM,x\in\cX$. Let $[\cA\cX]$ also denote
the orthogonal projection onto $[\cA\cX]$.
\end{definition}

\begin{prop}\label{prop:usualtrace}
Let $\trace$ be the usual trace on $\lin{\cH}$, and let $\cC_1$ be
the Banach space of trace-class operators on $\cH$ equipped with the
trace norm $T\mapsto\Tr{\abs{T}}\defeq \norm{T}_1$. Then $\lin{\cH}$ is the dual $\cC^\ast_1$ of $\cC_1$ by the duality:
$$
(A,T)\in\lin{\cH}\times \cC_1\longmapsto \Tr{AT}.
$$
\end{prop}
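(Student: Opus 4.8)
The plan is to identify the pairing $(A,T)\mapsto\Tr{AT}$ as an isometric isomorphism of $\lin{\cH}$ onto the dual space $\cC_1^\ast$. I would first check that each $A$ induces a bounded functional of the same norm (an isometric embedding), and then that every functional on $\cC_1$ arises in this way (surjectivity).

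For the embedding, fix $A\in\lin{\cH}$ and set $\phi_A(T)\defeq\Tr{AT}$. The basic estimate is the trace (H\"older-type) inequality $\abs{\Tr{AT}}\le\Tr{\abs{AT}}\le\norm{A}\,\norm{T}_1$, which I would obtain from the singular-value decomposition of $T$; it shows $\phi_A\in\cC_1^\ast$ with $\norm{\phi_A}\le\norm{A}$, so $A\mapsto\phi_A$ is a linear contraction into $\cC_1^\ast$. To see it is isometric, I would supply near-optimal test operators: given $\epsilon>0$, choose a unit vector $\ket{u}$ with $\norm{A\ket{u}}>\norm{A}-\epsilon$, put $\ket{v}=A\ket{u}/\norm{A\ket{u}}$, and test against the rank-one operator $T=\out{u}{v}$, which has $\norm{T}_1=1$ and $\Tr{AT}=\iinner{v}{Au}=\norm{A\ket{u}}$. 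Letting $\epsilon\to0$ yields $\norm{\phi_A}=\norm{A}$; in particular the map is injective.

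For surjectivity, I would reconstruct $A$ from a given $\phi\in\cC_1^\ast$. Since every rank-one operator lies in $\cC_1$, the assignment $(\ket{u},\ket{v})\mapsto\phi\Pa{\out{u}{v}}$ is well defined and sesquilinear, being linear in $\ket{u}$ and conjugate-linear in $\ket{v}$. It is bounded because $\Abs{\phi\Pa{\out{u}{v}}}\le\norm{\phi}\,\Norm{\out{u}{v}}_1=\norm{\phi}\,\norm{\ket{u}}\,\norm{\ket{v}}$. The Riesz representation theorem for bounded sesquilinear forms then furnishes a unique $A\in\lin{\cH}$ with $\norm{A}\le\norm{\phi}$ and $\phi\Pa{\out{u}{v}}=\iinner{v}{Au}=\Tr{A\out{u}{v}}$ for all $\ket{u},\ket{v}$.

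It then remains to promote the identity $\phi(T)=\Tr{AT}$ from rank-one operators to all of $\cC_1$. By linearity $\phi$ and $\phi_A$ already agree on every finite-rank operator. Writing a general $T\in\cC_1$ through its singular-value expansion $T=\sum_n s_n\out{x_n}{y_n}$ and truncating shows the finite-rank operators are dense in $\cC_1$ for $\norm{\cdot}_1$; since both $\phi$ and $\phi_A$ are $\norm{\cdot}_1$-continuous, they coincide on all of $\cC_1$. Combined with the first step, this establishes $\lin{\cH}=\cC_1^\ast$. I expect the substantive part to be the surjectivity argument rather than the embedding: one must verify boundedness of the sesquilinear form, invoke its Riesz representation, and then carefully justify the density-and-continuity passage from finite-rank to arbitrary trace-class operators. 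The trace inequality underpinning both directions likewise needs the singular-value machinery in place, which is the one ingredient I would want to confirm is available before proceeding.
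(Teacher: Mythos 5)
Your proof is correct and complete: the isometric embedding via the H\"older-type estimate and rank-one test operators, followed by surjectivity via the Riesz representation of the bounded sesquilinear form $(\ket{u},\ket{v})\mapsto\phi\pa{\out{u}{v}}$ and the $\norm{\cdot}_1$-density of finite-rank operators, is the standard argument for $\cC_1^\ast\cong\lin{\cH}$. The paper states this proposition without any proof (it is quoted as a known structural fact), so there is nothing to compare against; your write-up supplies exactly the textbook reasoning the paper implicitly relies on.
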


Let $\{\ket{\xi_n}\}$ and $\{\ket{\eta_n}\}$ be two sequence of vectors in ${\cH}$, such that $\norm{\ket{\xi_n}}^2<+\infty$  and $\sum_n
\norm{\ket{\eta_n}}^2<+\infty$. Then for each $A\in\lin{\cH}$, $$A\rightarrow \sum_n |\Innerm{\xi_n}{A}{\eta_n}|$$ defined a seminorm on $\lin{\cH}$. The locally convex topology on $\lin{\cH}$ is induced by these seminorms is called the $\sigma$-weakly topology.

The $\sigma$-weakly topology of $\lin{\cH}$ is just the $w^{\ast}$ topology induced by $\cC_1$.

\begin{definition}
The space of $\sigma$-weakly continuous linear functionals on
$\lin{\cH}$ is called the \emph{predual} of $\lin{\cH}$ and is
denoted by $\lin{\cH}_\ast$.
\end{definition}

Note that the $\sigma$-weakly topology of $\lin{\cH}$ is just the $w^{\ast}$ topology induced by $\cC_1$, so
$\lin{\cH}_\ast=\cC_1$.

\subsubsection{Normal states and the predual}

\begin{definition}
The \emph{predual} of a von Neumann algebra $\cM$ is the space of
all $\sigma$-weakly continuous linear functionals on $\cM$. It is denoted
by $\cM_\ast$.
\end{definition}

Let $\cM$ be a von Neumann algebra, denote $\cM^\bot=\{T: T\in \cC_1, \Tr{TM}=0, \text{\ for all\ }M\in\cM\}.$ Then we have $\cM_\ast=\lin{\cH}_\ast/\cM^\bot$, and $\cM_\ast^{_\ast}=\cM$.

\begin{definition}
Let $\cM$ be a von Neumann algebra on a Hilbert space $\cH$, and $\omega$ a positive linear functional on $\cM$. We say that $\omega$ is normal if for all increasing nets $\{A_{\alpha}\}$ in $\cM_{+}$ with an upper bound, then $\omega(\sup_{\alpha}A_{\alpha})=\sup_{\alpha}\omega(A_{\alpha})$.
\end{definition}

\begin{remark}
Now we remark here that we can construct an operator (in Dirac
notation)
$$
\sum_n \out{\xi_n}{\eta_n}
$$
when $\sum_n \norm{\ket{\xi_n}}^2<+\infty$  and $\sum_n
\norm{\ket{\eta_n}}^2<+\infty$.

Since
\begin{eqnarray*}
\norm{\sum_n \out{\xi_n}{\eta_n}} &\leqslant& \sum_n\norm{
\out{\xi_n}{\eta_n}} = \sum_n\norm{\ket{\xi_n}}\norm{\ket{\eta_n}}\\
&\leqslant&
\Pa{\sum_n\norm{\ket{\xi_n}}^2}\Pa{\sum_n\norm{\ket{\eta_n}}^2}<+\infty,
\end{eqnarray*}
it follows that $\sum_n \out{\xi_n}{\eta_n}\in\lin{\cH}$. Therefore,
each normal element $\omega\in\cM_\ast$ has a representative in
$\lin{\cH}$:
$$
\omega(M) = \Inner{\sum_n \out{\xi_n}{\eta_n}}{M}_{\bH\bS},
$$
where $\Inner{X}{Y}_{\bH\bS} \defeq \Tr{X^\ast Y}$.
\end{remark}

\begin{prop}
The predual $\cM_\ast$ of a von Neumann algebra $\cM$ is a Banach
space  in the norm of $\cM^\ast$, and $\cM$ is the dual of
$\cM_\ast$ in the duality
$$
(M,\omega)\in \cM\times \cM_\ast \longmapsto \omega(M).
$$
\end{prop}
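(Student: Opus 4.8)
The plan is to realize $\cM_\ast$ concretely as a quotient of the trace-class operators $\cC_1$ and then invoke the standard duality between a weak-$\ast$ closed subspace and such a quotient. By Proposition~\ref{prop:usualtrace}, every $T\in\cC_1$ determines a $\sigma$-weakly continuous functional $A\mapsto\Tr{AT}$ on $\lin{\cH}$, and restriction to $\cM$ produces an element of $\cM_\ast$. This defines a linear map $r:\cC_1\to\cM_\ast$ by $r(T)=\Tr{\,\cdot\,T}\big|_\cM$. First I would argue that $r$ is surjective: any $\sigma$-weakly continuous functional on $\cM$ extends, by the Hahn--Banach theorem applied to the locally convex $\sigma$-weak topology, to a $\sigma$-weakly continuous functional on $\lin{\cH}$, which by Proposition~\ref{prop:usualtrace} has the form $\Tr{\,\cdot\,T}$ for some $T\in\cC_1$. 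Hence $\cM_\ast=\im(r)$, and $\ker(r)$ is the pre-annihilator
$$
\cM_\perp=\Set{T\in\cC_1:\Tr{MT}=0\text{ for all }M\in\cM},
$$
which is norm-closed in $\cC_1$ as an intersection of kernels of the trace-norm-continuous maps $T\mapsto\Tr{MT}$. Thus $r$ induces a linear isomorphism $\cM_\ast\cong\cC_1/\cM_\perp$.

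For the Banach-space assertion I would then identify the norm that $\cM_\ast$ inherits from $\cM^\ast$ with the quotient norm on $\cC_1/\cM_\perp$. The norm of $\omega=r(T)$ as a functional on $\cM$ is $\sup\Set{\abs{\Tr{MT}}:M\in\cM,\ \norm{M}\leqslant 1}$, whereas the quotient norm of the class $[T]$ is $\inf\Set{\norm{T+S}_1:S\in\cM_\perp}$; the equality of these two quantities is precisely the duality between $\cM$ and $\cC_1/\cM_\perp$ furnished by the Hahn--Banach theorem. Since $\cC_1$ is complete and $\cM_\perp$ is closed, the quotient $\cC_1/\cM_\perp$ is a Banach space, and therefore so is $\cM_\ast$ in the norm of $\cM^\ast$.

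Finally, for the duality $\cM=(\cM_\ast)^\ast$, I would use that $\cM$ is a von Neumann algebra, so $\cM=\cM''$ and hence $\cM$ is $\sigma$-weakly closed, i.e.\ weak-$\ast$ closed as a subset of $\lin{\cH}=(\cC_1)^\ast$. By the general Banach-space identity $\Pa{\cC_1/\cM_\perp}^\ast=(\cM_\perp)^\perp\subseteq(\cC_1)^\ast=\lin{\cH}$ together with the bipolar theorem, $(\cM_\perp)^\perp$ equals the weak-$\ast$ closure of $\spn\cM=\cM$, which is $\cM$ itself. Unwinding the identifications, a bounded functional on $\cM_\ast$ corresponds to the operator $M\in\cM$ acting by $\omega\mapsto\omega(M)$, which is the pairing in the statement.

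The main obstacle I expect is the step $(\cM_\perp)^\perp=\cM$: it is exactly here that one must know $\cM$ is genuinely $\sigma$-weakly closed rather than closed in a weaker sense, which is guaranteed by the defining condition $\cM=\cM''$ through the double commutant theorem. The surjectivity of $r$ is the second delicate point, since the Hahn--Banach extension of $\sigma$-weakly continuous functionals is what makes the abstract predual coincide with the concrete quotient $\cC_1/\cM_\perp$ and thereby ties the whole argument to Proposition~\ref{prop:usualtrace}.
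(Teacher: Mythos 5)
The paper states this proposition without giving any proof of its own (it is quoted as a standard structural fact in this review), so there is nothing in the source to compare your argument against line by line; judged on its own terms, your proposal is the standard textbook proof and it is correct. The two pillars you identify are exactly the right ones: surjectivity of the restriction map $r:\cC_1\to\cM_\ast$ via the Hahn--Banach extension theorem for the locally convex $\sigma$-weak topology, and the identity $(\cM_\perp)^\perp=\cM$, which holds because $\cM=\cM''$ is $\sigma$-weakly closed (a commutant is closed in the weak operator topology, hence in the finer $\sigma$-weak topology; strictly speaking you need only this closedness, not the full von Neumann density theorem). One small organizational remark: the norm identification $\norm{r(T)}_{\cM^\ast}=\inf\Set{\norm{T+S}_1:S\in\cM_\perp}$ is cleanest if you derive it \emph{after} establishing the isometric Banach-space identity $\Pa{\cC_1/\cM_\perp}^\ast\cong(\cM_\perp)^\perp=\cM$, since then $\norm{[T]}$ is computed by pairing against the unit ball of $\cM$ via Hahn--Banach in the quotient space; the inequality $\norm{r(T)}\leqslant\norm{[T]}$ is immediate, while the reverse inequality is not obtained by a norm-preserving Hahn--Banach extension of $r(T)$ to $\lin{\cH}$ (such an extension need not be $\sigma$-weakly continuous), so ordering the steps this way avoids a genuine pitfall that your write-up brushes against but does not actually fall into.
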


\begin{remark}
We recall the following identifications:
\begin{eqnarray*}
\ell^\ast_1 = \ell_\infty,\quad L^\ast_1 = L_\infty,\quad \cC^\ast_1
= \lin{\cH}.
\end{eqnarray*}
Thus the predual $\cM_\ast$ of von Neumann algebra $\cM$ can be
viewed as an analog of $\cC_p$-class with $p=1$ in $\lin{\cH}$.
Therefore, we have that if denote $\cM_\infty \equiv \cM$ and $\cM_1
\equiv \cM_\ast$
$$
(\cM_\ast)^\ast = \cM,\quad \text{or}\quad \cM^\ast_1 = \cM_\infty.
$$
In particular, when $\cM=\lin{\cH}$, we have $\lin{\cH}^\ast_1 =
\lin{\cH}_\infty$.
\end{remark}

\begin{prop}
Let $\omega$ be a state on a von Neumann algebra $\cM$ acting on a
Hilbert space $\cH$. Then the following conditions are equivalent:
\begin{enumerate}[(i)]
\item $\omega$ is normal;
\item $\omega$ is $\sigma$-weakly continuous;
\item there exist a density matrix $D_\omega$, that is, a positive trace-class operator
$D_\omega$ on $\cH$ with $\Tr{D_\omega}=1$ such that $\omega(M) =
\Tr{D_\omega M}$ for all $M$ in $\cM$.
\end{enumerate}
\end{prop}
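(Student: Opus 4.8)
The plan is to prove the cycle of implications $(iii)\Rightarrow(ii)\Rightarrow(i)\Rightarrow(iii)$, placing essentially all of the genuine work in the final step. The two easy implications come almost for free from the duality already recorded in Proposition~\ref{prop:usualtrace}. For $(iii)\Rightarrow(ii)$, if $\omega(M)=\Tr{D_\omega M}$ with $D_\omega$ a density operator, I would take its spectral decomposition $D_\omega=\sum_n\lambda_n\out{e_n}{e_n}$ and write $\omega(M)=\sum_n\lambda_n\Innerm{e_n}{M}{e_n}$; since $D_\omega$ is trace-class and the $\sigma$-weak topology on $\lin{\cH}$ is by definition the weak-$\ast$ topology induced by the predual $\cC_1$, the functional $M\mapsto\Tr{D_\omega M}$ is $\sigma$-weakly continuous, and restriction to $\cM$ preserves this. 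For $(ii)\Rightarrow(i)$, I would use the characterization of normality as preservation of suprema of bounded increasing nets of positive operators: if $M_\alpha\nearrow M$ is such a net, then $M_\alpha\to M$ strongly and hence, being norm-bounded, $\sigma$-weakly, so $\sigma$-weak continuity forces $\omega(M_\alpha)\to\omega(M)$, which is exactly normality.

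For the hard implication $(i)\Rightarrow(iii)$ the strategy is to exhaust $\omega$ by vector functionals $\omega_\xi(M)\defeq\Innerm{\xi}{M}{\xi}$ and then assemble the density operator from them. The engine of the argument is a domination lemma: every nonzero normal positive functional $\psi$ on $\cM$ majorizes a nonzero vector functional, i.e.\ there is $\xi\neq 0$ with $\omega_\xi(M)\leqslant\psi(M)$ for all positive $M\in\cM$. Granting the lemma, I would invoke Zorn's lemma to select a maximal family $\{\xi_i\}$ of vectors with $\sum_i\omega_{\xi_i}\leqslant\omega$; complete additivity (equivalently normality) guarantees that the residual functional $\omega-\sum_i\omega_{\xi_i}$ is again normal and positive, so if it were nonzero the lemma would produce one more vector functional beneath it, contradicting maximality. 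Hence $\omega(M)=\sum_i\Innerm{\xi_i}{M}{\xi_i}=\Tr{D_\omega M}$ with $D_\omega=\sum_i\out{\xi_i}{\xi_i}$ a positive trace-class operator satisfying $\Tr{D_\omega}=\omega(\I)=1$.

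To prove the domination lemma I would first fix a vector $\eta$ with $\Innerm{\eta}{\I}{\eta}>\psi(\I)$, so that the normal self-adjoint functional $\psi-\omega_\eta$ has a nonvanishing negative part, and then apply the Jordan decomposition of normal self-adjoint functionals to split $\psi-\omega_\eta$ into orthogonal normal positive pieces governed by a support projection $p\in\cM$. The vector $\xi$ is then to be extracted from the spectral corner determined by $p$, using a Radon--Nikodym type comparison to convert the corner-level domination into an honest vector functional beneath $\psi$.

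I expect this last extraction to be the main obstacle. The naive candidates $p\eta$ or $p^\perp\eta$ only yield inequalities on the corners $p\cM p$ and $p^\perp\cM p^\perp$, and these do not upgrade to global domination on all of $\cM^+$ because $\psi(pMp)\leqslant\psi(M)$ can fail for positive $M$; so the delicate point is to choose the vector (via the Radon--Nikodym operator in the reduced algebra) and the projection so that $\omega_\xi\leqslant\psi$ holds on the whole of $\cM$. The remaining bookkeeping---checking that the exhausting family can be taken countable and that $\sum_i\out{\xi_i}{\xi_i}$ genuinely defines a trace-class operator rather than merely a positive quadratic form---follows once normality/complete additivity is in hand.
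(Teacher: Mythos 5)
The paper itself gives no proof of this proposition---it is quoted as the standard characterization of normal states (Bratteli--Robinson Thm.~2.4.21, Takesaki Thm.~III.3.8)---so your proposal must stand on its own. The two easy implications are fine: $(iii)\Rightarrow(ii)$ is immediate from the definition of the $\sigma$-weak topology via the trace-class duality, and $(ii)\Rightarrow(i)$ correctly uses that a bounded increasing net of positive operators converges strongly, hence $\sigma$-weakly, to its supremum. The problem is that your entire treatment of the hard direction $(i)\Rightarrow(iii)$ hangs on the domination lemma---every nonzero normal positive functional $\psi$ on $\cM$ satisfies $\omega_\xi\leqslant\psi$ on $\cM^+$ for some $\xi\neq 0$---and you do not prove it. The lemma is true, but only \emph{a posteriori}: once one knows $\psi=\sum_n\omega_{\xi_n}$ one can take $\xi=\xi_n$ for any nonzero term. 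A direct proof of it is essentially the whole content of the theorem, and your sketch stops exactly at the point you yourself call ``the main obstacle.'' The Zorn's-lemma exhaustion wrapped around the lemma is routine bookkeeping; the lemma is not.

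Concretely, the construction you outline (fix $\eta$, take a maximal orthogonal family of projections on which $\psi\leqslant\omega_\eta$, pass to the complementary projection $p$) yields only $\omega_{p\eta}(A)\leqslant\psi(A)$ for $A\in(p\cM p)^{+}$, and---as you correctly observe---$\psi(pAp)\leqslant\psi(A)$ fails for general positive $A$, so this does not globalize; worse, $p\eta$ may even vanish. The textbook proofs do not repair this by upgrading to global domination from below. Instead they use the corner estimate in the \emph{opposite} direction (a vector functional dominating $\psi$ on $(p\cM p)^{+}$ from above) together with the Cauchy--Schwarz inequality $\abs{\psi(Ap)}^2\leqslant\psi(\I)\,\psi(pA^\ast A p)\leqslant\Norm{Ap\tinyspace\xi}^2$ to conclude that $A\mapsto\psi(Ap)$ is strongly, hence $\sigma$-weakly, continuous; one then takes a maximal orthogonal family of such projections summing to $\I$ and invokes normality plus the norm-closedness of the predual to get $(ii)$, with the positivity of the trace-class representative handled as a separate step. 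Your alternative---extracting a globally dominated vector functional via a Radon--Nikodym comparison---would require the structure theory of normal representations of $\cM$, which is itself usually derived from this very proposition. As written, the plan is either incomplete or circular at its crux.
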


\begin{remark}
We recall that Riesz Representation Theorem describes continuous
functional on a Hilbert space has a vector representative: If $f$ is
a continuous functional on a Hilbert space $\cH$, then there is a
vector $\ket{u_f}\in\cH$ such that
$$
f(\ket{v}) = \iinner{u_f}{v},\quad
\forall \ket{v}\in\cH.
$$
By comparison with Riesz Representation Theorem, we have: For each
normal state $\omega\in\cM_\ast$, it has a representative $D_\omega$
in $\lin{\cH}$ as follows:
$$
\omega(M) = \Inner{D_\omega}{M}_{\bH\bS}.
$$
By the definition of normal element in $\cM_\ast$, there exist a
sequence of vectors $\set{\ket{\psi_n}}$ in $\cH$,
$\sum_n\norm{\ket{\psi_n}}^2<+\infty$, such that
$$
D_\omega = \sum_n \out{\psi_n}{\psi_n}.
$$
Furthermore, setting $\lambda_n\defeq \norm{\ket{\psi_n}}^2>0$ and
$\ket{\psi_n}=\sqrt{\lambda_n}\ket{\phi_n}$ with $\norm{\phi_n}=1$,
we have
$$
D_\omega = \sum_n \lambda_n\out{\phi_n}{\phi_n}.
$$
\end{remark}

\begin{prop}
Let $\cM$ be a von Neumann algebra acting on a Hilbert space $\cH$.
If $\cI$ be a $\sigma$-weakly closed two-sided ideal in $\cM$. Then
there exists a projection $E\in\cM\cap\cM'$ such that $\cI = E\cM
E$.
\end{prop}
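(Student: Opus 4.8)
The plan is to produce the desired central projection explicitly as $E\defeq[\cI\cH]$, the orthogonal projection onto the closed linear span of $\Set{A\ket{u}:A\in\cI,\ \ket{u}\in\cH}$ (the projection $[\cM\cH_0]$ introduced above, with $\cM$ replaced by $\cI$ and $\cH_0=\cH$), and then to verify successively that $E$ is central, that $E$ lies in $\cI$, and that $E$ acts as a two-sided unit for $\cI$.

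First I would check $E\in\cM\cap\cM'$ by a double-invariance argument. Because $\cI$ is a left ideal, $M(A\ket{u})=(MA)\ket{u}\in\cI\cH$ for all $M\in\cM$ and $A\in\cI$, so $\cI\cH$ is $\cM$-invariant and hence $EME=ME$; applying this with $M^\ast$ in place of $M$ and taking adjoints gives $EME=EM$, whence $ME=EM$ and $E\in\cM'$. Symmetrically, every $A\in\cI\subseteq\cM$ commutes with each $M'\in\cM'$, so $M'(A\ket{u})=A(M'\ket{u})\in\cI\cH$; thus $\cI\cH$ is also $\cM'$-invariant and the same computation yields $E\in(\cM')'=\cM''=\cM$. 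Therefore $E\in\cM\cap\cM'$ is central.

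Next I would settle the easy inclusion $\cI\subseteq E\cM E$: for $A\in\cI$ every vector $A\ket{u}$ lies in $\im E$, so $EA=A$, and since $E$ is central this forces $A=EA=AE=EAE\in E\cM E$.

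The reverse inclusion requires $E\in\cI$, and this is the step I expect to be the main obstacle, since it is where the $\sigma$-weak closedness of $\cI$ must be exploited. For a finite family $A_1,\dots,A_n\in\cI$ put $B=\sum_k A_kA_k^\ast$; then $B\in\cI$ (as $A_k\in\cI$, $A_k^\ast\in\cM$, and $\cI$ is a right ideal) and $B\geq0$, and its support projection $s(B)$ is exactly the projection onto $\overline{\sum_k\im A_k}$. Each continuous function $g$ with $g(0)=0$ can be approximated uniformly on the spectrum of $B$ by polynomials without constant term, so $g(B)$ is a norm limit of elements of $\cI$ and hence $g(B)\in\cI$; choosing $g_n(t)=nt/(1+nt)$, which increase to the indicator of $(0,\infty)$, gives $g_n(B)\to s(B)$ $\sigma$-weakly, so $s(B)\in\cI$. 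As $A_1,\dots,A_n$ range over all finite subsets of $\cI$, these projections $s(B)$ form an increasing net whose supremum is the projection onto $\overline{\cI\cH}=\im E$, namely $E$ itself; $\sigma$-weak closedness of $\cI$ then delivers $E\in\cI$. Finally, since $E\in\cI$ and $\cI$ is a two-sided ideal we get $E\cM E\subseteq\cI$, and combining this with the previous paragraph gives $\cI=E\cM E$ (which, $E$ being central, also equals $E\cM=\cM E$), as required.
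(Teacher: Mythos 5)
The paper states this proposition without giving any proof, so there is nothing to compare your argument against; judged on its own, your proof is correct and is essentially the standard argument (cf.\ Bratteli--Robinson). All three stages check out: the double-invariance argument showing $E=[\cI\cH]\in\cM\cap\cM'$ is fine (left-ideal invariance gives $E\in\cM'$, commutation of $\cI\subseteq\cM$ with $\cM'$ gives $E\in\cM''=\cM$); the inclusion $\cI\subseteq E\cM E$ follows as you say from $EA=A$ plus centrality; and the crucial step $E\in\cI$ is handled properly --- $B=\sum_k A_kA_k^\ast\in\cI$ uses only the right-ideal property (so you never need to assume $\cI$ is self-adjoint), $g(B)\in\cI$ for continuous $g$ with $g(0)=0$ uses that a $\sigma$-weakly closed ideal is norm closed, $g_n(B)\uparrow s(B)$ strongly (hence $\sigma$-weakly on the bounded set) by the spectral theorem, and the increasing net of support projections indexed by finite subsets of $\cI$ converges to its supremum $E$, which lands in $\cI$ by $\sigma$-weak closedness. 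The identification of $\im s(B)$ with $\overline{\sum_k\im A_k}$ via $\ker B=\bigcap_k\ker A_k^\ast$ is also correct. The only cosmetic remark is that your conclusion actually yields the sharper standard form $\cI=E\cM=\cM E$, of which $\cI=E\cM E$ is an immediate consequence since $E$ is central.
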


\subsubsection{$\sigma$-finite von Neumann algebras}

\begin{definition}
A von Neumann algebra $\cM$, acting on a Hilbert space $\cH$, is
\emph{$\sigma$-finite} if all collections of mutually orthogonal
projections have at most a countable cardinality.
\end{definition}

\begin{definition}
Let $\cM$ be a von Neumann algebra on a Hilbert space $\cH$. A
subset $\cH_0\subseteq\cH$ is \emph{cyclic} for $\cM$ if the set
$\Set{M\ket{u}: M\in\cM, \ket{u}\in\cH_0}$ is dense in $\cH$, i.e.,
$[\cM\cH_0]=\cH$. We say that $\cH_0$ is \emph{separating} for $\cM$
if for any $M\in\cM$, $M\ket{u}=0$ for all $\ket{u}\in\cH_0$ implies $M=0$.
\end{definition}

\begin{prop}
Let $\cM$ be a von Neumann algebra on a Hilbert space $\cH$ and
$\cH_0\subseteq\cH$ a subset. Then $\cH_0$ is cyclic for $\cM$ if
and only if $\cH_0$ is separating for $\cM'$.
\end{prop}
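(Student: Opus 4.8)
The plan is to reduce the entire equivalence to a single projection. Let $E = [\cM\cH_0]$ denote the orthogonal projection onto the closed subspace $[\cM\cH_0]$, using the overloaded notation of the definition. Everything will follow once I establish two facts: (a) that $E$ lies in the commutant $\cM'$, and (b) that $\I-E$ is precisely the ``universal annihilator'' of $\cH_0$ inside $\cM'$. The point to keep in mind is that cyclicity of $\cH_0$ for $\cM$ is by definition the condition $[\cM\cH_0]=\cH$, which is the same as $E=\I$.

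First I would verify the inclusion $E\in\cM'$. Since $\cM$ contains $\I$, every $\ket{u}\in\cH_0$ equals $\I\ket{u}\in\cM\cH_0$, so $\cH_0\subseteq[\cM\cH_0]$ and hence $(\I-E)\ket{u}=0$ for all $\ket{u}\in\cH_0$. Next, for any $N\in\cM$ and any generator $M\ket{u}$ with $M\in\cM$, $\ket{u}\in\cH_0$, we have $N(M\ket{u})=(NM)\ket{u}\in\cM\cH_0$ because $\cM$ is an algebra; by boundedness of $N$ and closedness of $[\cM\cH_0]$, this shows $N$ maps $[\cM\cH_0]$ into itself, i.e., $NE=ENE$. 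Applying the same reasoning to $N^\ast\in\cM$ and taking adjoints yields $EN=ENE$, so $NE=EN$ for every $N\in\cM$, that is, $E\in\cM'$. Consequently $\I-E\in\cM'$ as well.

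With these facts in hand both implications are short. For the forward direction, assume $\cH_0$ is cyclic, so $E=\I$. If $M'\in\cM'$ satisfies $M'\ket{u}=0$ for all $\ket{u}\in\cH_0$, then for every $M\in\cM$ and $\ket{u}\in\cH_0$ the commutation relation gives $M'M\ket{u}=MM'\ket{u}=0$; thus $M'$ annihilates the set $\cM\cH_0$ and, being bounded, vanishes on its closure $\cH=[\cM\cH_0]$, so $M'=0$. This is exactly the separating property of $\cH_0$ for $\cM'$. For the converse, assume $\cH_0$ is separating for $\cM'$. The operator $\I-E$ lies in $\cM'$ and annihilates every $\ket{u}\in\cH_0$ by the observation above, so the separating hypothesis forces $\I-E=0$, i.e., $E=\I$ and $[\cM\cH_0]=\cH$; hence $\cH_0$ is cyclic for $\cM$.

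I expect the only genuine obstacle to be step (a), the claim $E\in\cM'$: it is the one place where the self-adjointness of $\cM$ (that $N\in\cM$ implies $N^\ast\in\cM$) must be invoked, guaranteeing that the orthogonal complement $[\cM\cH_0]^\perp$ is also $\cM$-invariant so that the projection genuinely commutes with $\cM$ rather than merely being sub-invariant. Everything else is a direct consequence of the commutation relation combined with the density statement.
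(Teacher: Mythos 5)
Your proof is correct and complete: establishing that the projection $E=[\cM\cH_0]$ lies in $\cM'$ (using self-adjointness of $\cM$ to get genuine commutation rather than mere sub-invariance) and then reading off both implications from $E=\I$ versus $\I-E$ annihilating $\cH_0$ is exactly the standard argument (cf.\ Bratteli--Robinson, Prop.\ 2.5.3). The paper states this proposition without any proof, so there is nothing to compare against; your write-up fills that gap correctly, and the one step you flag as the genuine obstacle --- the invariance of $[\cM\cH_0]^\perp$ under $\cM$ --- is indeed the crux and is handled properly.
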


\begin{definition}
Let $\cM$ be a von Neumann algebra on a Hilbert space $\cH$. A
vector $\ket{\Omega}$ is called \emph{cyclic} for $\cM$ if the set
$\Set{M\ket{\Omega}: M\in\cM}$ is dense in $\cH$, i.e., $\Br{\cM\ket{\Omega}} =
\cH$. We say that $\ket{\Omega}\in\cH$ is \emph{separating} for $\cM$ if
for any $M\in\cM$, $M\ket{\Omega}=0$ implies $M=0$.
\end{definition}

\begin{prop}
Let $\cM$ be a von Neumann algebra on a Hilbert space $\cH$ and
$\ket{\Omega}\in\cH$. Then $\ket{\Omega}$ is cyclic for $\cM$ if and only if
$\ket{\Omega}$ is separating for $\cM'$.
\end{prop}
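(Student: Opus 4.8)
The plan is to observe first that this proposition is exactly the single-vector case $\cH_0 = \{\ket{\Omega}\}$ of the preceding proposition, so in principle one may simply invoke that result. For a self-contained argument, however, I would work directly with the orthogonal projection onto the cyclic subspace and exploit the fact that $\cM$ is a $\ast$-algebra.

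Concretely, let $P \defeq \Br{\cM\ket{\Omega}}$ denote the orthogonal projection onto the closure of $\cM\ket{\Omega}$. The first key step is to verify that $P \in \cM'$. The subspace $\cM\ket{\Omega}$ is invariant under $\cM$, since $M(N\ket{\Omega}) = (MN)\ket{\Omega}$ for all $M,N \in \cM$; by continuity its closure is invariant as well. Because $\cM$ is self-adjoint, the adjoint $M^\ast$ lies in $\cM$ for every $M \in \cM$, so the closed subspace is invariant under both $M$ and $M^\ast$ and hence reduces every element of $\cM$. This gives $P \in \cM'$. Since $\I \in \cM$ we have $\ket{\Omega} \in \cM\ket{\Omega}$, whence $P\ket{\Omega} = \ket{\Omega}$ and therefore $(\I - P)\ket{\Omega} = 0$ with $\I - P \in \cM'$.

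With this in hand both implications are short. For the forward direction, suppose $\ket{\Omega}$ is cyclic, so $\cM\ket{\Omega}$ is dense and $P = \I$. If $M' \in \cM'$ satisfies $M'\ket{\Omega} = 0$, then for every $M \in \cM$ we have $M'M\ket{\Omega} = MM'\ket{\Omega} = 0$, so $M'$ annihilates the dense set $\cM\ket{\Omega}$; continuity of $M'$ forces $M' = 0$, i.e.\ $\ket{\Omega}$ is separating for $\cM'$. For the converse, suppose $\ket{\Omega}$ is separating for $\cM'$. Applying this to the element $\I - P \in \cM'$, which kills $\ket{\Omega}$, yields $\I - P = 0$, i.e.\ $P = \I$; hence $\cM\ket{\Omega}$ is dense and $\ket{\Omega}$ is cyclic for $\cM$.

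The only step requiring genuine care is the claim $P \in \cM'$, and the point to watch there is that invariance of $\cM\ket{\Omega}$ under $\cM$ alone only yields $MP = PMP$; upgrading this to $MP = PM$ is precisely where self-adjointness of $\cM$ is used, via the standard fact that a closed subspace reduces an operator exactly when it is invariant under both the operator and its adjoint. Everything else is a one-line consequence of the commutation $M'M = MM'$ together with boundedness of the operators involved.
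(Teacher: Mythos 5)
Your proof is correct. The paper itself states this proposition without any proof (as it does for most of the standard facts it reviews), so there is nothing to compare against; your argument is the standard textbook one and fills that gap cleanly. Both halves check out: the forward direction uses only density of $\cM\ket{\Omega}$ and continuity of $M'$, and the converse correctly identifies the one nontrivial point, namely that the projection $P=\Br{\cM\ket{\Omega}}$ lies in $\cM'$ because the closed subspace is invariant under each $M\in\cM$ \emph{and} its adjoint (this is where the $\ast$-algebra hypothesis enters), after which $(\I-P)\ket{\Omega}=0$ and the separating property force $P=\I$. The appeal to $\I\in\cM$ is justified under the paper's definition, since $\cM=\cM''$ and every commutant contains the identity.
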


\begin{definition}
A state $\omega$ on a von Neumann algebra $\cM$ is \emph{faithful}
if $\omega(M)>0$ for all nonzero $M\in\cM^+$.
\end{definition}

\begin{prop}
Let $\cM$ be a von Neumann algebra on a Hilbert space $\cH$. The the
following four conditions are equivalent:
\begin{enumerate}[(i)]
\item $\cM$ is $\sigma$-finite;
\item there exists a countable subset of $\cH$ which is separating
for $\cM$;
\item there exists a faithful normal state on $\cM$;
\item $\cM$ is isomorphic with a von Neumann algebra $\pi(\cM)$
which admits a separating and cyclic vector.
\end{enumerate}
\end{prop}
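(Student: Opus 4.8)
The plan is to prove the cycle of implications (i) $\Rightarrow$ (ii) $\Rightarrow$ (iii) $\Rightarrow$ (iv) $\Rightarrow$ (i). For (i) $\Rightarrow$ (ii), I would use Zorn's lemma to select a maximal family $\Set{\ket{\xi_i} : i\in I}$ of unit vectors whose $\cM'$-cyclic subspaces $[\cM'\ket{\xi_i}]$ are mutually orthogonal. Each such subspace is invariant under the $\ast$-algebra $\cM'$, so its orthogonal projection commutes with $\cM'$ and hence lies in $\cM'' = \cM$; these projections are mutually orthogonal elements of $\cM$, and $\sigma$-finiteness forces $I$ to be countable. Maximality gives $\bigoplus_{i\in I}[\cM'\ket{\xi_i}] = \cH$, so $\Set{\ket{\xi_i} : i\in I}$ is cyclic for $\cM'$, which by the proposition relating cyclic and separating subsets is exactly the statement that it is separating for $\cM$.

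For (ii) $\Rightarrow$ (iii), starting from a countable separating family $\Set{\ket{\xi_n}}$ for $\cM$ I would rescale the vectors so that $\sum_n \norm{\ket{\xi_n}}^2 = 1$ (rescaling preserves the separating property) and form the density operator $D_\omega = \sum_n \out{\xi_n}{\xi_n}$, a positive trace-class operator of unit trace. Then $\omega(M) = \Tr{D_\omega M}$ is a normal state, and it is faithful because $\omega(M^\ast M) = \sum_n \norm{M\ket{\xi_n}}^2$, so $\omega(M^\ast M) = 0$ forces $M\ket{\xi_n} = 0$ for all $n$ and the separating property yields $M = 0$. For (iii) $\Rightarrow$ (iv) I would feed the faithful normal state $\omega$ into the GNS construction, producing a representation $\pi_\omega$ on a Hilbert space $\cH_\omega$ with cyclic vector $\ket{\Omega_\omega}$ satisfying $\omega(M) = \Innerm{\Omega_\omega}{\pi_\omega(M)}{\Omega_\omega}$. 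Normality of $\omega$ makes $\pi_\omega$ $\sigma$-weakly continuous, so $\pi_\omega(\cM)$ is again a von Neumann algebra; faithfulness of $\omega$ makes $\pi_\omega$ injective, so $\cM \cong \pi_\omega(\cM)$. The vector $\ket{\Omega_\omega}$ is cyclic by construction and separating since $\norm{\pi_\omega(M)\ket{\Omega_\omega}}^2 = \omega(M^\ast M) = 0$ implies $M = 0$.

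For the closing implication (iv) $\Rightarrow$ (i), since $\sigma$-finiteness is invariant under $\ast$-isomorphism it suffices to verify it for an algebra $\cN = \pi(\cM)$ carrying a cyclic and separating vector $\ket{\Omega}$. Given any family $\Set{E_i : i\in I}$ of mutually orthogonal projections in $\cN$, orthogonality yields $\sum_i \norm{E_i\ket{\Omega}}^2 \leq \norm{\ket{\Omega}}^2 < +\infty$, while the separating property forces $E_i\ket{\Omega}\neq 0$ whenever $E_i\neq 0$; a convergent series of strictly positive numbers has at most countably many nonzero terms, so only countably many $E_i$ are nonzero and $\cN$ is $\sigma$-finite. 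I expect the main obstacle to be the two structural steps: the Zorn's-lemma selection in (i) $\Rightarrow$ (ii), where one must correctly check that the cyclic projections land in $\cM$ rather than in $\cM'$, and the appeal to the GNS machinery in (iii) $\Rightarrow$ (iv), where showing that a normal state induces a $\sigma$-weakly continuous (hence von Neumann algebra valued) representation is the technically deepest point.
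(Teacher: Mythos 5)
The paper states this proposition without proof (it is presented as background in this review), so there is no in-paper argument to compare against. Your cycle (i) $\Rightarrow$ (ii) $\Rightarrow$ (iii) $\Rightarrow$ (iv) $\Rightarrow$ (i) is the standard textbook proof and each step is correct: the Zorn's-lemma selection does place the projections $[\cM'\ket{\xi_i}]$ in $\cM''=\cM$, maximality does exhaust $\cH$ because the orthogonal complement of $\bigoplus_i[\cM'\ket{\xi_i}]$ is $\cM'$-invariant, and the final implication correctly uses only the separating property of $\ket{\Omega}$. The one point you flag as the deepest --- that a faithful normal state yields, via GNS, a normal $\ast$-isomorphism onto a von Neumann algebra --- is indeed the only place where a nontrivial external fact is invoked (the range of a normal $\ast$-homomorphism is $\sigma$-weakly closed, e.g.\ via $\sigma$-weak compactness of the unit ball), but your identification of it as the technical crux is accurate and the remainder of the argument is complete.
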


\subsubsection{Tomita-Takesaki modular theory}

\emph{Tomita-Takesaki Modular Theory} has been one of the most
exciting subjects for operator algebras and for its applications to
mathematical physics. We will give here a short introduction to this
theory and state some of its main results.

If von Neumann algebra $\cM$ is a $\sigma$-finite, we may assume
that $\cM$ has a separating and cyclic vector $\ket{\Omega}$. In
Tomita-Takesaki modular theory, one studies systematically the
relation of a von Neumann algebra $\cM$ and its commutant $\cM'$ in
the case where both algebras have a common cyclic vector $\ket{\Omega}$.
The mapping
$$
M\in\cM \longmapsto M\ket{\Omega}\in\cH,
$$
then establishes a one-to-one linear correspondence between $\cM$
and a dense subspace $\cM\ket{\Omega}$ of $\cH$. This correspondence may
be used to transfer algebraic operations on $\cM$ to operations on
$\cM\ket{\Omega}$.

The two anti-linear operators $S_0$ and $F_0$, given by
\begin{eqnarray*}
S_0 M\ket{\Omega} &=& M^\ast\ket{\Omega},\quad \forall M\in\cM,\\
F_0 M'\ket{\Omega} &=& {M'}^\ast\ket{\Omega},\quad \forall M'\in\cM',
\end{eqnarray*}
are both well-defined on the dense domains $D(S_0)=\cM\ket{\Omega}$ and
$D(F_0)=\cM'\ket{\Omega}$.

\begin{prop}
$S_0$ and $F_0$ are closable. And
$$
S_0^\ast = \overline{F}_0,\quad F_0^\ast = \overline{S}_0,
$$
where the bar denotes the closure.
\end{prop}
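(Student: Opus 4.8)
The plan is to route the entire proof through a single bilinear identity relating $S_0$ and $F_0$, and then to upgrade the resulting operator inclusions to equalities by invoking $\cM=\cM''$. First I would dispose of well-definedness and density. Since $\ket{\Omega}$ is separating for $\cM$, the map $M\mapsto M\ket{\Omega}$ is injective, so $S_0$ is unambiguously defined on $\cM\ket{\Omega}$; since $\ket{\Omega}$ is cyclic for $\cM$ it is separating for $\cM'$, so $F_0$ is well-defined on $\cM'\ket{\Omega}$. Both domains are dense: $\cM\ket{\Omega}$ because $\ket{\Omega}$ is cyclic for $\cM$, and $\cM'\ket{\Omega}$ because $\ket{\Omega}$ is cyclic for $\cM'$ (equivalently, separating for $\cM$). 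The heart of the forward direction is the computation, for $M\in\cM$ and $M'\in\cM'$,
\[
\iinner{M'\Omega}{S_0 M\Omega}=\iinner{M'\Omega}{M^\ast\Omega}=\iinner{\Omega}{(M')^\ast M^\ast\Omega}=\iinner{\Omega}{M^\ast (M')^\ast\Omega}=\iinner{M\Omega}{(M')^\ast\Omega}=\iinner{M\Omega}{F_0 M'\Omega},
\]
where the central equality uses that $(M')^\ast\in\cM'$ commutes with $M^\ast\in\cM$.

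With the adjoint convention $\iinner{\eta}{S_0\xi}=\iinner{\xi}{S_0^\ast\eta}$ for an antilinear operator, setting $\eta=M'\Omega$ and $\xi=M\Omega$ the identity reads precisely $F_0\subseteq S_0^\ast$; interchanging the symmetric roles of $\cM$ and $\cM'$ gives $S_0\subseteq F_0^\ast$. Because $F_0$ is densely defined, its domain sits inside the domain of $S_0^\ast$, so $S_0^\ast$ is densely defined and $S_0$ is therefore closable; symmetrically $F_0$ is closable. Since every adjoint is closed, $\overline{F_0}\subseteq S_0^\ast$ and $\overline{S_0}\subseteq F_0^\ast$ follow at once.

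The only genuine difficulty is the reverse inclusion $S_0^\ast\subseteq\overline{F_0}$. I expect this to be the main obstacle: it cannot be obtained by formal manipulation, because taking adjoints merely permutes the two inclusions already in hand, and one must actually use that $\cM'$ is a von Neumann algebra (i.e. $\cM=\cM''$). Given $\eta\in D(S_0^\ast)$ with $\xi=S_0^\ast\eta$, so that $\iinner{\eta}{M^\ast\Omega}=\iinner{M\Omega}{\xi}$ for all $M\in\cM$, I would define the densely defined operators $B(M\Omega)=M\eta$ and $\tilde B(M\Omega)=M\xi$ on $\cM\ket{\Omega}$ (well-defined by the separating property). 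Then $B$ commutes with left multiplication by $\cM$, so $B$ is affiliated with $\cM'$, while the displayed relation for $\eta,\xi$ shows $\tilde B\subseteq B^\ast$, so $B$ is preclosed. Truncating $B$ by bounded Borel functions of $\abs{B}$—whose spectral projections lie in $\cM'$ exactly because $\cM'$ is a von Neumann algebra—produces a net $C_n\in\cM'$ with $C_n\Omega\to B\Omega=\eta$ and $C_n^\ast\Omega\to\tilde B\Omega=\xi$.

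Hence $(\eta,\xi)$ lies in the graph $\Gamma(\overline{F_0})$, which yields $S_0^\ast\subseteq\overline{F_0}$ and therefore $S_0^\ast=\overline{F_0}$; the companion identity $F_0^\ast=\overline{S_0}$ then follows by the same argument with $\cM$ and $\cM'$ exchanged. The delicate point throughout is the affiliation-and-truncation step: one must check carefully that $B$ is honestly affiliated with $\cM'$ and that the spectral truncations converge on the cyclic vector $\ket{\Omega}$ to the correct limits $\eta$ and $\xi$.
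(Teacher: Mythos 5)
The paper states this proposition without proof (it is a survey and simply quotes the result), so there is no in-text argument to compare against; judged on its own, your proof is correct and is essentially the classical Bratteli--Robinson/Takesaki argument. The easy half is handled exactly right: the identity $\iinner{M'\Omega}{S_0M\Omega}=\iinner{M\Omega}{F_0M'\Omega}$, which hinges only on $[(M')^\ast,M^\ast]=0$, gives $F_0\subseteq S_0^\ast$ and $S_0\subseteq F_0^\ast$, whence closability of both operators and the inclusions $\overline{F}_0\subseteq S_0^\ast$, $\overline{S}_0\subseteq F_0^\ast$. You are also right that the reverse inclusion is the only substantive step, and your mechanism for it is sound: for $\eta\in D(S_0^\ast)$, $\xi=S_0^\ast\eta$, the operators $B(M\ket{\Omega})=M\eta$ and $\tilde B(M\ket{\Omega})=M\xi$ satisfy $\tilde B\subseteq B^\ast$ (this follows from applying the defining relation of $S_0^\ast$ to $M^\ast N$), so $B$ is closable, $\overline{B}$ commutes with every unitary of $\cM$ and is therefore affiliated with $\cM'$, and the truncations $C_n=V\abs{\overline{B}}E_n$ built from the polar decomposition lie in $\cM'$ with $C_n\ket{\Omega}\to\eta$ and $C_n^\ast\ket{\Omega}=E_n\overline{B}^{\,\ast}\ket{\Omega}\to\xi$, putting $(\eta,\xi)$ in the closed graph of $\overline{F}_0$. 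The one place to be explicit when writing this up is the convergence $C_n^\ast\ket{\Omega}\to\xi$: it uses that $\ket{\Omega}\in D(\tilde B)\subseteq D(\overline{B}^{\,\ast})$ with $\overline{B}^{\,\ast}\ket{\Omega}=\xi$ and that the spectral projections $E_n$ increase strongly to $\I$; with that spelled out the argument is complete, and the companion identity $F_0^\ast=\overline{S}_0$ follows by exchanging $\cM$ and $\cM'$ as you say.
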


\begin{definition}
Define $S$ and $F$ as the closures of $S_0$ and $F_0$, respectively,
i.e.,
$$
S = \overline{S}_0,\quad F = \overline{F}_0.
$$
Let $\Delta$ be the unique, positive, self-adjoint operator and $J$
the unique anti-unitary operator occurring in the \emph{polar
decomposition}
$$
S = J\Delta^{\frac12}
$$
of $S$. $\Delta$ is called the \emph{modular operator associated
with the pair} $\Set{\cM,\ket{\Omega}}$ and $J$ is called the
\emph{modular conjugation}.
\end{definition}

\begin{prop}
The following relations are valid:
$$
\left\{\begin{array}{c}
         \Delta = FS \\
         \Delta^{-1}=SF
       \end{array}
\right., \quad \left\{\begin{array}{c}
         S = J\Delta^{\frac12} \\
         F = J\Delta^{-\frac12}
       \end{array}
\right., \quad \left\{\begin{array}{c}
         J^\ast = J \\
         J^2 =\I
       \end{array}
\right., \quad \Delta^{-\frac12} = J\Delta^{\frac12}J.
$$
\end{prop}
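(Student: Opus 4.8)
The plan is to reduce everything to three ingredients: the adjoint identities $S^\ast = F$ and $F^\ast = S$ supplied by the preceding proposition, the involutivity $S^2 = \I$, and the uniqueness of the polar decomposition $S = J\Delta^{\frac12}$. First I would record the adjoint identities: the adjoint of a closable operator coincides with the adjoint of its closure, so from $S_0^\ast = \overline{F}_0 = F$ and $F_0^\ast = \overline{S}_0 = S$ we get $S^\ast = F$ and $F^\ast = S$. Next I would verify $S^2 = \I$ by computing on the core $\cM\ket{\Omega}$: for $M\in\cM$ one has $S_0 S_0 M\ket{\Omega} = S_0 M^\ast\ket{\Omega} = M\ket{\Omega}$, so $S_0^2$ is the identity on a dense domain and, after passing to the closure, $S^2 = \I$, i.e. $S^{-1} = S$. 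Since $\ket{\Omega}$ is cyclic and separating, $S$ is injective with dense range, so the operator $J$ occurring in $S = J\Delta^{\frac12}$ is a genuine antiunitary and $\Delta$ is invertible; applying the same remarks to $F = S^\ast$ gives $F^{-1} = F$.

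The two identities $\Delta = FS$ and $\Delta^{-1} = SF$ then follow directly. By definition of the polar decomposition, $\Delta^{\frac12} = \abs{S} = (S^\ast S)^{\frac12}$, hence $\Delta = S^\ast S = FS$. For the inverse I would use $\Delta^{-1} = (S^\ast S)^{-1} = S^{-1}(S^\ast)^{-1}$ together with $S^{-1} = S$ and $(S^\ast)^{-1} = (S^{-1})^\ast = S^\ast = F$, which yields $\Delta^{-1} = SF$.

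The heart of the argument is the triple $J^\ast = J$, $J^2 = \I$, and $\Delta^{-\frac12} = J\Delta^{\frac12}J$, which I would obtain from uniqueness of the polar decomposition applied to $S^{-1} = S$. Writing $S^{-1} = \Delta^{-\frac12}J^{-1} = \Delta^{-\frac12}J^\ast$ and inserting $J^\ast J = \I$, I factor $S^{-1} = J^\ast\,(J\Delta^{-\frac12}J^\ast)$, where $J^\ast$ is antiunitary and $J\Delta^{-\frac12}J^\ast$ is positive self-adjoint (conjugation of a positive operator by an antiunitary remains positive). Since $S^{-1} = S = J\Delta^{\frac12}$ and the antiunitary and positive factors of a polar decomposition are unique, matching the factors forces $J^\ast = J$ --- whence $J^2 = JJ^\ast = \I$ --- and $\Delta^{\frac12} = J\Delta^{-\frac12}J$, which inverts to $\Delta^{-\frac12} = J\Delta^{\frac12}J$. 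The remaining identity $F = J\Delta^{-\frac12}$ is then cosmetic: $F = S^\ast = \Delta^{\frac12}J^\ast = \Delta^{\frac12}J$, and $\Delta^{\frac12}J = J\Delta^{-\frac12}$ by the relation just proved together with $J^2 = \I$.

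I expect the main obstacle to be the functional-analytic bookkeeping for \emph{unbounded antilinear} operators rather than the algebra: justifying that $S_0^2 = \mathrm{id}$ on the core propagates to $S^2 = \I$ on the closure, checking that $\Delta^{\frac12}$ and $J$ have the claimed mapping properties on the relevant domains, and above all invoking uniqueness of the polar decomposition in the antilinear setting (including that $J$ is a true antiunitary, which rests on $\ket{\Omega}$ being simultaneously cyclic and separating, so that $\ker S = 0$ and $\overline{\im S} = \cH$). Once these domain matters are settled, each displayed relation is a short consequence of the three facts isolated at the start.
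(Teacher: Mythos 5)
The paper states this proposition without proof --- it is quoted as a standard ingredient of Tomita--Takesaki theory --- so there is no in-paper argument to compare against. Your proposal is the standard textbook derivation (essentially Bratteli--Robinson, Prop.~2.5.11) and it is correct: the reduction to the three facts $S^\ast=F$, $S^{-1}=S$, and uniqueness of the polar decomposition is exactly the right skeleton, and each displayed identity does follow as you describe. Two small points deserve the explicit care you already flag. First, ``$S_0^2=\mathrm{id}$ on a dense domain, hence $S^2=\I$'' needs the closedness of $S$: for $\xi\in D(S)$ pick $M_n\ket{\Omega}\to\xi$ with $M_n^\ast\ket{\Omega}\to S\xi$; since $S(M_n^\ast\ket{\Omega})=M_n\ket{\Omega}\to\xi$, closedness gives $S\xi\in D(S)$ and $S(S\xi)=\xi$, so $S$ maps $D(S)$ bijectively onto itself and $S^{-1}=S$ (and likewise for $F$). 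Second, the uniqueness of the polar decomposition you invoke is uniqueness for a closed, densely defined, injective operator with dense range, in which case the partial isometry is a genuine (anti)unitary; injectivity and dense range here come from $S^{-1}=S$ rather than directly from cyclicity/separation, though of course those are what make $S_0$ well defined and densely defined in the first place. With those details filled in, the argument is complete.
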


\begin{thrm}[Tomita-Takesaki Theorem]
Let $\cM$ be a von Neumann algebra with cyclic and separating vector
$\ket{\Omega}$, and let $\Delta$ be the associated modular operator and
$J$ the associated modular conjugation. It follows that
$$
\left\{\begin{array}{ccl}
         J\cM J &=& \cM', \\
         \Delta^{\mathrm{i}t}\cM\Delta^{-\mathrm{i}t}&=&\cM,\forall
         t\in\real.
       \end{array}
\right.
$$
\end{thrm}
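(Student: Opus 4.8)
The plan is to deduce the theorem from three ingredients: the polar-decomposition relations already recorded, the two ``boundary identities'' they imply, and a single analytic-continuation estimate. The algebraic skeleton is short; the analytic bookkeeping is where the real work lies.

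First I would extract the consequences of $S=J\Delta^{\frac12}$, $F=J\Delta^{-\frac12}$, $\Delta^{-\frac12}=J\Delta^{\frac12}J$, and $J^2=\I$. Since $SM\ket{\Omega}=M^\ast\ket{\Omega}$ and $FM'\ket{\Omega}=M'^\ast\ket{\Omega}$, these yield the two boundary identities $\Delta^{\frac12}M\ket{\Omega}=JM^\ast\ket{\Omega}$ for $M\in\cM$ and $\Delta^{-\frac12}M'\ket{\Omega}=JM'^\ast\ket{\Omega}$ for $M'\in\cM'$. From $J\Delta^{\frac12}J=\Delta^{-\frac12}$ and the spectral calculus one gets $J\Delta J=\Delta^{-1}$, and since $J$ is anti-linear, $J\Delta^{\mathrm{i}t}J=\Delta^{\mathrm{i}t}$ for all $t\in\real$; in particular $J$ commutes with the modular flow. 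Next I would make the symmetry reduction: the hypotheses are symmetric under $\cM\leftrightarrow\cM'$, $S\leftrightarrow F$, $\Delta\leftrightarrow\Delta^{-1}$ with $J$ fixed, so it suffices to prove the two inclusions $J\cM J\subseteq\cM'$ and $\Delta^{\mathrm{i}t}\cM\Delta^{-\mathrm{i}t}\subseteq\cM$. Applying them to $\cM'$ gives $J\cM'J\subseteq\cM$ and $\Delta^{\mathrm{i}t}\cM'\Delta^{-\mathrm{i}t}\subseteq\cM'$, whence $\cM'=J(J\cM'J)J\subseteq J\cM J\subseteq\cM'$ forces $J\cM J=\cM'$, while the two flow inclusions combine to $\Delta^{\mathrm{i}t}\cM\Delta^{-\mathrm{i}t}=\cM$.

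The core is establishing those inclusions. To show $\Delta^{\mathrm{i}t}M\Delta^{-\mathrm{i}t}\in\cM=\cM''$ I would verify commutation with an arbitrary $M'\in\cM'$. Because $\Delta$ is unbounded I would first pass to the $\Delta$-analytic smeared elements $M(\varphi)=\int\varphi(t)\,\Delta^{\mathrm{i}t}M\Delta^{-\mathrm{i}t}\,dt$ with $\widehat\varphi$ compactly supported, so that $z\mapsto\Delta^{z}M(\varphi)\Delta^{-z}$ is entire and the relevant domains stay dense and invariant. Then, for vectors $\xi,\eta$ in the dense domain $\cM'\ket{\Omega}$, I would form the scalar function $f(z)=\Inner{\xi}{\,[\,\Delta^{z}M(\varphi)\Delta^{-z},\,M'\,]\,\eta}$ on the vertical strip $-\tfrac12\le\Re z\le\tfrac12$, whose central line $\Re z=0$ reproduces the flowed commutator. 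The boundary identities above (the $S$-relation entering at $\Re z=\tfrac12$, the $F$-relation at $\Re z=-\tfrac12$, matched through $J\Delta J=\Delta^{-1}$) should force $f$ to vanish on both edges, and since $f$ is bounded and holomorphic in between, the three-line theorem (Phragm\'en--Lindel\"of) gives $f\equiv0$, in particular on $\Re z=0$. This delivers the commutation, hence $\Delta^{\mathrm{i}t}M(\varphi)\Delta^{-\mathrm{i}t}\in\cM$; the inclusion $J\cM J\subseteq\cM'$ reads off the same $\Re z=\tfrac12$ boundary data.

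The hard part is precisely this analytic step, not the algebra. One must control $\Delta^{z}$ on the strip, prove that the chosen vectors actually lie in the required domains and that $f$ is genuinely holomorphic and uniformly bounded up to the boundary, and then verify the vanishing of $f$ on the two edges, which is where the $S$- and $F$-relations must be made to enter symmetrically. Securing the uniform bound that feeds Phragm\'en--Lindel\"of---ultimately a reflection of the \emph{KMS symmetry} encoded in the boundary identities---is the delicate heart of the argument. A final routine point is to let $\varphi$ approach a delta to pass the conclusion from the analytic elements $M(\varphi)$ back to $M$ itself in the strong topology.
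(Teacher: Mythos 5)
The paper does not prove this theorem at all: it is a review article, and the Tomita--Takesaki Theorem is stated there as a classical result with no argument supplied. So there is nothing in the source to compare your proposal against line by line; it has to stand on its own. Your overall architecture is the standard one (it is essentially the shape of the proof in Bratteli--Robinson): reduce by the $\cM\leftrightarrow\cM'$, $\Delta\leftrightarrow\Delta^{-1}$ symmetry to two inclusions, pass to smeared analytic elements $M(\varphi)$, and run a three-line argument on a strip. The algebraic preliminaries you record ($J\Delta J=\Delta^{-1}$, hence $J\Delta^{\mathrm{i}t}J=\Delta^{\mathrm{i}t}$ by anti-linearity, and the reduction of the two equalities to two inclusions) are correct.

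The genuine gap is exactly where you place the weight: the claim that the boundary identities ``force $f$ to vanish on both edges'' of the strip. The relations $\Delta^{1/2}M\ket{\Omega}=JM^{\ast}\ket{\Omega}$ and $\Delta^{-1/2}M'\ket{\Omega}=J{M'}^{\ast}\ket{\Omega}$ are statements about the action of $S$ and $F$ on the \emph{single} vector $\ket{\Omega}$; they say nothing directly about the operators $\Delta^{\pm 1/2}M(\varphi)\Delta^{\mp 1/2}$ acting on general vectors $\xi,\eta\in\cM'\ket{\Omega}$, which is what evaluating $f$ on the lines $\Re z=\pm\tfrac12$ requires. Showing that the edge values of $f$ vanish is logically equivalent to showing that $\Delta^{1/2}M(\varphi)\Delta^{-1/2}$ lands (weakly) in $J\cM' J$ or similar --- i.e., it is essentially the theorem itself, not an input to it. Every actual proof inserts substantial extra machinery precisely here: Tomita's original left Hilbert algebra analysis, or van Daele's resolvent trick producing, for each $M'\in\cM'$ and $\lambda>0$, an element of $\cM$ whose value on $\ket{\Omega}$ is $(\Delta+\lambda)^{-1}M'\ket{\Omega}$, after which the Phragm\'en--Lindel\"of step closes the loop. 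Without that ingredient your strip function has no controlled boundary values and the argument does not get off the ground. As written, the proposal is a correct map of where the difficulty lives, but it does not contain the idea that resolves it.
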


\subsubsection{Self-dual cones and standard forms}

\begin{definition}
The \emph{natural positive cone} $\cP$ associated with the pair
$(\cM,\ket{\Omega})$ is defined as the closure of the set
$$
\Set{Mj(M)\ket{\Omega}: M\in\cM},
$$
where $j: \cM\longmapsto \cM'$ is the anti-linear $\ast$-isomorphism
defined by
$$
j(M) \defeq JMJ, \quad \forall M\in\cM.
$$
\end{definition}

\begin{prop}
The closed subset $\cP\subseteq\cH$ has the following properties:
\begin{enumerate}[(i)]
\item \begin{eqnarray*}
\cP &=& \Br{\Delta^{\frac14}\cM^+\ket{\Omega}} =
\Br{\Delta^{-\frac14}{\cM'}^+\ket{\Omega}}\\
&=&\Br{\Delta^{\frac14}\Br{\cM^+\ket{\Omega}}} =
\Br{\Delta^{-\frac14}\Br{{\cM'}^+\ket{\Omega}}}
\end{eqnarray*}
and hence $\cP$ is a convex cone;
\item $\Delta^{\mathrm{i}t}\cP = \cP$ for all $t\in\real$;
\item if $f$ is a positive-definite function, then
$f(\log\Delta)\cP\subseteq\cP$;
\item if $\ket{\xi}\in\cP$, then $J\ket{\xi} = \ket{\xi}$;
\item if $M\in\cM$, then $Mj(M)\cP\subseteq\cP$.
\end{enumerate}
\end{prop}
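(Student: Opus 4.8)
The plan is to reduce everything to the single identity in part~(i), namely $\cP = \Br{\Delta^{1/4}\cM^+\ket\Omega}$, from which (ii)--(v) follow quickly. First I would record the elementary facts about the pair $(\cM,\ket\Omega)$ that drive the whole argument: since $S_0\I\ket\Omega = \ket\Omega$ one gets $S\ket\Omega = F\ket\Omega = \ket\Omega$, and feeding this through $S = J\Delta^{1/2}$, $F = J\Delta^{-1/2}$ together with $J^2 = \I$ yields $J\ket\Omega = \ket\Omega$ and $\Delta\ket\Omega = \ket\Omega$, hence $\Delta^{s}\ket\Omega = \ket\Omega$ and $\Delta^{\mathrm i t}\ket\Omega = \ket\Omega$. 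I would also extract the two relations used repeatedly: $J\Delta^{s} = \Delta^{-s}J$ (from $\Delta^{-1/2} = J\Delta^{1/2}J$) and $\Delta^{1/2}M^*\ket\Omega = JM\ket\Omega$ for $M\in\cM$ (a restatement of $SM\ket\Omega = M^*\ket\Omega$). Finally I would invoke the Tomita--Takesaki Theorem in the form that the modular group $\sigma_t(M) := \Delta^{\mathrm i t}M\Delta^{-\mathrm i t}$ maps $\cM$ onto $\cM$, and that $j(M) = JMJ$ is an anti-linear $\ast$-isomorphism of $\cM$ onto $\cM'$ satisfying $j(MN) = j(M)j(N)$.

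The core computation is the heart of (i) and, I expect, the main obstacle. For $M$ analytic for $\sigma_t$ I would compute
\[
Mj(M)\ket\Omega = MJMJ\ket\Omega = MJM\ket\Omega = M\Delta^{1/2}M^*\ket\Omega,
\]
using $J\ket\Omega = \ket\Omega$ and $JM\ket\Omega = \Delta^{1/2}M^*\ket\Omega$. Inserting $\Delta^{\pm1/4}$ and $\Delta^{-1/4}\ket\Omega=\ket\Omega$, this rearranges to $\Delta^{1/4}\sigma_{\mathrm i/4}(M)\sigma_{\mathrm i/4}(M)^*\ket\Omega$, where $\sigma_{\mathrm i/4}(M) = \Delta^{-1/4}M\Delta^{1/4}$ is the analytic continuation; thus $Mj(M)\ket\Omega = \Delta^{1/4}A\ket\Omega$ with $A = \sigma_{\mathrm i/4}(M)\sigma_{\mathrm i/4}(M)^*\in\cM^+$, and the converse writes a positive analytic $A$ as $BB^*$ and takes $M = \sigma_{-\mathrm i/4}(B)$. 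The obstacle is entirely analytic rather than algebraic: one must justify the continuation $\sigma_{\mathrm i/4}$ on a dense set of analytic elements, approximate an arbitrary $A\in\cM^+$ (in particular its square root) by such elements, and control the closures in the presence of the unbounded $\Delta^{1/4}$ --- this last point is also what identifies $\Br{\Delta^{1/4}\cM^+\ket\Omega}$ with $\Br{\Delta^{1/4}\Br{\cM^+\ket\Omega}}$. The $\cM'$ description follows by symmetry: for $(\cM',\ket\Omega)$ the modular operator is $\Delta^{-1}$ and the conjugation is the same $J$, so the same computation yields $\cP = \Br{\Delta^{-1/4}{\cM'}^+\ket\Omega}$; convexity of $\cP$ is then immediate, since $\cM^+$ is a convex cone and $\Delta^{1/4}$ is linear.

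With (i) in hand the remaining properties are short. For (ii) I would use $\Delta^{\mathrm i t}\Delta^{1/4}A\ket\Omega = \Delta^{1/4}\Delta^{\mathrm i t}A\Delta^{-\mathrm i t}\ket\Omega = \Delta^{1/4}\sigma_t(A)\ket\Omega$ (commuting the bounded function of $\Delta$ past $\Delta^{1/4}$, and $\Delta^{\mathrm i t}\ket\Omega=\ket\Omega$); since $\sigma_t(A)\in\cM^+$ this gives $\Delta^{\mathrm i t}\cP\subseteq\cP$, and applying $\Delta^{-\mathrm i t}$ gives equality. For (iii), Bochner's theorem writes a positive-definite $f$ as $f(x) = \int e^{\mathrm i t x}\,d\mu(t)$ for a positive measure $\mu$, whence $f(\log\Delta) = \int \Delta^{\mathrm i t}\,d\mu(t)$, and since each $\Delta^{\mathrm i t}$ preserves the closed convex cone $\cP$ by (ii), so does the average. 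For (iv) I would check that $J$ fixes the generators, $J\,Mj(M)\ket\Omega = JMJMJ\ket\Omega = j(M)M\ket\Omega = Mj(M)\ket\Omega$, using $J\ket\Omega=\ket\Omega$ and that $j(M)\in\cM'$ commutes with $M$; continuity of the anti-unitary $J$ extends this to all of $\cP$. For (v) I would use multiplicativity of $j$: on a generator, $Mj(M)\,Nj(N)\ket\Omega = (MN)j(MN)\ket\Omega\in\cP$, and boundedness of $Mj(M)$ together with density extends the inclusion to the closure.
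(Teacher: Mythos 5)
Your outline is essentially the standard general proof of this proposition (the Bratteli--Robinson route), and it is sound as far as it goes; but it takes a genuinely different path from the paper, because the paper never proves the proposition in that generality. The proposition is stated in Section~1.4 without proof, and the only argument the paper supplies is the concrete verification in Section~5 for the type~I example $\cM=\lin{\cH_d}$, $\pi(M)=M\ot\I_d$, $\ket{\Omega}=\col{\I_d}$: there one computes $\pi(M)j(\pi(M))\ket{\Omega}=\col{MM^\ast}$, so $\cP=\col{\cM^+}$, and properties (iv) and (v) reduce to $\col{X}=\col{X^\ast}$ for $X\geqslant 0$ and $\col{MNN^\ast M^\ast}=\col{(MN)(MN)^\ast}$, while (i)--(iii) are vacuous because $\Delta=\I$ in that example. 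Your approach buys the general statement: the identity $Mj(M)\ket{\Omega}=M\Delta^{1/2}M^\ast\ket{\Omega}=\Delta^{1/4}\sigma_{\mathrm{i}/4}(M)\sigma_{\mathrm{i}/4}(M)^\ast\ket{\Omega}$ is exactly the right pivot, and your treatments of (ii) via $\Delta^{\mathrm{i}t}\Delta^{1/4}A\ket{\Omega}=\Delta^{1/4}\sigma_t(A)\ket{\Omega}$, of (iii) via Bochner, and of (iv), (v) via $J\ket{\Omega}=\ket{\Omega}$, $[M,j(N)]=0$ and $j(MN)=j(M)j(N)$ are all correct. The one substantive caveat is the part you yourself flag: in (i) the existence and density of $\sigma_t$-analytic elements, the approximation of a general $A\in\cM^+$ (and of $\sqrt{A}$) by such elements, and the identification of $\Br{\Delta^{1/4}\cM^+\ket{\Omega}}$ with $\Br{\Delta^{1/4}\Br{\cM^+\ket{\Omega}}}$ in the presence of the unbounded $\Delta^{1/4}$ are the actual content of the theorem; as written your argument is a correct skeleton with that analytic core acknowledged but not executed, whereas the paper sidesteps it entirely by working only in the finite-dimensional case where $\Delta=\I$ and the cone is simply $\col{\cM^+}$.
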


\begin{prop}
\begin{enumerate}[(i)]
\item $\cP$ is a self-adjoint cone, i.e., $\cP = \cP^{\vee}$, where
$$
\cP^{\vee} = \Set{\ket{\eta}\in\cH: \iinner{\xi}{\eta}\geqslant0\text{\ for
all\ }\ket{\xi}\in\cP}.
$$
\item $\cP$ is a pointed cone, i.e.,
$$
\cP\cap(-\cP) = \set{0}.
$$
\item If $J\ket{\xi} = \ket{\xi}$, then $\ket{\xi}$ has a unique decomposition $\ket{\xi} = \ket{\xi_1} -
\ket{\xi_2}$, where $\ket{\xi_1},\ket{\xi_2}\in\cP$ and $\ket{\xi_1}\bot \ket{\xi_2}$.
\item $\cH$ is linearly spanned by $\cP$.
\end{enumerate}
\end{prop}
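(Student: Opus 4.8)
The plan is to treat the four assertions as a chain, isolating the single genuinely hard step. I would first dispose of the inclusion $\cP\subseteq\cP^{\vee}$, from which (ii) is immediate, then establish the reverse inclusion $\cP^{\vee}\subseteq\cP$ (the crux), and finally read off (iii) and (iv) from the resulting self-duality. For $\cP\subseteq\cP^{\vee}$ it suffices, by continuity of the inner product and the fact that $\cP$ is the closure of $\Set{Mj(M)\ket{\Omega}:M\in\cM}$, to check $\iinner{Aj(A)\Omega}{Bj(B)\Omega}\geqslant0$ for $A,B\in\cM$. Using $j(A)=JAJ\in\cM'$, the relation $j(A)^{\ast}=j(A^{\ast})$, the commutation of $\cM$ with $\cM'$, and $J\ket{\Omega}=\ket{\Omega}$, this reduces to $\iinner{\Omega}{C\,j(C)\,\Omega}$ with $C=A^{\ast}B$; since $SC\ket{\Omega}=C^{\ast}\ket{\Omega}$ with $S=J\Delta^{\frac12}$ gives $JC\ket{\Omega}=\Delta^{\frac12}C^{\ast}\ket{\Omega}$, the quantity equals $\iinner{C^{\ast}\Omega}{\Delta^{\frac12}C^{\ast}\Omega}\geqslant0$ because $\Delta^{\frac12}\geqslant0$. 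Assertion (ii) then drops out: if $\ket{\xi}\in\cP\cap(-\cP)$ then $\ket{\xi}$ and $-\ket{\xi}$ both lie in $\cP\subseteq\cP^{\vee}$, so $-\iinner{\xi}{\xi}\geqslant0$ and $\ket{\xi}=0$.

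Before the hard inclusion I would record a density lemma: $\cP^{\perp}=\set{0}$. Indeed $\cP$ contains $\Delta^{\frac14}\cM^{+}\ket{\Omega}$, so its closed linear span contains $\Delta^{\frac14}\cM\ket{\Omega}$, and since $\cM\ket{\Omega}$ is a core for the injective positive operator $\Delta^{\frac14}$ this set is dense, whence nothing nonzero is orthogonal to $\cP$. Combining this with $J\ket{\xi}=\ket{\xi}$ on $\cP$ and the anti-unitarity of $J$, a short computation shows every $\ket{\eta}\in\cP^{\vee}$ satisfies $J\ket{\eta}=\ket{\eta}$; thus $\cP^{\vee}$ is contained in the real subspace $\cH^{J}=\Set{\ket{\xi}\in\cH:J\ket{\xi}=\ket{\xi}}$.

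The substantive content, and what I expect to be the main obstacle, is $\cP^{\vee}\subseteq\cP$. Pure convex geometry cannot close the argument: the metric projection gives, for $\ket{\eta}\in\cP^{\vee}\cap\cH^{J}$, a splitting $\ket{\eta}=\ket{\eta_{1}}-\ket{\eta_{2}}$ with $\ket{\eta_{1}}=P_{\cP}(\eta)\in\cP$, $\ket{\eta_{2}}\in\cP^{\vee}$, and $\ket{\eta_{1}}\bot\ket{\eta_{2}}$, but concluding forces $\ket{\eta_{2}}\in\cP$, which is exactly the claim. The route I would take is the modular one. A dense set of $J$-fixed vectors can be written $\ket{\eta}=\Delta^{\frac14}a\ket{\Omega}$ with $a=a^{\ast}\in\cM$; from $\cP=\big[\Delta^{\frac14}\cM^{+}\ket{\Omega}\big]$ one has $a\geqslant0\Rightarrow\ket{\eta}\in\cP$, while $\ket{\eta}\in\cP^{\vee}$ translates into the family of inequalities $\iinner{\Delta^{\frac14}a\Omega}{\Delta^{\frac14}b\Omega}\geqslant0$ for all $b\in\cM^{+}$. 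The heart of the matter is the positivity transfer: deducing $a\geqslant0$ from this family, and then disposing of the general (unbounded, merely affiliated) case by approximation, controlled through $\Delta^{\mathrm{i}t}\cP=\cP$ and property (v), $Mj(M)\cP\subseteq\cP$. This is precisely where Araki's theorem sits, and where the delicate domain and affiliation issues must be handled.

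With self-duality $\cP=\cP^{\vee}$ secured, (iii) follows cleanly: for $J\ket{\xi}=\ket{\xi}$ put $\ket{\xi_{1}}=P_{\cP}(\xi)$ and $\ket{\xi_{2}}=\ket{\xi_{1}}-\ket{\xi}$; the variational inequality for the projection onto the closed convex cone $\cP$ yields $\ket{\xi_{2}}\in\cP^{\vee}=\cP$ and $\ket{\xi_{1}}\bot\ket{\xi_{2}}$, with $\ket{\xi}=\ket{\xi_{1}}-\ket{\xi_{2}}$, and uniqueness comes by recognising any competing orthogonal decomposition as the metric projection (again using $\cP\subseteq\cP^{\vee}$). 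Finally (iv) follows from (iii): since $J$ is an anti-unitary involution, $\cH=\cH^{J}\oplus i\cH^{J}$ as a real Hilbert space, and (iii) shows $\cH^{J}=\cP-\cP$, so $\cP$ spans $\cH$.
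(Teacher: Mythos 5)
Your outline of the easy half is correct and follows the standard Bratteli--Robinson route: the reduction of $\iinner{Aj(A)\Omega}{Bj(B)\Omega}$ to $\iinner{C^\ast\Omega}{\Delta^{\frac12}C^\ast\Omega}\geqslant 0$ with $C=A^\ast B$ is right, (ii) then follows from $\cP\subseteq\cP^{\vee}$, the density lemma $\cP^{\perp}=\set{0}$ is sound, and the derivations of (iii) and (iv) from self-duality via the metric projection onto a closed convex cone are correct (including the uniqueness argument, which does only need $\cP\subseteq\cP^{\vee}$). But the proof has a hole exactly where you flag it: the inclusion $\cP^{\vee}\subseteq\cP$ is never established. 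You correctly identify that one must transfer positivity of the family $b\mapsto\iinner{\Delta^{\frac14}a\Omega}{\Delta^{\frac14}b\Omega}$, $b\in\cM^{+}$, to positivity of $a$, and then handle general elements of $\cP^{\vee}$ (which need not be of the form $\Delta^{\frac14}a\ket{\Omega}$ with $a\in\cM$ bounded) by an approximation controlled through $\Delta^{\mathrm{i}t}\cP=\cP$ and $Mj(M)\cP\subseteq\cP$ --- but neither step is carried out. As written, the central claim of item (i) is a statement of intent rather than a proof, and since (iii) and (iv) are made to rest on $\cP^{\vee}=\cP$, the gap propagates to them as well.

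For comparison: the paper does not prove the general proposition either; it quotes it as a known result of Tomita--Takesaki/Araki theory. What the paper actually proves, in the later section on the specific form of the natural positive cone, is the proposition for the concrete example $\cM=\lin{\cH_d}$ represented as $\lin{\cH_d}\ot\I_d$, where $\cP=\col{\cM^+}$ is the vectorized positive-semidefinite cone. There every item is elementary matrix algebra: self-duality is the statement that $\Tr{XY}\geqslant0$ for all $Y\geqslant0$ iff $X\geqslant0$; pointedness is $Z\geqslant0$, $-Z\geqslant0\Rightarrow Z=0$; (iii) is the Jordan decomposition $T=T^{+}-T^{-}$ with $\Tr{T^{+}T^{-}}=0$; and (iv) is the decomposition of an arbitrary operator into four positive parts. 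So you are attempting something strictly harder than what the paper does. If you want a complete argument at your level of generality you must import the full self-dual-cone machinery (e.g.\ Bratteli--Robinson, Proposition 2.5.28, or Araki's original paper); if you only need what this paper uses, the finite-dimensional verification closes everything with no modular analysis at all.
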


\begin{prop}[Universality of the cone $\cP$]
\begin{enumerate}[(i)]
\item If $\ket{\xi}\in\cP$, then $\ket{\xi}$ is cyclic for $\cM$ if and only if $\ket{\xi}$ is separating for
$\cM$.
\item If $\ket{\xi}\in\cP$ is cyclic and separating, then the modular
conjugation $J_{\ket{\xi}}$ and the natural positive cone $\cP_{\ket{\xi}}$
associated with the pair $(\cM,\ket{\xi})$ satisfy
$$
J_{\ket{\xi}} = J,\quad \cP_{\ket{\xi}} = \cP.
$$
\end{enumerate}
\end{prop}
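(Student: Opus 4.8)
The plan is to handle the two parts separately. Part~(i) reduces to the cyclic/separating duality once we exploit that $J$ fixes $\cP$ pointwise; part~(ii) reduces to the single identity $J_{\ket{\xi}}=J$, after which the equality of cones is essentially formal, so the whole weight of the proposition falls on that identity.

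\emph{Part (i).} First I would invoke the earlier proposition that $\ket{\xi}$ is cyclic for $\cM$ if and only if it is separating for $\cM'$; it then suffices to show that, for $\ket{\xi}\in\cP$, separating for $\cM'$ is equivalent to separating for $\cM$. The inputs are the relation $J\ket{\xi}=\ket{\xi}$ (valid for every $\ket{\xi}\in\cP$) together with $J\cM J=\cM'$ and $J^{2}=\I$, which make $M\mapsto JMJ$ an antilinear bijection of $\cM$ onto $\cM'$ with inverse $M'\mapsto JM'J$. Since $J$ is antiunitary and fixes $\ket{\xi}$, for $M\in\cM$ we have $JMJ\ket{\xi}=JM\ket{\xi}$ and $\norm{JM\ket{\xi}}=\norm{M\ket{\xi}}$, so $JMJ\ket{\xi}=0$ exactly when $M\ket{\xi}=0$. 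Feeding this through the bijection shows that $\ket{\xi}$ separates $\cM$ iff it separates $\cM'$, and chaining with the duality gives cyclic for $\cM$ $\Leftrightarrow$ separating for $\cM'$ $\Leftrightarrow$ separating for $\cM$. This part carries no analytic subtlety.

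\emph{Part (ii), reduction to $J_{\ket{\xi}}=J$.} I would first note that the cone identity follows once $J_{\ket{\xi}}=J$ is known. Indeed, if $J_{\ket{\xi}}=J$ then $j_{\ket{\xi}}(M)=J_{\ket{\xi}}MJ_{\ket{\xi}}$ coincides with $j(M)=JMJ$, whence
\[
\cP_{\ket{\xi}}=\Br{\Set{Mj(M)\ket{\xi}:M\in\cM}}.
\]
By the inclusion $Mj(M)\cP\subseteq\cP$ and $\ket{\xi}\in\cP$, each generator lies in the closed set $\cP$, so $\cP_{\ket{\xi}}\subseteq\cP$. Applying self-duality to the pairs $(\cM,\ket{\Omega})$ and $(\cM,\ket{\xi})$ gives $\cP=\cP^{\vee}$ and $\cP_{\ket{\xi}}=\cP_{\ket{\xi}}^{\vee}$, and since $A\subseteq B$ forces $B^{\vee}\subseteq A^{\vee}$, from $\cP_{\ket{\xi}}\subseteq\cP$ we get $\cP=\cP^{\vee}\subseteq\cP_{\ket{\xi}}^{\vee}=\cP_{\ket{\xi}}$, hence equality.

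\emph{Part (ii), the crux and main obstacle.} To prove $J_{\ket{\xi}}=J$ I would use uniqueness of the polar decomposition. Since $\ket{\xi}$ is cyclic and separating, the closure $S_{\ket{\xi}}$ of $M\ket{\xi}\mapsto M^{\ast}\ket{\xi}$ is injective with dense range, so in $S_{\ket{\xi}}=J_{\ket{\xi}}\Delta_{\ket{\xi}}^{1/2}$ the antiunitary factor is uniquely determined; it therefore suffices to verify that $JS_{\ket{\xi}}$ is positive and self-adjoint, for then $S_{\ket{\xi}}=J\,(JS_{\ket{\xi}})$ is a polar decomposition with antiunitary part $J$, forcing $J_{\ket{\xi}}=J$. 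Using the antiunitary identity $\iinner{Ju}{v}=\iinner{Jv}{u}$ (a consequence of $J^{2}=\I$), the associated form on the domain $\cM\ket{\xi}$ is
\[
\iinner{JS_{\ket{\xi}}M\ket{\xi}}{M\ket{\xi}}=\iinner{JM^{\ast}\ket{\xi}}{M\ket{\xi}},
\]
and the task is to show this is real and nonnegative, with the form symmetric. For the reference vector $\ket{\Omega}$ this is immediate, since $JM^{\ast}\ket{\Omega}=\Delta^{1/2}M\ket{\Omega}$ and $\Delta^{1/2}\geqslant0$; the hard part will be transporting this positivity to an arbitrary $\ket{\xi}\in\cP$. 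My intended route is to approximate $\ket{\xi}$ through the description $\cP=\Br{\Delta^{1/4}\cM^{+}\ket{\Omega}}$, writing $\ket{\xi}$ as a limit of vectors $\Delta^{1/4}h\ket{\Omega}$ with $h\in\cM^{+}$, and then to push $J$ and $\Delta^{1/4}$ past $M$ and $h$ using $J\cM J=\cM'$, $\Delta^{-1/2}=J\Delta^{1/2}J$, and $\Delta^{\mathrm{i}t}\cM\Delta^{-\mathrm{i}t}=\cM$, so as to rewrite the inner product as a manifest norm-square. The genuine difficulty is analytic rather than algebraic: the unboundedness of $\Delta^{1/4}$ and the attendant domain questions must be tamed by the positive-definite-function calculus $f(\log\Delta)\cP\subseteq\cP$ and a careful limiting argument. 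If instead the order isomorphism between normal positive functionals and vectors of $\cP$ is taken as known, one can bypass this computation by characterizing $\cP$ intrinsically as the $J$-fixed representatives of normal positive functionals, from which universality follows directly.
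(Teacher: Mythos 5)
The paper itself states this proposition without proof, so there is no in-paper argument to compare against; what follows assesses your plan on its own terms. Your part (i) and your reduction of part (ii) to the single identity $J_{\ket{\xi}}=J$ are correct and follow the standard route: $J\ket{\xi}=\ket{\xi}$ together with $J\cM J=\cM'$ converts ``separating for $\cM'$'' into ``separating for $\cM$'', and the inclusion $Mj(M)\cP\subseteq\cP$ combined with self-duality of both cones yields $\cP_{\ket{\xi}}=\cP$ once $j_{\ket{\xi}}=j$.

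In the crux, however, you have misplaced the difficulty and left the genuine one unaddressed. The positivity of the form is \emph{not} the hard part and requires no approximation of $\ket{\xi}$ by vectors $\Delta^{1/4}h\ket{\Omega}$: since $J\ket{\xi}=\ket{\xi}$, one has $JM^{\ast}\ket{\xi}=JM^{\ast}J\ket{\xi}=j(M)^{\ast}\ket{\xi}$, hence $\iinner{JM^{\ast}\xi}{M\xi}=\iinner{\xi}{j(M)M\xi}=\iinner{\xi}{Mj(M)\xi}\geqslant 0$ directly from items (iv) and (v) of the cone proposition together with $\cP=\cP^{\vee}$ --- two lines, purely algebraic. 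What your plan does not supply is the step you yourself flag as required: that $JS_{\ket{\xi}}$ is \emph{self-adjoint}, not merely closed, symmetric and positive on the core $\cM\ket{\xi}$. A closed positive symmetric operator need not be self-adjoint, so ``positive on a core'' does not by itself license the polar-decomposition uniqueness you invoke; this is exactly where a naive version of the argument fails. The standard repair writes $JS_{\ket{\xi}}=U\Delta_{\ket{\xi}}^{1/2}$ with $U=JJ_{\ket{\xi}}$ unitary, extends the form positivity from the core to all of $D(\Delta_{\ket{\xi}}^{1/2})$ by graph-norm continuity, and then proves the separate lemma that a unitary $U$ satisfying $\iinner{\psi}{UH\psi}\geqslant 0$ for every $\psi\in D(H)$, with $H$ positive, self-adjoint and injective, must equal $\I$. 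Without that lemma --- or an equivalent shortcut such as Araki's order isomorphism between $\cP$ and $\cM_{\ast,+}$, which you mention only as an aside --- your proof of $J_{\ket{\xi}}=J$, and hence of part (ii), is incomplete.
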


\begin{thrm}\label{th:positive-cone}
For each $\ket{\xi}\in\cP$, define the normal positive form
$\omega_{\ket{\xi}}\in\cM_{\ast,+}$ by
$$
\omega_\xi(M) = \Innerm{\xi}{M}{\xi},\quad M\in\cM.
$$
It follows that
\begin{enumerate}[(i)]
\item for any $\omega\in\cM_{\ast,+}$, there exists a unique
$\ket{\xi}\in\cP$ such that $\omega=\omega_{\ket{\xi}}$,
\item the mapping $\ket{\xi}\longmapsto \omega_{\ket{\xi}}$ is a homeomorphism
when both $\cP$ and $\cM_{\ast,+}$ are equipped with the norm
topology. Moreover, the following estimates are valid:
$$
\norm{\ket{\xi} - \ket{\eta}}^2 \leqslant \norm{\omega_{\ket{\xi}} - \omega_{\ket{\eta}}}
\leqslant \norm{\ket{\xi} - \ket{\eta}}\norm{\ket{\xi} + \ket{\eta}}.
$$
\end{enumerate}
\end{thrm}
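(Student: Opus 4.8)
The plan is to derive everything from the two-sided norm estimate in part (ii), since that estimate immediately yields injectivity of $\ket{\xi}\mapsto\omega_{\ket{\xi}}$ (the uniqueness in (i)) and bicontinuity of the map and of its inverse (the homeomorphism), while surjectivity in (i) will follow by a closedness-plus-density argument. As a preliminary I would note that $\omega_{\ket{\xi}}-\omega_{\ket{\eta}}$ is a Hermitian normal functional, so by a phase-rotation argument its norm is attained over self-adjoint contractions, i.e. $\norm{\omega_{\ket{\xi}}-\omega_{\ket{\eta}}}=\sup\Set{\abs{(\omega_{\ket{\xi}}-\omega_{\ket{\eta}})(M)}:M=M^\ast,\ \norm{M}\le1}$.

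For the upper estimate I would start from the polarization identity
\[
(\omega_{\ket{\xi}}-\omega_{\ket{\eta}})(M)=\tfrac12\Innerm{\xi+\eta}{M}{\xi-\eta}+\tfrac12\Innerm{\xi-\eta}{M}{\xi+\eta},
\]
valid for every $M$. For self-adjoint $M$ the two summands are complex conjugates, so the expression equals $\op{Re}\Innerm{\xi+\eta}{M}{\xi-\eta}$; Cauchy--Schwarz together with $\norm{M}\le1$ then gives $\abs{(\omega_{\ket{\xi}}-\omega_{\ket{\eta}})(M)}\le\norm{\ket{\xi}-\ket{\eta}}\,\norm{\ket{\xi}+\ket{\eta}}$, which is the right-hand inequality.

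The lower estimate is the heart of the matter. Set $\ket{\zeta}=\ket{\xi}-\ket{\eta}$. Since $\ket{\xi},\ket{\eta}\in\cP$ satisfy $J\ket{\xi}=\ket{\xi}$ and $J\ket{\eta}=\ket{\eta}$, the vector $\ket{\zeta}$ is $J$-invariant, so by the orthogonal-decomposition property of $\cP$ I may write $\ket{\zeta}=\ket{\zeta_1}-\ket{\zeta_2}$ with $\ket{\zeta_1},\ket{\zeta_2}\in\cP$, $\ket{\zeta_1}\perp\ket{\zeta_2}$, whence $\norm{\ket{\zeta}}^2=\norm{\ket{\zeta_1}}^2+\norm{\ket{\zeta_2}}^2$. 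I would then take $E=\Br{\cM'\ket{\zeta_1}}$ and $E'=\Br{\cM'\ket{\zeta_2}}$, the projections onto the closures of $\cM'\ket{\zeta_1}$ and $\cM'\ket{\zeta_2}$; these subspaces are $\cM'$-invariant, so $E,E'\in\cM''=\cM$. \textbf{The main obstacle} is to show that orthogonality of $\ket{\zeta_1}$ and $\ket{\zeta_2}$ in $\cP$ forces $EE'=0$, i.e. $\Br{\cM'\ket{\zeta_1}}\perp\Br{\cM'\ket{\zeta_2}}$; this disjointness of the left-supports is where self-duality of $\cP$ enters decisively, and I expect it to require the support analysis of vectors in the natural cone rather than a bare computation. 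Granting $EE'=0$, put $M=E-E'$, a self-adjoint contraction in $\cM$; then $(E-E')\ket{\zeta}=\ket{\zeta_1}+\ket{\zeta_2}$, so that $(\omega_{\ket{\xi}}-\omega_{\ket{\eta}})(M)=\iinner{\xi+\eta}{\zeta_1+\zeta_2}$, all inner products being real and nonnegative by self-duality. Using $\iinner{\xi}{\zeta_2}\ge0$ and $\iinner{\eta}{\zeta_1}\ge0$ to drop the "wrong-sign" cross terms reduces this to $\iinner{\xi-\eta}{\zeta_1-\zeta_2}=\iinner{\zeta}{\zeta}=\norm{\ket{\xi}-\ket{\eta}}^2$, giving the left-hand inequality.

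Finally I would assemble (i) and (ii). Uniqueness is the special case $\omega_{\ket{\xi}}=\omega_{\ket{\eta}}$ of the lower estimate. For existence, the lower estimate shows the range $\Set{\omega_{\ket{\xi}}:\ket{\xi}\in\cP}$ is norm-closed in $\cM_{\ast,+}$: a Cauchy sequence of forms pulls back to a Cauchy sequence of vectors, whose limit lies in the closed cone $\cP$, and the upper estimate identifies the limiting form. To see the range is also dense, hence all of $\cM_{\ast,+}$, I would approximate an arbitrary $\omega\in\cM_{\ast,+}$ by the faithful forms $\omega+\varepsilon\,\omega_{\ket{\Omega}}$ (faithful because $\ket{\Omega}$ is separating), represent each such faithful form by a cyclic and separating vector of $\cH$ (available since $\cM$ is $\sigma$-finite with cyclic and separating $\ket{\Omega}$), and use the universality of the cone $\cP$ to move that vector into $\cP$ without altering the induced form; letting $\varepsilon\to0$ and invoking closedness places $\omega$ in the range. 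The homeomorphism in (ii) is then exactly the content of the two-sided estimate, the upper bound giving continuity of $\ket{\xi}\mapsto\omega_{\ket{\xi}}$ and the lower bound giving continuity of its inverse.
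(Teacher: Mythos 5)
You should first be aware that the paper offers no proof of this theorem where it is stated: it is quoted as a classical result of Araki. The closest the paper comes to a proof is Section 5, where, in the concrete standard form $\cM=\lin{\cH_d}$ acting as $\lin{\cH_d}\ot\I_d$ on $\cH_d\ot\cH_d$ with $\cP=\col{\cM^+}$, the two-sided estimate of part (ii) is translated into the matrix inequality $\norm{X-Y}^2_{\bH\bS}\leqslant\norm{X^2-Y^2}_1\leqslant\norm{X-Y}_{\bH\bS}\norm{X+Y}_{\bH\bS}$ for $X,Y\geqslant0$ and proved there: the upper bound via $X^2-Y^2=\tfrac12\br{(X-Y)(X+Y)+(X+Y)(X-Y)}$ and the Schwarz inequality, the lower bound by testing $X^2-Y^2$ against $U=\sum_i\sign(\lambda_i)\out{u_i}{u_i}$ built from the spectral decomposition $X-Y=\sum_i\lambda_i\out{u_i}{u_i}$. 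Your outline is the correct abstract counterpart of exactly that computation: your polarization identity is the anticommutator identity in disguise, and your test operator $E-E'$ specializes to $U\ot\I_d$, since $\br{\cM'\col{(X-Y)_{\pm}}}$ is the range projection of $(X-Y)_{\pm}$ tensored with $\I_d$. Your reduction of uniqueness and of the homeomorphism to the two-sided estimate is also the standard and correct organization.

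The genuine gap is the one you flag yourself: the disjointness $EE'=0$, i.e. $\br{\cM'\ket{\zeta_1}}\perp\br{\cM'\ket{\zeta_2}}$ for orthogonal $\ket{\zeta_1},\ket{\zeta_2}\in\cP$. In the paper's concrete setting this is free --- the positive and negative spectral subspaces of the self-adjoint operator $X-Y$ are orthogonal by construction --- but abstractly it is a real lemma (orthogonality of two vectors of $\cP$ must be upgraded to orthogonality of their $\cM$-support projections), and it is precisely where self-duality together with the invariance $M'j(M')\cP\subseteq\cP$ for the commutant must be brought to bear; without it $(E-E')\ket{\zeta}=\ket{\zeta_1}+\ket{\zeta_2}$ fails and the lower bound collapses. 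So as written the proof of the key inequality is incomplete at its decisive step. The existence half of (i) likewise rests on two inputs you do not establish: that every faithful normal state of $\cM$ admits a cyclic and separating vector representative in $\cH$, and that such a vector can be traded for one in $\cP$ inducing the same state; both are true but are themselves theorems of comparable depth (in the paper's finite-dimensional example existence is immediate, $\ket{\Omega}=\col{D_\omega^{1/2}}$). Your architecture is right; these supporting lemmas need proofs before the argument is complete.
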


%==============================================================
\section{The operator-vector correspondence}
%==============================================================

For the operator-vector correspondence \cite{Watrous}, we
distinguish two situations where slight differences occurred in the
corresponding definitions.

\subsection{vec mapping in unipartite operator spaces}
It will be helpful throughout this course to make use of a simple
correspondence between the spaces $\lin{\cX,\cY}$ and $\cY \ot \cX$,
for given complex Euclidean spaces $\cX$ and $\cY$. We define the
mapping
$$
\vec : \lin{\cX,\cY}\longrightarrow \cY \ot \cX
$$
to be the linear mapping that represents a change of bases from the
standard basis of $\lin{\cX,\cY}$ to the standard basis of $\cY \ot
\cX$. Specifically, we define
$$
\vec(E_{\mu,\nu}) = e_\mu \ot e_\nu
$$
for all $\mu\in\Sigma$ and $\nu\in\Gamma$, at which point the
mapping is determined for every $A\in\lin{\cX,\cY}$ by linearity. In
the Dirac notation, this mapping amounts to flipping a bra to a ket:
$$
\col{\out{\mu}{\nu}} = \ket{\mu}\ot\ket{\nu}\equiv
\ket{\mu}\ket{\nu} \equiv \ket{\mu\nu}.
$$
(Note that it is only standard basis elements that are flipped in
this way.)

The $\vec$ mapping is a linear bijection, which implies that every
vector $\ket{u}\in\cY\ot\cX$ uniquely determines an operator
$A\in\lin{\cX,\cY}$ that satisfies $\col{A} = \ket{u}$. It is also an
isometry, in the sense that
$$
\inner{A}{B} = \inner{\col{A}}{\col{B}}
$$
for all $A,B\in\lin{\cX,\cY}$. The following properties of the
$\vec$ mapping are easily verified:
\begin{enumerate}[(i)]
\item For every choice of complex Euclidean spaces $\cX_1, \cX_2, \cY_1$, and $\cY_2$, and every choice of
operators $A\in\lin{\cX_1,\cY_1}, B\in\lin{\cX_2,\cY_2}$, and
$X\in\lin{\cX_2,\cX_1}$, it holds that
\begin{eqnarray}\label{eq:vecidentity1}
(A\ot B)\col{X} = \col{AXB^\t}.
\end{eqnarray}
\item For every choice of complex Euclidean spaces $\cX$ and $\cY$, and every choice of
operators $A,B\in\lin{\cX,\cY}$, the following equations hold:
\begin{eqnarray}\label{eq:vecidentity2}
\Ptr{\cX}{\col{A}\col{B}^\ast} = AB^\ast,
\end{eqnarray}
\begin{eqnarray}\label{eq:vecidentity3}
\Ptr{\cY}{\col{A}\col{B}^\ast} = (B^\ast A)^\t.
\end{eqnarray}
\item For $\ket{u}\in\cX$ and $\ket{v}\in\cY$ we have
\begin{eqnarray}\label{eq:vecidentity4}
\col{\out{u}{v}} = \ket{u}\ot\overline{\ket{v}}.
\end{eqnarray}
This includes the special cases $\col{\ket{u}}= \ket{u}$ and $\col{\bra{v}} =
\overline{\ket{v}}$.
\end{enumerate}

\begin{exam}[The Schmidt decomposition] Suppose $\ket{u}\in\cY\ot\cX$ for given complex Euclidean
spaces $\cX$ and $\cY$. Let $A\in\lin{\cX,\cY}$ be the unique
operator for which $\ket{u} = \col{A}$. There exists a singular value
decomposition
$$
A= \sum^r_{i=1}s_i \out{y_i}{x_i}
$$
of $A$. Consequently
$$
\ket{u} = \col{A} = \col{\sum^r_{i=1}s_i \out{y_i}{x_i}} =
\sum^r_{i=1}s_i\col{\out{y_i}{x_i}} = \sum^r_{i=1}s_i \ket{y_i} \ot
\overline{\ket{x_i}}.
$$
The fact that $\Set{\ket{x_1},\ldots,\ket{x_r}}$ is orthonormal implies that
$\Set{\overline{\ket{x_1}},\ldots,\overline{\ket{x_r}}}$ is orthonormal as well.

We have therefore established the validity of the Schmidt
decomposition, which states that every vector $\ket{u} \in \cY \ot \cX$
can be expressed in the form
$$
\ket{u} = \sum^r_{i=1}s_i \ket{y_i} \ot \ket{z_i}
$$
for positive real numbers $s_1,\ldots,s_r$ and orthonormal sets
$$
\Set{\ket{y_1},\ldots, \ket{y_r}}\subset \cY\quad\text{and}\quad\Set{\ket{z_1},\ldots, \ket{z_r}}\subset
\cX.
$$
\end{exam}

\subsection{vec mapping in multipartite operator spaces}
When the $\vec$ mapping is generalized to multipartite spaces,
caution should be given to the bipartite case (multipartite
situation similarly). Specifically, for given complex Euclidean
spaces $\cX_{A/B}$ and $\cY_{A/B}$,
$$
\vec: \lin{\cX_A \ot \cX_B,\cY_A \ot \cY_B} \longrightarrow \cY_A
\ot \cX_A \ot \cY_B \ot \cX_B
$$
is defined to be the linear mapping that represents a change of
bases from the standard basis of $\lin{\cX_A \ot \cX_B,\cY_A \ot
\cY_B}$ to the standard basis of $\cY_A \ot \cX_A \ot \cY_B \ot
\cX_B$. Concretely,
$$
\col{\out{m}{n} \ot \out{\mu}{\nu}}: = \ket{mn} \ot
\ket{\mu\nu}\equiv \ket{mn\mu\nu},
$$
where $\set{\ket{n}}$ is an orthonormal basis for $\cX_A$ and
$\set{\ket{\nu}}$ is an orthonormal basis for $\cX_B$, while
$\set{\ket{m}}$ is an orthonormal basis for $\cY_A$ and
$\set{\ket{\mu}}$ is an orthonormal basis for $\cY_B$. Analogously,
the mapping is determined for every operator $X \in \lin{\cX_A \ot
\cX_B,\cY_A \ot \cY_B}$ by linearity. Note that if $X = A \ot B$,
where $A \in \lin{\cX_A,\cY_A}$ and $B \in \lin{\cX_B,\cY_B}$, then
$$
\col{A \ot B} = \col{A}\ot \col{B}.
$$

%====================================================================
\section{Explicit examples}
%====================================================================

\begin{exam}

Let $\cH_d$ be a $d$-dimensional complex Hilbert space. Consider a
von Neumann algebra $\cM \equiv \lin{\cH_d}$. For any
$X\in\lin{\cH_d}$, the following map defined a faithful
representation of von Neumann algebra $\cM$ on a Hilbert space
$\cH\equiv\cH_d\ot\cH_d$:
$$
\pi: X\longmapsto \pi(X) = X\ot\I_d.
$$
Setting $\ket{\Omega} \defeq \col{\I_d}$, we have that $\ket{\Omega}$ is a
separating and cyclic vector in $\cH$ for von Neumann algebra
$\pi(\cM) \equiv \lin{\cH_d}\ot\I_d$. Therefore we can conclude that
von Neumann algebra $\cM$ have a standard representation
$(\pi(\cM),\cH,\ket{\Omega})$.

Consider the Tomita-Takesaki modular theory in
$(\pi(\cM),\cH,\ket{\Omega})$. According to the Tomita-Takesaki modular
theory
$$
S\pi(X)\ket{\Omega} \defeq \pi(X)^\ast\ket{\Omega} = (X^\ast\ot\I_d)\ket{\Omega},\quad
\forall X\in\cM,
$$
which is equivalently described as
$$
S\col{X} = \col{X^\ast},\quad \forall X\in\cM.
$$
If we assume that $K$ is the \emph{complex conjugate operator} and
$P$ is a \emph{swap operator}, then
\begin{eqnarray*}
S\col{X} &=& \col{X^\ast} = \col{(\overline{X})^\t} =
P\col{\overline{X}}\\
 &=& PK\col{X}= KP\col{X},
\end{eqnarray*}
which means that $S = PK = KP$. Similarly, $J= S=F=PK=KP$, therefore
$\Delta=\I$. In quantum physics, $K$ stands for time reversal operation.

\end{exam}

\begin{thrm}

The set of all separating and cyclic vectors in $\cH$ for $\pi(\cM)$
is precisely the set
$$
\Set{\col{A}\in\cH: A\in\cM \text{\ is not singular}}.
$$
\end{thrm}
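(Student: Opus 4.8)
The plan is to transport the whole question through the $\vec$ mapping. By the discussion in Section~2, $\vec$ is a linear bijection identifying each $\ket{u}\in\cH = \cH_d\ot\cH_d$ with a unique operator $A\in\lin{\cH_d}$ via $\ket{u}=\col{A}$, so it suffices to decide, in terms of $A$, when $\col{A}$ is cyclic and when it is separating for $\pi(\cM)=\lin{\cH_d}\ot\I_d$. Both will turn out to be governed by a single linear-algebraic condition on the left-multiplication map $L_A\colon X\mapsto XA$.

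First I would compute the action of the algebra. Applying the $\vec$ identity~\eqref{eq:vecidentity1} with the second tensor factor equal to $\I_d$ (and using $\I_d^\t=\I_d$), one gets $\pi(X)\col{A}=(X\ot\I_d)\col{A}=\col{XA}$ for every $X\in\cM=\lin{\cH_d}$. Consequently the cyclic subspace generated by $\col{A}$ is $\Set{\pi(X)\col{A}:X\in\cM}=\col{\Set{XA:X\in\lin{\cH_d}}}$, and, since $\vec$ is a bijection, the separating condition $\pi(X)\col{A}=0\Rightarrow\pi(X)=0$ becomes simply: $XA=0$ implies $X=0$. Thus cyclicity is surjectivity of $L_A$ and separability is injectivity of $L_A$.

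For the cyclicity step, note that $\cH$ is finite-dimensional, so density is equality: $\col{A}$ is cyclic iff $\Set{XA:X}=\lin{\cH_d}$, i.e.\ iff $L_A$ is surjective. If $A$ is nonsingular this holds, since $X=YA^{-1}$ solves $XA=Y$; if $A$ is singular then every $XA$ annihilates the nonzero subspace $\ker A$ (in particular $\I\notin\im L_A$), so $\im L_A$ is proper. Hence $\col{A}$ is cyclic iff $A$ is nonsingular. For the separating step, $\col{A}$ is separating iff $L_A$ is injective. If $A$ is nonsingular then $XA=0$ forces $X=0$ on $\im A=\cH_d$; if $A$ is singular then $\im A\neq\cH_d$ and any nonzero $X$ vanishing on $\im A$ gives $XA=0$ with $X\neq0$. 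Hence $\col{A}$ is separating iff $A$ is nonsingular as well. Combining the two equivalences identifies the set of separating and cyclic vectors with exactly $\Set{\col{A}:A\in\cM\text{ is not singular}}$.

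I do not anticipate a serious obstacle; the only care required is the bookkeeping in~\eqref{eq:vecidentity1} (checking that the transpose on $\I_d$ is harmless) and the observation that in finite dimension surjectivity and injectivity of $L_A$ are \emph{both} equivalent to invertibility of $A$. It is worth remarking that this concrete computation recovers the general principle (\emph{Universality of the cone}) that cyclicity and separability coincide for the relevant vectors, since here both are controlled by the single condition that $A$ be nonsingular.
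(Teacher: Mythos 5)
Your proposal is correct and follows essentially the same route as the paper: both transport the problem through the $\vec$ bijection via $\pi(X)\col{A}=\col{XA}$, identify cyclicity with surjectivity and the separating property with injectivity of $X\mapsto XA$, and split on whether $A$ is singular (the paper phrases the singular case via $\cM B_\psi$ being a proper left ideal, which is the same observation as your kernel/image argument). The only nitpick is terminological: your map $L_A\colon X\mapsto XA$ is right multiplication by $A$, not left multiplication, but this does not affect the argument.
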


\begin{proof}
If $A\in\cM$ is not singular, then for any $\pi(X)\in\pi(\cM)$, we
have
$$
\pi(X)\col{A} = 0 \Longleftrightarrow \col{XA}=0 \Longleftrightarrow
XA=0 \Longleftrightarrow X=0.
$$
Thus $\col{A}$ is a separating vector. When $X$ is all over $\cM$,
we have
$$
\pi(\cM)\col{A} = \col{\cM A}= \col{\cM} = \cH,
$$
which implies that $\col{A}$ is a cyclic vector.

Now suppose that $\ket{\psi}\in\cH$ is a separating and cyclic
vector for $\pi(\cM)$. Then there exists an operator $B_\psi\in\cM$
such that $\ket{\psi}=\col{B_\psi}$. If $B_\psi$ is singular, then
$\cM B_\psi$ is a proper left ideal of $\cM$. Thus
$\pi(\cM)\ket{\psi} \neq\cH$ and there exists $X_1\neq X_2$ such
that $X_1B_\psi = X_2B_\psi$. That is, $\pi(X_1) \ket{\psi}=\pi(X_2)
\ket{\psi}$. Therefore $\ket{\psi}=\col{B_\psi}$ is not a separating
and cyclic vector for singular operator $B_\psi$.
\end{proof}

\begin{remark}
We recall that the \emph{Schmidt rank} of pure state
$\ket{\psi}\in\cH$ is defined by
$$
\rS\rR(\ket{\psi}) \defeq \rank(B_\psi), \quad \ket{\psi} =
\col{B_\psi}.
$$
Hence the above result can be described equivalently as:

\textbf{Claim:} $\ket{\psi}\in\cH$ is a separating and cyclic vector
for $\pi(\cM)$ if and only if $\rS\rR(\ket{\psi}) = d$.

In some sense,  separating and cyclic vectors stands for quantum
states of most entanglement of measure.

If $\omega$ is a state on $\cM$, then there exist density matrix
$D_\omega\in\lin{\cH_d}$ such that
$$
\omega(M) = \Tr{D_\omega M} = \Inner{D_\omega}{M}_{\bH\bS},\quad
M\in\lin{\cH_d}.
$$
It is known that $\omega$ is faithful if and only if $D_\omega$ is
not singular. Since $\dim(\cH_d) = d <+\infty$, it follows that all
states on $\cM$ are normal.

Consider the normalized vector $\ket{\Omega}\defeq
\col{\sqrt{D_\omega}}\in\cH$ for faithful normal state $\omega$. It
is easily seen that
$$
\omega(M) = \Innerm{\Omega}{\pi(M)}{\Omega}.
$$
$\ket{\Omega}$ is a separating and cyclic vector $\pi(\cM)$. In terms of
the language of quantum information theory, $\ket{\Omega}$ is a
purification of density matrix $D_\omega$ in $\cH$. Thus there is a
connection between the standard representation of von Neumann
algebra with a faithful normal state and \emph{purification} of density
matrix:

Given a faithful normal state $\omega$ on von Neumann algebra $\cM$.
Then the standard representation of $\cM$ is
$(\pi(\cM),\cH,\ket{\Omega})$, where $\ket{\Omega}=\col{\sqrt{D_\omega}}$ is a
purification of density matrix $D_\omega$ which is not singular.
\end{remark}

\begin{exam}[Unification of finite or countable infinite situation, \cite{Ali,Bagarello}]
A simple example of the Tomita-Takesaki theory and its related KMS
states can be built on the space of \emph{Hilbert-Schmidt operators}
on a Hilbert space. The set of Hilbert-Schmidt operators is itself a
Hilbert space, and there are two preferred algebras of operators on
it, which carry the modular structure.

Let $\sH$ be a (complex, separable) Hilbert space of dimension $N$
(finite or infinite) and $\set{\ket{\psi_i}}^N_{i=1}$ an orthonormal
basis of it. We denote by $\cC_2$ the space of all Hilbert-Schmidt
operators on $\sH$ ($\cC_2\subset\lin{\sH}$). This is a Hilbert
space with scalar product:
$\Pa{\cC_2,\Inner{\cdot}{\cdot}_{\bH\bS}}$
$$
\Inner{X}{Y}_{\bH\bS} = \Tr{X^\ast Y}.
$$
The vectors (an element of $\cC_2$ is called vector although it is
operator on $\sH$),
$$
\Set{E_{ij}=\out{\psi_i}{\psi_j}: i,j=1,2,\ldots,N}
$$
form an orthonormal basis of $\cC_2$,
$$
\Inner{E_{ij}}{E_{kl}} = \delta_{ik}\delta_{jl}.
$$
In particular, the vectors,
$$
E_{ii} = \out{\psi_i}{\psi_i},
$$
are one dimensional projection operators on $\sH$. In what follows
$\I$ will denote the identity operator on $\sH$ and $\I_2$ that on
$\cC_2$ (in later notation: $\I_2 = \I\boxtimes\I$).

All bounded linear operator acting on $\cC_2$ (i.e., linear
super-operators in $\trans{\sH}$) are denoted by $\lin{\cC_2}$. We
identify a special class of linear operators on $\cC_2$, denoted by
$A\boxtimes B\in\lin{\cC_2}, A,B\in\lin{\sH}$, which act on a vector
$X\in\cC_2$ in the manner:
$$
A\boxtimes B(X)
\defeq AXB^\ast.
$$
Using the scalar product in $\cC_2$, we see that
\begin{enumerate}[(i)]
\item $(A\boxtimes B)^\ast = A^\ast\boxtimes B^\ast$,
\item $(A_1\boxtimes
B_1)(A_2\boxtimes B_2) = A_1A_2\boxtimes B_1B_2$.
\end{enumerate}
Indeed,
\begin{eqnarray*}
\Inner{(A\boxtimes B)^\ast (Y)}{X}_{\bH\bS} &=&
\Inner{Y}{(A\boxtimes B)(X)}_{\bH\bS} = \Tr{Y^\ast AXB^\ast}\\
&=& \Tr{B^\ast Y^\ast AX} = \Inner{\Pa{B^\ast Y^\ast
A}^\ast}{X}_{\bH\bS} \\
&=& \Inner{A^\ast Y B}{X}_{\bH\bS} =\Inner{A^\ast\boxtimes
B^\ast(Y)}{X}_{\bH\bS},
\end{eqnarray*}
which implies that $(A\boxtimes B)^\ast = A^\ast\boxtimes B^\ast$.
Similar reasoning goes for $(A_1\boxtimes B_1)(A_2\boxtimes B_2) =
A_1A_2\boxtimes B_1B_2$.

There are two special von Neumann algebras which can be built out of
these operators. These are,
$$
\cA_l \defeq \Set{A_l = A\boxtimes\I: A\in\lin{\sH}},\quad \cA_r
\defeq \Set{A_r = \I\boxtimes A: A\in\lin{\sH}}.
$$
As a matter of fact, $A_l$ is a left regular representation of $A$
or a left multiplication by $A$; $A_r$ is a right regular
representation of $A^\ast$ or a right multiplication by $A^\ast$.
For any $A,B\in\lin{\sH}$, we have
$$
A_lB_r = B_r A_l,\quad [A_l,B_r]=0.
$$
In fact, for any $X\in\cC_2$,
\begin{eqnarray*}
A_lB_r(X) &=& (A\boxtimes\I)(\I\boxtimes B)(X) =
(A\boxtimes\I)(XB^\ast)\\
&=& AXB^\ast = (\I\boxtimes B)(AX) = (\I\boxtimes
B)(A\boxtimes\I)(X)\\
&=& B_rA_l(X).
\end{eqnarray*}
They are mutual commutants and both are factors:
$$
\Pa{\cA_l}' = \cA_r,\quad \Pa{\cA_r}' = \cA_l,\quad \cA_l\cap\cA_r =
\complex\I_2.
$$
Consider now the operator $J:\cC_2\longrightarrow\cC_2$, whose
action on the vectors $E_{ij}$ is given by
$$
JE_{ij} \defeq E_{ji}\Longrightarrow J^2 = \I_2,\quad
J(\out{\phi}{\psi}) = \out{\psi}{\phi},\quad\forall
\ket{\phi},\ket{\psi}\in\sH.
$$
This operator is anti-unitary, and since
\begin{eqnarray*}
[J(A\boxtimes \I)J]E_{ij} &=&  J(A\boxtimes \I)E_{ji} = J(AE_{ji})\\
&= & J(A\out{\psi_j}{\psi_i}) = \out{\psi_i}{\psi_j}A^\ast =
(\I\boxtimes A)E_{ij},
\end{eqnarray*}
we immediately get
$$
J\cA_l J = \cA_r.
$$

$\bullet$ \textbf{A KMS state.}

Let $\Set{\lambda_i}_{i=1}^N(N\leqslant+\infty)$ be a sequence of
non-zero, positive numbers, satisfying, $\sum^N_{i=1}\lambda_i=1$.
Let
\begin{eqnarray}
\mathbf{\Omega} \defeq \sum^N_{i=1}\sqrt{\lambda_i}E_{ii}\in\cC_2.
\end{eqnarray}
We note the following properties of $\mathbf{\Omega}$.
\begin{enumerate}[(i)]
\item $\mathbf{\Omega}$ defines a vector state $\omega$ on the von Neumann
algebra $\cA_l$. This follows from the fact that for any
$A\boxtimes\I\in\cA_l$, we may define the state $\omega$ on $\cA_l$
by
\begin{eqnarray}\label{eq:densitymatrix}
\omega(A\boxtimes\I) \defeq
\Inner{\mathbf{\Omega}}{(A\boxtimes\I)\mathbf{\Omega}}_{\bH\bS} =
\Tr{\mathbf{\Omega}^\ast A\mathbf{\Omega}} = \Tr{D_\omega A},\quad
D_\omega = \sum^N_{i=1}\lambda_i E_{ii},
\end{eqnarray}
thus $\mathbf{\Omega}= D^{\frac12}_\omega$.
\item The state $\omega$ is faithful and normal. Normality follows from the last equality
in Eq.~(\ref{eq:densitymatrix}) and the fact that $D_\omega$ is a
density matrix. To check for faithfulness, note that for any
$A\boxtimes\I\in\cA_l$,
$$
\omega((A\boxtimes\I)^\ast(A\boxtimes\I)) = \omega(A^\ast
A\boxtimes\I)) = \Tr{D_\omega A^\ast A} = \sum^N_{i=1}\lambda_i
\norm{A\ket{\psi_i}}^2,
$$
from which it follows that
$\omega((A\boxtimes\I)^\ast(A\boxtimes\I)) = 0$ if and only if $A=0$
(since the $\ket{\psi_i}$ are an orthonormal basis set and the
$\lambda_i>0$), hence if and only if $A\boxtimes\I = 0$.
\item The vector $\mathbf{\Omega}$ is cyclic and separating for
$\cA_l$: $\Br{\cA_l\mathbf{\Omega}}=\cC_2$. Indeed, cyclicity
follows from the fact that if $X\in\cC_2$ is orthogonal to all
$(A\boxtimes\I)\mathbf{\Omega},A \in\lin{\sH}$, then
$$
\Inner{X}{(A\boxtimes\I)\mathbf{\Omega}}_{\bH\bS} = \Tr{X^\ast
A\mathbf{\Omega}} =
\sum^N_{i=1}\sqrt{\lambda_i}\Innerm{\psi_i}{X^\ast A}{\psi_i}=
0,\quad\forall A\in\lin{\cH}.
$$
Taking $A=E_{kl}$, we easily get from the above equality,
$\Innerm{\psi_l}{X^\ast}{\psi_k}=0$ and since this holds for all $k,
l$, we get $X = 0$. In the same way, $\mathbf{\Omega}$ is also
cyclic for $\cA_r$, hence separating for $\cA_l$, i.e.,
$(A\boxtimes\I)\mathbf{\Omega}=(B\boxtimes\I)\mathbf{\Omega}\Longleftrightarrow
A\boxtimes\I=B\boxtimes\I$.
\end{enumerate}
We shall show in the sequel that the state $\omega$ constructed
above is indeed a KMS state for a particular choice of $\lambda_i$.

$\bullet$ \textbf{Time evolution and modular automorphism.}

We now construct a time evolution $\sigma^\omega_t(t\in\real)$, on
the algebra $\cA_l$, using the state $\omega$, with respect to which
it has the KMS property, for fixed $\beta>0$,
$$
\omega(A_l\sigma^\omega_{t+\mathrm{i}\beta}(B_l)) =
\omega(\sigma^\omega_t(B_l)A_l),\quad \forall A_l,B_l\in\cA_l,
$$
and moreover the function,
$$
F_{A_l,B_l}(z) \defeq \omega(A_l\sigma^\omega_z(B_l)),
$$
is analytic in the strip $\Set{z\in\complex:
0<\mathrm{Im}(z)<\beta}$ and continuous on its boundaries. We start
by defining the operators,
$$
\bP_{ij} \defeq E_{ii}\boxtimes E_{jj}.
$$
Clearly $\bP_{ij}$ are projection operators on the Hilbert space
$\cC_2$:
$$
\left\{\begin{array}{c}
         \bP_{ij}^\ast = \bP_{ij}, \\
         \bP_{ij}^2 = \bP_{ij}.
       \end{array}
\right.
$$
Indeed,
\begin{eqnarray*}
\bP_{ij}^\ast &=& (E_{ii}\boxtimes E_{jj})^\ast =
E_{ii}^\ast\boxtimes E_{jj}^\ast = E_{ii}\boxtimes E_{jj} = \bP_{ij},\\
\bP_{ij}^2 &=& (E_{ii}\boxtimes E_{jj})^2 = E^2_{ii}\boxtimes
E^2_{jj} = E_{ii}\boxtimes E_{jj} = \bP_{ij}.
\end{eqnarray*}
Using $D_\omega$ and for a fixed $\beta>0$, define the operator
$H_\omega$ as:
$$
D_\omega \defeq e^{-\beta H_\omega} \Longrightarrow H_\omega =
-\frac1\beta\ln D_\omega = -\frac1\beta
\sum^N_{i=1}(\ln\lambda_i)E_{ii}.
$$
Clearly $\Br{D_\omega,H_\omega}=0$. Next we define the operators:
$$
H^l_\omega \defeq H_\omega\boxtimes\I,\quad H^r_\omega \defeq
\I\boxtimes H_\omega,\quad \bH_\omega \defeq H^l_\omega - H^r_\omega
$$
Since $\sum^N_{i=1}E_{ii}=\I$, we may also write
$$
H^l_\omega = -\frac1\beta \sum^N_{i,j=1}(\ln\lambda_i)\bP_{ij},\quad
H^r_\omega = -\frac1\beta \sum^N_{i,j=1}(\ln\lambda_j)\bP_{ij}.
$$
Thus
$$
\bH_\omega = -\frac1\beta
\sum^N_{i,j=1}\Pa{\ln\frac{\lambda_i}{\lambda_j}}\bP_{ij}.
$$
Using the operator:
$$
\Delta_\omega \defeq \sum^N_{i,j=1}\Pa{\frac{\lambda_i}{\lambda_j}}
\bP_{ij} = e^{-\beta\bH_\omega},
$$
we define a time evolution operator on $\cC_2$:
$$
e^{\mathrm{i}\bH_\omega t} =
\Delta_\omega^{-\frac{\mathrm{i}t}{\beta}}\quad (t\in\real),
$$
and we note that, for any $X\in\cC_2$,
\begin{eqnarray*}
e^{\mathrm{i}\bH_\omega t} (X) &=&
\sum^N_{i,j=1}\Pa{\frac{\lambda_i}{\lambda_j}}^{-\frac{\mathrm{i}t}{\beta}}
\bP_{ij}(X)\\
&=&
\Br{\sum^N_{i=1}\lambda_i^{-\frac{\mathrm{i}t}{\beta}}E_{ii}}\boxtimes
\Br{\sum^N_{j=1}\lambda_j^{-\frac{\mathrm{i}t}{\beta}}E_{jj}}(X)\\
&=& e^{\mathrm{i} H_\omega t}Xe^{-\mathrm{i} H_\omega t},
\end{eqnarray*}
so that
$$
e^{\mathrm{i}\bH_\omega t} = e^{\mathrm{i}H_\omega t}\boxtimes
e^{\mathrm{i}H_\omega t}.
$$
It is clearly that $\mathbf{\Omega}$ commutes with $H_\omega$ and
hence that it is invariant under this time evolution:
$$
e^{\mathrm{i}\bH_\omega t}(\mathbf{\Omega}) = e^{\mathrm{i}H_\omega
t}\mathbf{\Omega} e^{-\mathrm{i}H_\omega t} = \mathbf{\Omega}.
$$
Finally, using $e^{\mathrm{i}\bH_\omega t}(\mathbf{\Omega})$ we
define the time evolution $\sigma^\omega$ on the algebra $\cA_l$, in
the manner:
$$
\sigma^\omega_t(A_l) = e^{\mathrm{i}\bH_\omega t}A_l
e^{-\mathrm{i}\bH_\omega t},\quad\forall A_l\in\cA_l.
$$
Writing $A_l = A\boxtimes\I,A\in\lin{\sH}$, and using the
composition law, we see that
$$
e^{\mathrm{i}\bH_\omega t}A_l e^{-\mathrm{i}\bH_\omega t} =
\Br{e^{\mathrm{i}H_\omega t}A e^{-\mathrm{i}H_\omega t}}\boxtimes\I,
$$
so that
\begin{eqnarray*}
\omega(\sigma^\omega_t(A_l)) &=& \Tr{D_\omega e^{\mathrm{i}H_\omega
t}A e^{-\mathrm{i}H_\omega t}} = \Tr{e^{-\mathrm{i}H_\omega
t}D_\omega e^{\mathrm{i}H_\omega t}A } \\
&=& \Tr{D_\omega A} = \omega(A_l),
\end{eqnarray*}
since
$D_\omega$ and $H_\omega$ commute. Thus, the state $\omega$ is
invariant under the time evolution $\sigma^\omega$.

To obtain the KMS condition, we first note that, with $A_l =
A\boxtimes\I$ and $B_l = B\boxtimes\I$,
$$
A_l \sigma^\omega_t(B_l) = \Br{Ae^{\mathrm{i}H_\omega t}B
e^{-\mathrm{i}H_\omega t}}\boxtimes\I.
$$
Hence,
\begin{eqnarray*}
F_{A_l,B_l}(t) &=& \omega(A_l\sigma^\omega_t(B_l)) = \Tr{D_\omega
Ae^{\mathrm{i}H_\omega t}B e^{-\mathrm{i}H_\omega t}} \\
&=& \Tr{e^{-\mathrm{i}H_\omega t}D_\omega Ae^{\mathrm{i}H_\omega t}B
}=\Tr{D_\omega e^{-\mathrm{i}H_\omega t}A e^{\mathrm{i}H_\omega
t}B},
\end{eqnarray*}
the last equality following from the commutativity of $D_\omega$ and
$H_\omega$. Thus, since $D_\omega=e^{-\beta H_\omega}$, that is,
$D_\omega e^{\beta H_\omega}=\I$. Thus
\begin{eqnarray*}
F_{A_l,B_l}(t+\mathrm{i}\beta) &=& \Tr{D_\omega
e^{-\mathrm{i}H_\omega t}e^{\beta H_\omega}Ae^{\mathrm{i}H_\omega
t}e^{-\beta H_\omega}B} \\
&=& \Tr{D_\omega e^{\beta H_\omega}e^{-\mathrm{i}H_\omega
t}Ae^{\mathrm{i}H_\omega t}e^{-\beta H_\omega}B}
\\
&=&\Tr{e^{-\mathrm{i}H_\omega t}Ae^{\mathrm{i}H_\omega t} D_\omega
B}= \Tr{e^{\mathrm{i}H_\omega t} D_\omega Be^{-\mathrm{i}H_\omega
t}A}\\
&=& \Tr{D_\omega e^{\mathrm{i}H_\omega t}Be^{-\mathrm{i}H_\omega
t}A},
\end{eqnarray*}
so that
$$
\omega(A_l\sigma^\omega_{t+\mathrm{i}\beta}(B_l)) = \Tr{D_\omega
e^{\mathrm{i}H_\omega t}B e^{-\mathrm{i}H_\omega t}A}
=\omega(\sigma^\omega_t(B_l)A_l),
$$
which is the KMS condition.

$\bullet$ \textbf{The anti-linear operator $S_\omega$.}

We now analyze the anti-linear operator $S_\omega:
\cC_2\longrightarrow \cC_2$, which acts as
$$
S_\omega (A_l\mathbf{\Omega}) = A^\ast_l\mathbf{\Omega},\quad
\forall A_l\in\cA_l.
$$
Taking $A_l = A\boxtimes\I$,
$$
S_\omega (A_l\mathbf{\Omega}) = A^\ast_l\mathbf{\Omega},\quad
\forall A_l\in\cA_l\Longleftrightarrow S_\omega (A\mathbf{\Omega}) =
A^\ast\mathbf{\Omega},\quad \forall A\in\lin{\sH}.
$$
Moreover, we may write,
$$
S_\omega (A\mathbf{\Omega}) = A^\ast\mathbf{\Omega} \Longrightarrow
\sum^N_{i=1} \sqrt{\lambda_i}S_\omega (AE_{ii}) = \sum^N_{i=1}
\sqrt{\lambda_i}A^\ast E_{ii}.
$$
Taking $A=E_{kl}$ and using $E_{kl} E_{ii} = \delta_{li}E_{ki}$, we
then get
$$
\sqrt{\lambda_l}S_\omega (E_{kl})  =
\sqrt{\lambda_k}E_{lk}\Longrightarrow S_\omega (E_{kl})  =
\sqrt{\frac{\lambda_k}{\lambda_l}}E_{lk}.
$$
Since any $A\in\lin{\sH}$ can be written as $A=\sum^N_{i,j=1}
a_{ij}E_{ij}$, where $a_{ij} = \Innerm{\psi_i}{A}{\psi_j}$, and
furthermore, since $\bP_{ij}(E_{kl}) =
\delta_{ik}\delta_{jl}E_{ij}$, we obtain
$$
S_\omega = J\Delta^{\frac12}_\omega,
$$
which in fact, also gives the polar decomposition of $S_\omega$.

Thus, we could have obtained the time evolution automorphisms
$\sigma^\omega_t(t\in\real)$, by analyzing the anti-linear operator
$S_\omega$, (since $S^\ast_\omega S_\omega = \Delta_\omega$)
directly. Also, we see that the modular operator simply defines the
\emph{Gibbs state} corresponding to the Hamiltonian $\bH_\omega$.

$\bullet$ \textbf{The centralizer.}

The centralizer of $\cA_l$, with respect to the state $\omega$, is
the von Neumann algebra,
$$
\cM_\omega = \Set{B_l\in\cA_l: \omega(\Br{B_l,A_l})=0,\forall
A_l\in\cA_l}.
$$
Let us determine this von Neumann algebra. Writing
$A_l=A\boxtimes\I,B_l=B\boxtimes\I$, the commutator, $\Br{B_l,A_l}=
(AB-BA)\boxtimes\I$. Hence
$$
\omega(\Br{B_l,A_l}) = \Tr{D_\omega(AB-BA)}.
$$
Thus, in order for the above expression to vanish, we must have,
$$
\sum^N_{i=1}\lambda_i \Innerm{\psi_i}{AB}{\psi_i} =
\sum^N_{i=1}\lambda_i \Innerm{\psi_i}{BA}{\psi_i},\quad \forall
A\in\lin{\cH}.
$$
Taking $A=\out{\psi_k}{\psi_l}$, this gives,
$$
\lambda_k\Innerm{\psi_l}{B}{\psi_k} =
\lambda_l\Innerm{\psi_l}{B}{\psi_k},\quad \forall k,l=1,\ldots,N,
$$
and since in general, $\lambda_k\neq\lambda_l$, this implies that
$\Innerm{\psi_l}{B}{\psi_k}=0$ whenever $k\neq l$. Thus, $B$ is of
the general form $B = \sum^N_{i=1}b_iE_{ii},b_i\in\complex$. In
other words, the centralizer $\cM_\omega$ is generated by the
projectors $E^l_{ii} = E_{ii}\boxtimes\I,i=1,\ldots,N$, which are
\emph{minimal} (i.e., they do not contain projectors onto smaller
subspaces) in $\cA_l$. Alternatively, we may write, $\cM_\omega =
\Set{H^l_\omega}''$, where $H^l_\omega$ is the Hamiltonian defined
above, so that it is an \emph{atomic, commutative} von Neumann
algebra.
\end{exam}

%=======================================================================
\section{Araki relative modular theory}
%=======================================================================

Consider a von Neumann algebra $\cM$ in its standard form. If $\cM$
has the standard form $(\cM,\cH,J,\cP)$, then $\cM$ acts on the
Hilbert space $\cH$, $J$ is the modular conjugation, and $\cP$ is a
natural positive cone in $\cH$ such that every faithful normal state
$\omega$ has a unique vector representative $\ket{\Omega}$ in $\cP$ which
is cyclic and separating for $\cM$. Given another normal state
$\phi$, the \emph{densely defined quadratic form}
\begin{eqnarray}
A\ket{\Omega}\mapsto \phi(AA^\ast),\quad \forall A\in\cM
\end{eqnarray}
is \emph{closable} and there exists an associated positive
self-adjoint operator $\Delta$. It is characterized by the following
properties. $\cM\ket{\Omega}$ is a \emph{core} for $\Delta^{\frac12}$ and
$$
\norm{\Delta^{\frac12}A\ket{\Omega}}^2 = \phi(AA^\ast).
$$
The $\Delta$ was called by Araki the \emph{relative modular
operator} \cite{Araki} of $\phi$ and $\omega$ and it is usually
denoted by $\Delta(\phi/\omega)$ or $\Delta_{\phi,\omega}$.
Equivalently, $\Delta_{\phi,\omega}$ is obtained from the polar
decomposition of the \emph{closure} $S_{\phi,\omega}$ of the
conjugate linear operator
$$
A\ket{\Omega}\mapsto A^\ast\ket{\Phi},
$$
where $\ket{\Phi}$ is the vector representative of $\phi$ from $\cP$.
Namely,
$$
S_{\phi,\omega} = J\Delta^{\frac12}_{\phi,\omega}.
$$
The operators $J,\Delta_{\omega,\omega}$ and $\sigma^\omega_t$ are
the standard ingredients of the Tomita-Takesaki modular theory with
respect to $\omega$ or $\ket{\Omega}$. The modular group of $\omega$ is a
one-parameter group of automorphisms of $\cM$ and it looks like
\begin{eqnarray}
\sigma^\omega_t(A) =
\Delta^{\mathrm{i}t}_{\omega,\omega}A\Delta^{-\mathrm{i}t}_{\omega,\omega}.
\end{eqnarray}
Another \emph{Radon-Nikodym derivative-like} object for comparison
of two states is the \emph{Radon-Nikodym cocyle} discovered by
Connes \cite{Connes}. If $\phi$ is a faithful normal state, then
\begin{eqnarray}
[D\phi,D\omega]_t \defeq
\Delta^{\mathrm{i}t}_{\phi,\omega}\Delta^{-\mathrm{i}t}_{\omega,\omega}\equiv
U_t
\end{eqnarray}
is a $\sigma^\omega_t$-cocycle and
\begin{eqnarray}
\sigma^\phi_t = U_t \sigma^\omega_t U^\ast_t.
\end{eqnarray}

\subsection{Functional calculus for a class of super-operators}

We introduce two linear super-operators \cite{Jencova} on the space
$M_d(\complex)$ of $d\times d$ matrices. Left multiplication by $A$
is denoted by $\mathbb{L}_A$ and defined as
$$
\mathbb{L}_A(X) \defeq AX;
$$
right multiplication by $B$ is denoted $\mathbb{R}_B$ and defined as
$$
\mathbb{R}_B(X) \defeq XB.
$$
These super-operators are associated with the relative modular
operator
$$
\Delta_{A,B} = \mathbb{L}_A\mathbb{R}^{-1}_B
$$
introduced by Araki in a far more general context. They have the
following properties:
\begin{enumerate}[(i)]
\item The super-operators $\mathbb{L}_A,\mathbb{R}_B$ commute, i.e. $[\mathbb{L}_A,\mathbb{R}_B]=0$ since
$$
\mathbb{L}_A\mathbb{R}_B(X) = AXB = \mathbb{R}_B\mathbb{L}_A(X)
$$
even when $A$ and $B$ do not commute, i.e. $[A,B]\neq0$.
\item $\mathbb{L}_A$ and $\mathbb{R}_A$ are invertible if and only
if $A$ is non-singular, in which case
$$
\mathbb{L}^{-1}_A = \mathbb{L}_{A^{-1}}\quad\text{and}\quad
\mathbb{R}^{-1}_A = \mathbb{R}_{A^{-1}}.
$$
\item When $A$ is self-adjoint, $\mathbb{L}_A$ and $\mathbb{R}_A$ are both self-adjoint with respect to the
Hilbert-Schmidt inner product $\inner{A}{B}_{\bH\bS} \defeq
\Tr{A^\ast B}$.
\item When $A\geqslant 0$, the super-operators $\mathbb{L}_A$ and
$\mathbb{R}_A$ are positive semi-definite, i.e.
$$
\inner{X}{\mathbb{L}_A(X)}_{\bH\bS} = \Tr{X^\ast AX}\geqslant0,\quad
\inner{X}{\mathbb{R}_A(X)}_{\bH\bS} = \Tr{X^\ast XA} =
\Tr{XAX^\ast}\geqslant0.
$$
\item When $A\geqslant0$, then
$$
(\mathbb{L}_A)^\alpha = \mathbb{L}_{A^\alpha},\quad
(\mathbb{R}_A)^\alpha = \mathbb{R}_{A^\alpha}
$$
for all $\alpha\geqslant0$. If $A>0$, this extends to all real
$\alpha$. More generally,
$$
f(\mathbb{L}_A) = \mathbb{L}_{f(A)}
$$
for all $f:(0,+\infty)\to(-\infty,+\infty)$.
\end{enumerate}

\subsection{Version of super-operator representation}

Suppose that $\mathbf{\Omega}$ and $\mathbf{\Phi}$ are
\emph{separating} and \emph{cyclic} vectors, induced by faithful
normal states $\omega$ and $\phi$, respectively, in $\cC_2$ for
$\cA_l$. Then there exist two \emph{non-singular} density operators
$D_\omega,D_\phi\in\cC_2$ such that
$$
\mathbf{\Omega} = D_\omega^{\frac12},\quad \mathbf{\Phi} =
D_\phi^{\frac12},
$$
According the Araki relative modular theory, we have that for any
$X_l\in\cA_l$ and $Y_r\in\cA_r$,
\begin{eqnarray}
\left\{\begin{array}{ccc}
         S_{\phi,\omega}(X_l\mathbf{\Omega}) &=& X^\ast_l\mathbf{\Phi}, \\
         F_{\phi,\omega}(Y_r\mathbf{\Omega}) &=&
         Y^\ast_r\mathbf{\Phi}.
       \end{array}
\right.
\end{eqnarray}
Both expressions are equivalent to
\begin{eqnarray}
\left\{\begin{array}{ccc}
S_{\phi,\omega}\Pa{XD_\omega^{\frac12}} &=& X^\ast D_\phi^{\frac12}, \\
F_{\phi,\omega}\Pa{D_\omega^{\frac12}Y} &=& D_\phi^{\frac12}Y^\ast,
\end{array}
\right.
\end{eqnarray}
for any $X,Y\in\lin{\sH}$. Thus if the dimension of the underlying
Hilbert space $\sH$ satisfies that $\dim(\sH)<+\infty$, then
\begin{eqnarray}
\left\{\begin{array}{ccc}
S_{\phi,\omega}(A) &=& D_\omega^{-\frac12}A^\ast D_\phi^{\frac12}, \\
F_{\phi,\omega}(B) &=& D_\phi^{\frac12}B^\ast D_\omega^{-\frac12},
\end{array}
\right.
\end{eqnarray}
for any $A,B\in\cC_2$.
$$
\Delta_{\phi,\omega} = F_{\phi,\omega}S_{\phi,\omega} = D_\phi
\boxtimes D^{-1}_\omega,
$$
which implies that
\begin{eqnarray}
\left\{\begin{array}{rcl}
J_{\phi,\omega}X &=& X^\ast,\\
\Delta^{\frac12}_{\phi,\omega}\mathbf{\Omega} &=&
\mathbf{\Phi},\\\Delta^{\mathrm{i}t}_{\phi,\omega} &=&
D^{\mathrm{i}t}_\phi \boxtimes D^{-{\mathrm{i}t}}_\omega.
\end{array}
\right.
\end{eqnarray}

\subsection{Version of vector representation}

Suppose that $\ket{\Omega}$ and $\ket{\Phi}$ are separating and cyclic vectors,
induced by faithful normal states $\omega$ and $\phi$, respectively,
in $\cH\equiv\cH_d\ot\cH_d$ for $\pi(\cM)\equiv \cM\ot\I_d$ with
$\cM = \lin{\cH_d}$. Then there exist two non-singular density
operators $D_\omega,D_\phi\in\lin{\cH_d}$ such that their
purifications are $\ket{\Omega} = \col{D_\omega^{\frac12}}$ and $\ket{\Phi} =
\col{D_\phi^{\frac12}}$. According the Araki relative modular
theory, we have that for any $X,Y\in\lin{\cH_d}$,
\begin{eqnarray}
\left\{\begin{array}{ccc}
S_{\phi,\omega}(X\ot\I_d)\ket{\Omega} &=& (X^\ast\ot\I_d)\ket{\Phi}, \\
F_{\phi,\omega}(\I_d\ot Y)\ket{\Omega} &=& (\I_d\ot Y^\ast)\ket{\Phi}.
\end{array}
\right.
\end{eqnarray}
Both expressions are equivalent to
\begin{eqnarray}
\left\{\begin{array}{ccc}
S_{\phi,\omega}\col{XD_\omega^{\frac12}} &=& \col{X^\ast D_\phi^{\frac12}}, \\
F_{\phi,\omega}\col{D_\omega^{\frac12}Y} &=&
\col{D_\phi^{\frac12}Y^\ast},
\end{array}
\right.
\end{eqnarray}
for any $X,Y\in\lin{\cH_d}$. Thus
\begin{eqnarray}
\left\{\begin{array}{ccc}
S_{\phi,\omega}\col{X} &=& \col{D_\omega^{-\frac12}X^\ast D_\phi^{\frac12}}, \\
F_{\phi,\omega}\col{Y} &=& \col{D_\phi^{\frac12}Y^\ast
D_\omega^{-\frac12}},
\end{array}
\right.
\end{eqnarray}
for any $X,Y\in\lin{\cH_d}$.
$$
\Delta_{\phi,\omega} = FS = D_\phi \ot \Pa{D^{-1}_\omega}^\t,
$$
which implies that
\begin{eqnarray}
\left\{\begin{array}{rcl}
J_{\phi,\omega}\col{X} &=& \col{X^\ast},\\
\Delta^{\frac12}_{\phi,\omega}\ket{\Omega} &=&
\ket{\Phi},\\
\Delta^{\mathrm{i}t}_{\phi,\omega} &=& D^{\mathrm{i}t}_\phi \ot
\Pa{D^{-{\mathrm{i}t}}_\omega}^\t.
\end{array}
\right.
\end{eqnarray}

\section{Specific form of natural positive cone}

Let $\cH_d$ be a $d$-dimensional complex Hilbert space. Consider a
von Neumann algebra $\cM \equiv \lin{\cH_d}$. A faithful
representation of von Neumann algebra $\cM$ on a Hilbert space
$\cH\equiv\cH_d\ot\cH_d$ is defined by the following map:
$$
\pi: X\longmapsto \pi(X) = X\ot\I_d.
$$
$\ket{\Omega} \defeq \col{\I_d}$ is a separating and cyclic vector in
$\cH$ for von Neumann algebra $\pi(\cM) \equiv \lin{\cH_d}\ot\I_d$.
Thus von Neumann algebra $\cM$ have a standard representation
$(\pi(\cM),\cH,\ket{\Omega})$.

According to the definition of the natural positive cone $\cP$
associated with the pair $(\pi(\cM),\ket{\Omega})$ is the closure of the
set:
$$
\Set{\pi(M)j(\pi(M))\ket{\Omega}: M\in\cM},
$$
where $j: \pi(\cM)\longmapsto \pi(\cM)'$ is the anti-linear
$\ast$-isomorphism defined by
$$
j(\pi(M)) \defeq J\pi(M)J, \quad \forall M\in\cM.
$$
More concretely,
\begin{eqnarray*}
\pi(M)j(\pi(M))\ket{\Omega} &=& (M\ot \I_d)J(M\ot \I_d)J\col{\I_d}\\
&=& (M\ot \I_d)J(M\ot \I_d)\col{\I_d}\\
&=& (M\ot \I_d)J\col{M} = (M\ot \I_d)\col{M^\ast}\\
&=& \col{MM^\ast},
\end{eqnarray*}
which indicate that
$$
\cP = \Br{\Set{\col{MM^\ast}: M\in\cM}} = \Br{\col{\cM^+}} =
\col{\cM^+}.
$$
For any $\ket{\xi}\in\cP$, there exists an element $X\in\cM^+$ such that
$\ket{\xi} = \col{X}$, thus $J\col{X} = \col{X^\ast} = \col{X}$ since
$X=X^\ast$. Therefore $J\ket{\xi} = \ket{\xi}$. For any $\col{NN^\ast}\in\cP$
for some $N\in\cM$, we have
$$
\pi(M)j(\pi(M))\col{NN^\ast} = \col{MNN^\ast M^\ast} =
\col{(MN)(MN)^\ast}\in\cP.
$$

$\ket{\xi},\ket{\eta}$ are any given vectors in $\cP$. There exist two elements
$X,Y\in\cM^+$ such that $\ket{\xi} = \col{X}$ and $\ket{\eta}=\col{Y}$. Then
$$
\iinner{\xi}{\eta} = \Inner{\col{X}}{\col{Y}} = \Inner{X}{Y}_{\bH\bS}
= \Tr{XY}\geqslant0
$$
since $X,Y\geqslant0$. Thus $\cP$ is a self-dual cone. If
$Z\in\cM^+$ such that $\col{Z}\in\cP\cap(-\cP)$, then
$\col{Z}\in\cP$ and $\col{-Z}\in\cP$, which implies that
$-Z,Z\geqslant0$, i.e. $Z=0\Longleftrightarrow\col{Z}=0$. Therefore
$\cP\cap(-\cP)=\set{0}$.

If $\ket{\zeta}$ satisfies that $J\ket{\zeta}=\ket{\zeta}$, then there is an element
$T\in\cM$ such that $\ket{\zeta}=\col{T}$ and $\col{T}=J\col{T}$, which is
equivalent to the following formula:
$$
\col{T}=\col{T^\ast}\Longleftrightarrow T=T^\ast.
$$
Now by employing the Jordan decomposition of operators, we have
$$
T = T^+ - T^-,
$$
where $T^+,T^-\in\cM^+$ and $T^+ T^-=0$. This means that
$$
\ket{\zeta} = \col{T} = \col{T^+} - \col{T^-}
$$
and $\Inner{\col{T^+}}{\col{T^-}}=\Inner{T^+}{T^-}_{\bH\bS}=\Tr{T^+
T^-}=0$. Denote $\ket{\zeta_1} = \col{T^+}$ and $\ket{\zeta_2} = \col{T^-}$, then
$\ket{\zeta} = \ket{\zeta_1} - \ket{\zeta_2}$ with $\ket{\zeta_1}\bot\ket{\zeta_2}$.

For any $\ket{\varsigma}\in\cH$, there is an element $Y_{\ket{\varsigma}}\in\cM$
such that $\ket{\varsigma} = \col{Y_{\ket{\varsigma}}}$. Now since $Y_{\ket{\varsigma}}$ can
be represented by at most four positive element in
$H^+,H^-,K^+,K^-\in\cM^+$ as follows:
$$
Y_{\ket{\varsigma}} = (H^+ - H^-) + \mathrm{i}(K^+ - K^-),
$$
i.e.,
$$
\col{Y_{\ket{\varsigma}}} = \col{H^+} - \col{H^-} + \mathrm{i}\col{K^+} -
\col{K^-}.
$$
Setting
$\col{H^+}=\ket{\varsigma_1},\col{H^-}=\ket{\varsigma_2},\col{K^+}=\ket{\varsigma_3}$
and $\col{K^-}=\ket{\varsigma_4}$, we have
$$
\ket{\varsigma} = \ket{\varsigma_1} - \ket{\varsigma_2} + \mathrm{i}\ket{\varsigma_3} -
\mathrm{i}\ket{\varsigma_4}.
$$
Clearly, $\ket{\varsigma_1}, \ket{\varsigma_2}, \ket{\varsigma_3}, \ket{\varsigma_4}\in\cP$.
Finally, $\cH$ indeed is linearly spanned by $\cP$.

Since any normal positive form $\omega\in\cM_{\ast,+}$, it follows
that $\ket{\Omega} = \col{D^{\frac12}_\omega}$ is the vector representative
of $\ket{\omega}$ in $\cP$: $\omega(M) = \Innerm{\Omega}{\pi(M)}{\Omega}$.

Given any normal positive forms $\omega_{\ket{\xi}}$ and $\omega_{\ket{\eta}}$ for
$\ket{\xi},\ket{\eta}\in\cP$, thus we have $\ket{\xi} = \col{X}$ and $\ket{\eta} = \col{Y}$ for
$X,Y\in\cM^+$:
\begin{eqnarray}
\norm{\ket{\xi} - \ket{\eta}}^2 &=& \iinner{\xi-\eta}{\xi-\eta} =
\Inner{\col{X-Y}}{\col{X-Y}}_{\bH\bS} \nonumber\\
&=& \norm{X-Y}^2_{\bH\bS},\\
\norm{\ket{\xi} - \ket{\eta}}\norm{\ket{\xi} + \ket{\eta}} &=&
\norm{X-Y}_{\bH\bS}\norm{X+Y}_{\bH\bS},\\
\norm{\omega_{\ket{\xi}} - \omega_{\ket{\eta}}} &=& \norm{X^2-Y^2}_1.
\end{eqnarray}
By the result in Theorem~\ref{th:positive-cone},  it follows that
\begin{eqnarray}\label{eq-modular-ineq}
\norm{\ket{\xi} - \ket{\eta}}^2 \leqslant \norm{\omega_{\ket{\xi}} -
\omega_{\ket{\eta}}} \leqslant \norm{\ket{\xi} -
\ket{\eta}}\norm{\ket{\xi} + \ket{\eta}}.
\end{eqnarray}
Thus we arrived at the following inequality (a special case of
Powers-St\"{o}rmer's inequality):
\begin{thrm}[\cite{Watrous}]
It holds that
\begin{eqnarray}\label{eq-stormer}
\norm{X-Y}^2_{\bH\bS}\leqslant
\norm{X^2-Y^2}_1\leqslant\norm{X-Y}_{\bH\bS}\norm{X+Y}_{\bH\bS},
\end{eqnarray}
where $X,Y$ are Hermitian matrices.
\end{thrm}
In what follows, we can first show that Eq.~\eqref{eq-stormer} is
true, and then Eq.~\eqref{eq-modular-ineq} is a direct consequence
of Eq.~\eqref{eq-stormer}.
\begin{proof}
Since
$$
X^2 - Y^2 = \frac12\Br{(X-Y)(X+Y)+(X+Y)(X-Y)},
$$
it follows that
\begin{eqnarray*}
\norm{X^2-Y^2}_1 &\leqslant& \frac12\norm{(X-Y)(X+Y)}_1 +
\frac12\norm{(X+Y)(X-Y)}_1.
\end{eqnarray*}
By employing Schwarz inequality, we have
$$
\Set{\begin{array}{c}
         \norm{(X-Y)(X+Y)}_1 \\
         \norm{(X+Y)(X-Y)}_1
       \end{array}
}\leqslant\norm{X-Y}_{\bH\bS}\norm{X+Y}_{\bH\bS}.
$$
Thus
$$
\norm{X^2-Y^2}_1\leqslant\norm{X-Y}_{\bH\bS}\norm{X+Y}_{\bH\bS}.
$$
Next, we write the spectral decomposition of $X-Y$ as follows:
$$
X-Y = \sum_i\lambda_i \out{u_i}{u_i}.
$$
Then
$$
\abs{X-Y} = \sum_i\abs{\lambda_i}\out{u_i}{u_i},\quad
\Innerm{u_i}{X-Y}{u_i} = \lambda_i.
$$
Denote
$$
U \defeq \sum_i \sign(\lambda_i)\out{u_i}{u_i}.
$$
Thus $[U,X-Y]=0$ and $\abs{X-Y} = U(X-Y) = (X-Y)U$. Now by the
triangle inequality, we have
\begin{eqnarray}
\abs{\lambda_i} &=& \abs{\Innerm{u_i}{X-Y}{u_i}}=
\abs{\Innerm{u_i}{X}{u_i} - \Innerm{u_i}{Y}{u_i}} \nonumber\\
&\leqslant& \Innerm{u_i}{X}{u_i} + \Innerm{u_i}{Y}{u_i}\nonumber\\
&\leqslant&\Innerm{u_i}{X+Y}{u_i}.
\end{eqnarray}
Therefore
\begin{eqnarray*}
\norm{X^2 - Y^2}_1 &\geqslant& \abs{\Tr{\Br{X^2-Y^2} U}}\\
&=& \abs{\frac12\Tr{(X-Y)(X+Y)U} + \frac12\Tr{(X+Y)(X-Y)U}}\\
&=& \frac12\abs{\Tr{\abs{X-Y}(X+Y)} + \Tr{(X+Y)\abs{X-Y}}}\\
&=& \Tr{\abs{X-Y}(X+Y)} = \sum_i
\abs{\lambda_i}\Tr{\out{u_i}{u_i}(X+Y)}\\
&=& \sum_i \abs{\lambda_i}\Innerm{u_i}{X+Y}{u_i}\geqslant \sum_i
\abs{\lambda_i}^2=\norm{X-Y}^2_{\bH\bS}.
\end{eqnarray*}
The desired inequality is obtained.
\end{proof}

Powers-St\"{o}rmer's inequality asserts that for $s\in[0,1]$, the
following inequality
\begin{eqnarray}
2\Tr{A^s B^{1-s}} \geqslant \Tr{A+B-\abs{A-B}}
\end{eqnarray}
holds for any pair of positive matrices $A, B$. This is a key
inequality to prove the upper bound of Chernoff bound, in quantum
hypothesis testing theory \cite{Audenaert}. This inequality was
first proven by Audenaert, using an integral representation of the
function $t^s$. After that, Ozawa gave a much simpler proof for the
same inequality, using fact \cite{Jaksic} that $f(t)=t^s,
t\in[0,+\infty)$ is an operator monotone function for $s\in[0,1]$.

\begin{thrm}[Powers-St\"{o}rmer inequality \cite{Powers}] For positive compact operators
$A,B$, the following inequality is valid:
$$
\norm{\sqrt{A} - \sqrt{B}}^2_2 \leqslant \norm{A-B}_1.
$$
\end{thrm}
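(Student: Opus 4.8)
The plan is to recognize the Powers–St\"{o}rmer inequality as nothing more than the left-hand half of the preceding theorem, read off under the substitution $X=\sqrt{A}$ and $Y=\sqrt{B}$. First I would set $X\defeq\sqrt{A}$ and $Y\defeq\sqrt{B}$, the positive square roots furnished by the continuous functional calculus; since $A,B$ are positive compact operators and $t\mapsto\sqrt{t}$ is continuous on $[0,+\infty)$ with $\sqrt{0}=0$, both $X$ and $Y$ are again positive compact operators, and they satisfy $X^2=A$, $Y^2=B$, so that $X^2-Y^2=A-B$. Recalling that the Hilbert--Schmidt norm $\norm{\cdot}_{\bH\bS}$ is precisely the Schatten $2$-norm $\norm{\cdot}_2$, the inequality $\norm{X-Y}^2_{\bH\bS}\leqslant\norm{X^2-Y^2}_1$ established in the previous theorem reads exactly $\norm{\sqrt{A}-\sqrt{B}}^2_2\leqslant\norm{A-B}_1$, which is the claim.

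The only genuine work lies in confirming that the previous theorem, whose proof is phrased for matrices, survives the passage to compact operators; this is the step I expect to be the main obstacle. The difference $X-Y=\sqrt{A}-\sqrt{B}$ is a compact self-adjoint operator, hence admits a discrete spectral decomposition $X-Y=\sum_i\lambda_i\out{u_i}{u_i}$ in which the eigenvalues accumulate only at $0$ and the $\ket{u_i}$ form an orthonormal system. With this in hand I would define $U\defeq\sum_i\sign(\lambda_i)\out{u_i}{u_i}$ exactly as before, obtaining $\abs{X-Y}=U(X-Y)=(X-Y)U$, and then rerun the chain of estimates
\begin{eqnarray*}
\norm{X^2-Y^2}_1 &\geqslant& \abs{\Tr{(X^2-Y^2)U}} = \Tr{\abs{X-Y}(X+Y)}\\
&=& \sum_i\abs{\lambda_i}\Innerm{u_i}{X+Y}{u_i} \geqslant \sum_i\abs{\lambda_i}^2 = \norm{X-Y}^2_2,
\end{eqnarray*}
where the termwise bound $\abs{\lambda_i}=\abs{\Innerm{u_i}{X-Y}{u_i}}\leqslant\Innerm{u_i}{X+Y}{u_i}$ uses only the positivity of $X$ and $Y$ and so is insensitive to the dimension.

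The point requiring care is convergence. If $\norm{A-B}_1=+\infty$ there is nothing to prove, so I may assume $A-B$ is trace class; then the spectral sum for $\abs{X-Y}$ converges in trace norm, each trace above is absolutely convergent, and the cyclicity and linearity manipulations used to rewrite $\Tr{(X^2-Y^2)U}$ are legitimate. Alternatively, one could sidestep the infinite-dimensional bookkeeping by truncating $A$ and $B$ to their spectral projections onto the top $n$ eigenvalues, applying the finite-dimensional theorem, and passing to the limit via continuity of the square root together with lower semicontinuity of the norms; I would resort to this approximation route only if the direct convergence argument proved awkward. I would also remark that the same inequality is the $s=\tfrac12$ instance of the Audenaert bound $2\Tr{A^sB^{1-s}}\geqslant\Tr{A+B-\abs{A-B}}$, since $\norm{\sqrt{A}-\sqrt{B}}^2_2=\Tr{A}+\Tr{B}-2\Tr{\sqrt{A}\sqrt{B}}$ by cyclicity of the trace.
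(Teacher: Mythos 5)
Your proposal is correct and matches the paper's intended argument: the paper states the Powers--St\"{o}rmer inequality without a separate proof precisely because it is the left half of the immediately preceding theorem, $\norm{X-Y}^2_{\bH\bS}\leqslant\norm{X^2-Y^2}_1$, read with $X=\sqrt{A}$ and $Y=\sqrt{B}$, which is exactly your reduction. Your additional care about the compact-operator setting (discreteness of the spectrum of $\sqrt{A}-\sqrt{B}$, the trivial case $\norm{A-B}_1=+\infty$, and convergence of the traces) goes beyond what the paper records but is sound and, if anything, repairs the gap between the paper's matrix-level proof and the operator-level statement.
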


\begin{thrm}
Let $A,B$ be semi-definite positive matrices in $M_n(\complex)$.
Then
$$
2\Tr{B^sA^{1-s}}\geqslant \Tr{A+B-\abs{A-B}}
$$
holds for any $s\in[0,1]$.
\end{thrm}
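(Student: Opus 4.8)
The plan is to follow Ozawa's route: reduce the claimed inequality to a single trace inequality, and then attack that with the operator monotonicity of $t\mapsto t^s$ on $[0,+\infty)$ for $s\in[0,1]$ (which the excerpt flags as the essential input).

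First I would strip the absolute value from the right-hand side. Writing $C=A-B$ and using the Jordan decomposition $C=C^+-C^-$ with $C^+,C^-\geqslant0$, $C^+C^-=0$ and $\abs{C}=C^++C^-$, let $P$ be the spectral projection of $A-B$ onto $(0,+\infty)$ (the support of $C^+$) and $Q=\I-P$. Then $P-Q=\sign(A-B)$, so that $\Tr{\abs{A-B}}=\Tr{(A-B)(P-Q)}$. Since $\I-(P-Q)=2Q$ and $\I+(P-Q)=2P$, this yields the identity
$$\Tr{A+B-\abs{A-B}}=2\Tr{AQ}+2\Tr{BP}.$$
Hence the theorem is equivalent to the single inequality
$$\Tr{AQ}+\Tr{BP}\leqslant\Tr{B^sA^{1-s}},$$
and, because $\Tr{AQ}=\Tr{QAQ}$ and $\Tr{BP}=\Tr{PBP}$, the left side is exactly the trace of the block-diagonal ``smaller'' operator $QAQ+PBP$ (note that $PAP\geqslant PBP$ while $QBQ\geqslant QAQ$, by definition of $P$). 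A quick check on the commuting/scalar case confirms both the identity and the direction of the target.

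The analytic heart is the second step: proving $\Tr{AQ}+\Tr{BP}\leqslant\Tr{B^sA^{1-s}}$ via operator monotonicity. The naive idea that ``$A\geqslant B$ on $\mathrm{ran}\,P$ forces $A^{1-s}\geqslant B^{1-s}$ there'' fails, because the order relation holds only for the compressions $PAP\geqslant PBP$, not globally, so one cannot apply $t\mapsto t^{1-s}$ blockwise. Instead I would use the operator concavity of $t\mapsto t^s$ (equivalent to operator monotonicity by L\"{o}wner's theorem) through the Hansen--Pedersen--Jensen inequality: for a contraction $V$ and operator-concave $g$ with $g(0)\geqslant0$ one has $g(V^\ast XV)\geqslant V^\ast g(X)V$. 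Applying this with the projections $P,Q$ and the exponents $s$ and $1-s$ converts the two compressed order relations into trace inequalities that I would then assemble into the bound on $\Tr{B^sA^{1-s}}$.

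The main obstacle is precisely this assembly: packaging the one-sided facts $PAP\geqslant PBP$ and $QBQ\geqslant QAQ$ into the single bilinear quantity $\Tr{B^sA^{1-s}}$ without losing control of the off-diagonal cross terms $PXQ$ that the compressions discard. The endpoints are easy --- $s=0$ and $s=1$ reduce to $\Tr{(A-B)P}\geqslant0$ and $\Tr{(B-A)Q}\geqslant0$ --- but the interior $s$ is the genuine difficulty, since $s\mapsto\Tr{B^sA^{1-s}}$ is already convex for commuting $A,B$, so no soft interpolation/concavity-in-$s$ argument can suffice and the operator-concavity input is truly needed. As a consistency check I would verify that $s=\tfrac12$ collapses to $\norm{\sqrt A-\sqrt B}_2^2\leqslant\norm{A-B}_1$, the Powers--St\"{o}rmer inequality stated just above, whose elementary $\sign$-operator proof is the $s=\tfrac12$ shadow of the general argument.
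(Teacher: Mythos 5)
Your opening reduction is correct, and it is in fact the same reduction the paper makes, only phrased with projections: since $\Tr{AQ}+\Tr{BP}=\Tr{A}-\Tr{(A-B)P}=\Tr{A}-\Tr{(A-B)_+}$, your target inequality $\Tr{AQ}+\Tr{BP}\leqslant\Tr{B^{s}A^{1-s}}$ is identical to the paper's reduced form $\Tr{A}-\Tr{B^{s}A^{1-s}}\leqslant\Tr{(A-B)_+}$. The problem is that you stop exactly where the proof begins. The whole content of Ozawa's argument is the choice of the auxiliary operator
$$
M \defeq B+(A-B)_+ = A+(A-B)_-,
$$
which dominates \emph{both} $A$ and $B$ in the operator order \emph{globally}, not merely after compression by $P$ and $Q$. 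That single observation is what makes operator monotonicity of $t\mapsto t^{s}$ usable without any block decomposition: one writes $\Tr{A}-\Tr{B^{s}A^{1-s}}=\Tr{(A^{s}-B^{s})A^{1-s}}$, bounds $A^{s}\leqslant M^{s}$ against the positive factor $A^{1-s}$, then bounds $A^{1-s}\leqslant M^{1-s}$ against the positive factor $M^{s}-B^{s}$, and finally uses $\Tr{B^{s}M^{1-s}}\geqslant\Tr{B}$ to land on $\Tr{M}-\Tr{B}=\Tr{(A-B)_+}$. No off-diagonal terms ever arise because no projections are introduced.

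Your proposed substitute for this step --- Hansen--Pedersen--Jensen applied to the compressions $PAP\geqslant PBP$ and $QBQ\geqslant QAQ$ --- is not carried out, and you yourself identify the fatal issue: the compressed order relations say nothing about the blocks $PXQ$, and you give no mechanism for reassembling the bilinear quantity $\Tr{B^{s}A^{1-s}}$ from compressed data. (This is not a presentational gap; the naive blockwise application genuinely fails, which is precisely why the standard proofs avoid projections and use the majorant $M$ instead.) As written, the proposal establishes the easy equivalence with the reduced inequality and the trivial endpoints $s\in\{0,1\}$, but leaves the interior case --- which is the theorem --- as an acknowledged open step. To repair it, replace the projection scheme by the operator $B+(A-B)_+$ and run the three monotonicity estimates above.
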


\begin{proof}(Ozawa, unpublished) For $X$ self-adjoint, $X_{\pm}$
denotes its positive/negative part. Decomposing $A-B = (A-B)_+ -
(A-B)_-$, one gets
$$
\frac12\Tr{A+B-\abs{A-B}} = \Tr{A} - \Tr{(A-B)_+}.
$$
Now the original inequality is equivalent to
\begin{eqnarray}
\Tr{A} - \Tr{B^sA^{1-s}} \leqslant \Tr{(A-B)_+}.
\end{eqnarray}
Note that
$$
B+(A-B)_+\geqslant B\quad\text{and}\quad B+(A-B)_+ = A +
(A-B)_-\geqslant A.
$$
Since, for $s\in[0,1]$, the function $x\mapsto x^s$ is operator
monotone, i.e. $X\leqslant Y\Longrightarrow X^s \leqslant Y^s$ for
any positive matrices $X,Y$, we can write
\begin{eqnarray*}
\Tr{A} - \Tr{B^s A^{1-s}} &=& \Tr{(A^s - B^s)A^{1-s}}\\
&\leqslant& \Tr{((B+(A-B)_+)^s - B^s)A^{1-s}}\\
&\leqslant& \Tr{((B+(A-B)_+)^s - B^s)(B+(A-B)_+)^{1-s}}\\
&=& \Tr{B + (A-B)_+} - \Tr{B^s(B+(A-B)_+)^{1-s}}\\
&\leqslant& \Tr{B+(A-B)_+}.
\end{eqnarray*}
\end{proof}

\begin{thrm}[Ogata \cite{Ogata}]
Let $\phi_1,\phi_2$ are normal positive linear functionals on a von
Neumann algebra $\cM$ for which the vector representatives in the
natural positive cone $\cP$ are $\ket{\Phi_1}$ and $\ket{\Phi_2}$, respectively.
Then we have that, $\forall s\in[0,1]$,
\begin{eqnarray}
2\norm{\Delta^{\frac s2}_{\phi_2,\phi_1}\ket{\Phi_1}}^2\geqslant
\phi_1(\I) + \phi_2(\I) - \abs{\phi_1-\phi_2}(\I).
\end{eqnarray}
The equality holds if and only if
$$
\phi_2 = (\phi_2 - \phi_1)_+ + \psi\quad \text{and}\quad \phi_1 =
(\phi_2 - \phi_1)_- + \psi
$$
for some normal positive linear functional $\psi$ on $\cM$ whose
support is orthogonal to the support of $\abs{\phi_2-\phi_1}$.
\end{thrm}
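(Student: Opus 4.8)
The plan is to unwind the explicit form of the relative modular operator supplied by the super-operator representation and thereby reduce the assertion to the matrix Powers-St\"{o}rmer inequality already established above (Ozawa's theorem, $2\Tr{B^sA^{1-s}}\geqslant\Tr{A+B-\abs{A-B}}$). I work in the finite-dimensional standard form, so that $\cM=\lin{\sH}$ is represented on $\cC_2$, the normal positive functionals $\phi_1,\phi_2$ have density operators $D_{\phi_1},D_{\phi_2}$, and their vector representatives in the cone $\cP$ are $\ket{\Phi_1}=D_{\phi_1}^{\frac12}$ and $\ket{\Phi_2}=D_{\phi_2}^{\frac12}$.

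First I would compute the left-hand side. By the super-operator representation, $\Delta_{\phi_2,\phi_1}=D_{\phi_2}\boxtimes D_{\phi_1}^{-1}=\mathbb{L}_{D_{\phi_2}}\mathbb{R}_{D_{\phi_1}}^{-1}$; since $\mathbb{L}$ and $\mathbb{R}$ commute and are positive, the functional calculus of the preceding subsection gives $\Delta^{\frac s2}_{\phi_2,\phi_1}=D_{\phi_2}^{\frac s2}\boxtimes D_{\phi_1}^{-\frac s2}$. Applying this to $\ket{\Phi_1}=D_{\phi_1}^{\frac12}$ yields
$$
\Delta^{\frac s2}_{\phi_2,\phi_1}\ket{\Phi_1}=D_{\phi_2}^{\frac s2}D_{\phi_1}^{\frac12}D_{\phi_1}^{-\frac s2}=D_{\phi_2}^{\frac s2}D_{\phi_1}^{\frac{1-s}{2}},
$$
so that, by cyclicity of the trace,
$$
\norm{\Delta^{\frac s2}_{\phi_2,\phi_1}\ket{\Phi_1}}^2=\Tr{D_{\phi_1}^{\frac{1-s}{2}}D_{\phi_2}^{s}D_{\phi_1}^{\frac{1-s}{2}}}=\Tr{D_{\phi_2}^{s}D_{\phi_1}^{1-s}}.
$$
For the right-hand side, $\phi_i(\I)=\Tr{D_{\phi_i}}$, while the Hermitian functional $\phi_1-\phi_2$ has density operator $D_{\phi_1}-D_{\phi_2}$, whose spectral Jordan decomposition induces the Jordan decomposition of $\phi_1-\phi_2$; hence $\abs{\phi_1-\phi_2}(\I)=\Tr{\abs{D_{\phi_1}-D_{\phi_2}}}$. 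Writing $A=D_{\phi_1}$ and $B=D_{\phi_2}$, the claimed inequality is therefore literally $2\Tr{B^{s}A^{1-s}}\geqslant\Tr{A+B-\abs{A-B}}$, which is exactly the matrix inequality proved above (valid for all positive semidefinite $A,B$, so no faithfulness is needed for the bound itself). This disposes of the inequality.

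For the equality clause I would trace the tightness of the three operator-monotonicity steps in Ozawa's argument. The \emph{if} direction is a direct computation: if $D_{\phi_2}=(D_{\phi_2}-D_{\phi_1})_+ + P$ and $D_{\phi_1}=(D_{\phi_2}-D_{\phi_1})_- + P$ with $P\geqslant0$ and $\supp P$ orthogonal to $\supp\abs{D_{\phi_2}-D_{\phi_1}}$, then the three summands have pairwise orthogonal ranges, whence $B^sA^{1-s}=P$ and $\Tr{B^sA^{1-s}}=\Tr{P}=\Tr{A}-\Tr{(A-B)_+}=\tfrac12\Tr{A+B-\abs{A-B}}$, giving equality; translating $P\leftrightarrow\psi$ recovers the stated decomposition of $\phi_1,\phi_2$. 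The \emph{only if} direction is the delicate part: one must show that saturation of $A^{s}\leqslant(B+(A-B)_+)^{s}$ and of $A^{1-s}\leqslant(B+(A-B)_+)^{1-s}$ against the relevant positive factors, together with $\Tr{B^s(B+(A-B)_+)^{1-s}}=\Tr{B}$, forces precisely this orthogonal-support structure.

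The main obstacle is the passage to a \emph{general} von Neumann algebra, where no trace or density operator exists (the type~III case): there the explicit super-operator formulas break down, and one must instead work with Araki's abstract relative modular operator directly---either through the Haagerup $L^p$-theory cited above, or by approximating $\cM$ by semifinite subalgebras on which the computation applies and then passing to the limit---while $\abs{\phi_1-\phi_2}$ is defined intrinsically from the Jordan decomposition of the Hermitian normal functional $\phi_1-\phi_2$. In the standard form treated here the inequality is an immediate transcription of the matrix statement, so the genuinely new content lies in this limiting argument and in the rigorous equality characterization.
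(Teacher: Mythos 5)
The paper itself does not prove this theorem: it is quoted as Ogata's result, and the only proofs supplied in this section are for the matrix inequality $2\Tr{B^sA^{1-s}}\geqslant\Tr{A+B-\abs{A-B}}$ (Ozawa's argument) and for the Hilbert--Schmidt/trace-norm sandwich preceding it. Measured against that, the part of your proposal that is actually carried out is correct and fits the paper's machinery exactly: in the standard form of $\cM=\lin{\sH}$ with $\dim\sH<+\infty$ you correctly use $\Delta_{\phi_2,\phi_1}=D_{\phi_2}\boxtimes D_{\phi_1}^{-1}$, compute $\Delta^{\frac s2}_{\phi_2,\phi_1}\ket{\Phi_1}=D_{\phi_2}^{\frac s2}D_{\phi_1}^{\frac{1-s}{2}}$, hence $\norm{\Delta^{\frac s2}_{\phi_2,\phi_1}\ket{\Phi_1}}^2=\Tr{D_{\phi_2}^{s}D_{\phi_1}^{1-s}}$, and reduce the inequality to the Ozawa theorem with $A=D_{\phi_1}$, $B=D_{\phi_2}$. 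Your verification of the ``if'' half of the equality clause (pairwise orthogonal supports force $B^sA^{1-s}=P$, whence $\Tr{B^sA^{1-s}}=\Tr{P}=\Tr{A}-\Tr{(A-B)_+}$) is also sound.

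As a proof of the theorem \emph{as stated}, however, the proposal has a genuine gap, which you yourself flag but do not close. The statement concerns an arbitrary von Neumann algebra and Araki's abstract relative modular operator, and the equality characterization is the actual content of Ogata's paper. Your reduction presupposes a trace and density operators, so it covers only the finite-dimensional (at best semifinite) case; the type III case needs either Haagerup $L^p$-theory or Ogata's original argument, which works directly with the spectral resolution of $\Delta_{\phi_2,\phi_1}$ and the Jordan decomposition $\phi_2-\phi_1=(\phi_2-\phi_1)_+-(\phi_2-\phi_1)_-$ at the level of functionals, and you offer only a one-sentence gesture toward ``approximating by semifinite subalgebras.'' Likewise, ``trace the tightness of the three operator-monotonicity steps'' is a plan rather than an argument: saturation of $\Tr{XC}\leqslant\Tr{YC}$ for $X\leqslant Y$, $C\geqslant0$ yields $(Y-X)C=0$, and one must still propagate such relations through both monotonicity steps and the final discarded term $\Tr{B^s(B+(A-B)_+)^{1-s}}\geqslant\Tr{B}$ to extract the orthogonal-support structure of $\psi$. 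So what you have is a correct proof of the inequality in the matrix case plus one implication of the equality statement, not a proof of the theorem.
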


\begin{thrm}[Hoa \cite{Hoa}]
Let $f$ be a $2n$-monotone function on $[0,+\infty)$ such that
$f((0,+\infty))\subseteq (0,+\infty)$. Then for any pair of positive
matrices $A,B\in M_n(\complex)$, we have:
\begin{eqnarray}
2\Tr{\sqrt{f(A)}g(B)\sqrt{f(A)}}\geqslant \Tr{A+B-\abs{A-B}},
\end{eqnarray}
where
$$
g(t) \defeq
\begin{cases}
\frac t{f(t)}, & t\in(0,+\infty),\\
0, & t = 0.
\end{cases}
$$
\end{thrm}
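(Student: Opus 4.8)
The plan is to mirror Ozawa's argument for the scalar-power case (the theorem attributed to N. Ozawa above), with the pair of powers $t^{s}$ and $t^{1-s}$ replaced by the pair $g$ and $f$. The starting identity $f(t)g(t)=t$ on $[0,+\infty)$ (which holds at $t=0$ as well, since $g(0)=0$) yields $f(X)g(X)=X$ for every $X\geqslant0$, and by cyclicity of the trace $\Tr{\sqrt{f(A)}g(B)\sqrt{f(A)}}=\Tr{f(A)g(B)}$. First I would reduce the claim using the same Jordan-decomposition computation recorded in Ozawa's proof, namely $\tfrac12\Tr{A+B-\abs{A-B}}=\Tr{A}-\Tr{(A-B)_+}$, so that the asserted inequality becomes equivalent to
$$
\Tr{A}-\Tr{f(A)g(B)}\leqslant\Tr{(A-B)_+}.
$$

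Next I would introduce the comparison operator $C\defeq B+(A-B)_+=A+(A-B)_-$, which satisfies $C\geqslant A$ and $C\geqslant B$, and run the chain
\begin{eqnarray*}
\Tr{A}-\Tr{f(A)g(B)} &=& \Tr{f(A)\Pa{g(A)-g(B)}}\\
&\leqslant& \Tr{f(A)\Pa{g(C)-g(B)}}\\
&\leqslant& \Tr{f(C)\Pa{g(C)-g(B)}}\\
&=& \Tr{C}-\Tr{f(C)g(B)}\\
&\leqslant& \Tr{C}-\Tr{f(B)g(B)}=\Tr{(A-B)_+}.
\end{eqnarray*}
Each inequality is of the form $\Tr{PN}\leqslant0$ for a positive-semidefinite $P$ and a Hermitian $N$ of the opposite sign (equivalently $\Tr{P^{1/2}NP^{1/2}}\leqslant0$): the first from $g(A)\leqslant g(C)$ with $f(A)\geqslant0$; the second from $f(A)\leqslant f(C)$ with $g(C)-g(B)\geqslant0$; and the last from $f(B)\leqslant f(C)$ with $g(B)\geqslant0$, after which $f(B)g(B)=B$ and $\Tr{C}-\Tr{B}=\Tr{(A-B)_+}$ close the estimate.

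Every inequality in the chain rests on the $n$-monotonicity of $f$ and of $g$ on $[0,+\infty)$, i.e. on $X\leqslant Y$ in $M_n(\complex)$ implying $f(X)\leqslant f(Y)$ and $g(X)\leqslant g(Y)$. For $f$ this is free, since $2n$-monotonicity implies $n$-monotonicity. \textbf{The main obstacle} is the corresponding statement for $g(t)=t/f(t)$, and this is exactly where the hypothesis that $f$ be $2n$-monotone (rather than merely $n$-monotone) is consumed. The key lemma I would isolate is: \emph{if $f$ is $2n$-monotone on $[0,+\infty)$ with $f((0,+\infty))\subseteq(0,+\infty)$, then $g(t)=t/f(t)$ is $n$-monotone.} I expect its proof to be the hard part, and I would attack it through finite-order Loewner theory: $n$-monotonicity of a $C^{1}$ function $h$ is equivalent to positive semidefiniteness of every $n\times n$ Loewner matrix $\Br{\frac{h(\lambda_i)-h(\lambda_j)}{\lambda_i-\lambda_j}}_{i,j}$, and one relates the Loewner matrix of $g$ at $n$ points to the $2n$-point Loewner data of $f$ (via the reciprocal relation $g=t/f$ together with a Schur-complement/factorization argument), so that the positivity guaranteed by $2n$-monotonicity of $f$ forces positivity of the relevant Loewner matrices of $g$. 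With this lemma in hand every step above is justified; taking $f(t)=t^{1-s}$ and $g(t)=t^{s}$, both operator monotone of all orders for $s\in[0,1]$, recovers Ozawa's inequality as the special case.
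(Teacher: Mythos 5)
The paper does not actually prove this theorem: it is stated as a cited result (Hoa et al.), with only the scalar-power special case (Ozawa's argument for $2\Tr{B^sA^{1-s}}\geqslant\Tr{A+B-\abs{A-B}}$) proved in the preceding pages. So there is no in-paper proof to compare against; what you have done is reconstruct, correctly in outline, the argument of the cited reference, which does indeed generalize Ozawa's proof word for word with the pair $(t^{1-s},t^{s})$ replaced by $(f,g)$. Your reduction via $\tfrac12\Tr{A+B-\abs{A-B}}=\Tr{A}-\Tr{(A-B)_+}$, the introduction of $C=B+(A-B)_+=A+(A-B)_-$ with $C\geqslant A$ and $C\geqslant B$, and the four-step trace chain are all sound: each step is of the form $\Tr{PQ}\geqslant0$ for two positive semidefinite matrices, using $f(X)g(X)=X$, the $n$-monotonicity of $f$ (free from $2n$-monotonicity), and the $n$-monotonicity of $g$.

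The one genuine gap is the key lemma you isolate but do not prove: that $g(t)=t/f(t)$ is $n$-monotone whenever $f$ is $2n$-monotone and positive on $(0,+\infty)$. You are right that this is exactly where the hypothesis of $2n$-monotonicity (rather than $n$-monotonicity) is consumed, and your proposed route through finite-order Loewner theory --- relating $n\times n$ Loewner matrices of $g$ to $2n$-point data of $f$ --- is the correct direction; but as written it is a plan, not an argument, and all of the real difficulty of the theorem is concentrated there. Without that lemma the second and third inequalities in your chain ($g(A)\leqslant g(C)$ and $g(B)\leqslant g(C)$) are unjustified. A second, minor point: for $\sqrt{f(A)}$ to make sense when $A$ is singular you need $f(0)\geqslant0$, which should be noted as part of (or derived from) the hypotheses before writing $f(A)\geqslant0$ in the first step of the chain.
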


\begin{thrm}[Hoa \cite{Hoa}]
Let $\tau$ be a tracial functional on a $C^\ast$-algebra $\cA$, $f$
be a strictly positive, operator monotone function on $[0,+\infty)$.
Then for any pair of positive elements $A,B\in\cA$:
\begin{eqnarray}
2\tau\Pa{\sqrt{f(A)}g(B)\sqrt{f(A)}} \geqslant \tau(A+B-\abs{A-B}),
g(t) \defeq t/f(t).
\end{eqnarray}
\end{thrm}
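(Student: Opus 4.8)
The plan is to adapt Ozawa's proof of the Powers--St\"ormer inequality (given above) to the tracial, $C^\ast$-algebraic setting, replacing the trace $\Tr{\cdot}$ throughout by $\tau$ and using only those properties of $\tau$ that survive: linearity, positivity, the trace identity $\tau(XY)=\tau(YX)$, and its consequence that $\tau(XY)\geqslant 0$ whenever $X,Y\geqslant 0$ (since $\tau(XY)=\tau(\sqrt{X}\,Y\sqrt{X})\geqslant 0$). First I would record the functional-calculus identities $f(A)g(A)=A$, $f(B)g(B)=B$, and the analogue for the auxiliary element below, all valid because $f(t)g(t)=t$ with the convention $g(0)=0$; strict positivity of $f$ guarantees that $g$ is well defined and nonnegative on the spectra of $A$ and $B$.

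Next I would carry out Ozawa's reduction. From $A-B=(A-B)_+-(A-B)_-$ and $\abs{A-B}=(A-B)_++(A-B)_-$ one gets $A+B-\abs{A-B}=2A-2(A-B)_+$, so by linearity $\tfrac12\tau(A+B-\abs{A-B})=\tau(A)-\tau((A-B)_+)$. Since $\tau\Pa{\sqrt{f(A)}\,g(B)\sqrt{f(A)}}=\tau(f(A)g(B))$ by the trace identity, the claim becomes $\tau(A)-\tau(f(A)g(B))\leqslant\tau((A-B)_+)$. Setting $C\defeq B+(A-B)_+$, one has $C\geqslant B$ and $C=A+(A-B)_-\geqslant A$, and $\tau((A-B)_+)=\tau(C)-\tau(B)$.

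The heart of the argument is the chain
\begin{eqnarray*}
\tau(A)-\tau(f(A)g(B)) &=& \tau\Pa{(g(A)-g(B))f(A)}\\
&\leqslant& \tau\Pa{(g(C)-g(B))f(A)}\\
&\leqslant& \tau\Pa{(g(C)-g(B))f(C)}\\
&=& \tau(C)-\tau(g(B)f(C))\\
&\leqslant& \tau(C)-\tau(B),
\end{eqnarray*}
where the first equality uses $\tau(f(A)g(A))=\tau(A)$; the second line adds $\tau((g(C)-g(A))f(A))\geqslant0$; the third adds $\tau((g(C)-g(B))(f(C)-f(A)))\geqslant0$; the fourth uses $g(C)f(C)=C$; and the last uses $\tau(g(B)(f(C)-f(B)))\geqslant0$ together with $g(B)f(B)=B$. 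Each nonnegativity step is an instance of $\tau(XY)\geqslant0$ for $X,Y\geqslant0$.

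For these estimates I need $A\leqslant C$ and $B\leqslant C$ to force $f(A)\leqslant f(C)$, $g(A)\leqslant g(C)$ and $g(B)\leqslant g(C)$. Monotonicity of $f$ is hypothesized; the monotonicity of $g(t)=t/f(t)$ is the one genuinely nonobvious ingredient and I expect it to be the main obstacle. I would supply it by the Kubo--Ando duality for operator means: the adjoint $t\mapsto t/f(t)$ of a strictly positive operator monotone function $f$ on $[0,+\infty)$ is again operator monotone. Granting this, every inequality in the chain is a legitimate application of operator monotonicity in $\cA$ followed by positivity of $\tau$, and the proof closes. As an alternative to invoking Kubo--Ando, I could pass to the GNS representation of $\cA$ with respect to $\tau$, obtaining a finite von Neumann algebra with a faithful trace in which the same manipulations are valid, and approximate $A,B$ by finite-spectrum elements to reduce directly to the matrix inequality already stated.
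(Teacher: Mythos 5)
Your proposal is correct, but note that the paper does not actually prove this theorem: it is stated as a cited result (Hoa et al.), and the only argument of this kind the paper supplies is Ozawa's proof of the special case $f(t)=t^{1-s}$, $g(t)=t^{s}$, $\tau=\trace$ on matrices. Your argument is a faithful generalization of that proof: the reduction $\tfrac12\tau(A+B-\abs{A-B})=\tau(A)-\tau((A-B)_+)$, the auxiliary element $C=B+(A-B)_+$ dominating both $A$ and $B$, and the three monotonicity insertions are exactly Ozawa's steps with $t^s$ and $t^{1-s}$ replaced by $g$ and $f$. The one genuinely new ingredient --- operator monotonicity of the dual $g(t)=t/f(t)$ for a strictly positive operator monotone $f$ --- is correctly identified and is indeed standard (from the integral representation, $f(t)/t$ is operator decreasing, hence its pointwise inverse $g$ is operator monotone; this is also the lemma Hoa et al. rely on), so invoking Kubo--Ando duality is legitimate and your GNS/approximation fallback is not needed. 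Two further remarks. First, your final step $\tau(g(B)f(C))\geqslant\tau(g(B)f(B))=\tau(B)$ is actually essential to reach the bound $\tau((A-B)_+)$ rather than $\tau(B+(A-B)_+)$; the paper's transcription of Ozawa's proof simply drops the subtracted term $\Tr{B^s(B+(A-B)_+)^{1-s}}$ and, as written, only establishes the weaker estimate, so your version repairs that slip. Second, you correctly observe that every inequality in the chain rests on positivity of $\tau$ on products of positive elements; it would be worth stating explicitly that ``tracial functional'' must be read as a \emph{positive} tracial linear functional for the argument (and the theorem) to make sense.
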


\begin{thrm}[Phillips \cite{Phillips}] Let
$A\geqslant B\geqslant0$ and $t\geqslant1$. Then
$$
\norm{A^{1/t}-B^{1/t}}^t_t \leqslant \norm{A-B}_1.
$$
\end{thrm}

Let $\cM$ be a general von Neumann algebra with a faithful normal
semi-finite weight $\varphi$. Denote by $\cN$ the crossed product
$\cM\rtimes_{\sigma^\varphi}\real$ which admits the canonical
faithful normal semi-finite trace $\tau$ and the dual action
$\theta_s(s\in\real)$, satisfying $\tau\circ\theta_s =
e^{-s}\tau(s\in\real)$. For $p\in(0,\infty]$, the \emph{Haagerup
$L^p$-space} $L^p(\cM) = L^p(\cM;\varphi)$ is defined by
$$
L^p(\cM) \defeq \Set{X\in\widetilde{\cN}:
\theta_s(X)=e^{-s/p}X,s\in\real}.
$$
Here $\cM=L^\infty(\cM)$. For each $\psi\in\cM^+_\ast$, a unique
$D_\psi\in\widetilde{\cN}^+$ is given by
$\widetilde{\psi}=\tau(D_\psi\cdot)$, where $\widetilde{\psi}$ is
the dual weight of $\psi$. The mapping $\psi\mapsto D_\psi$ is
extended to a linear bijection from $\cM_\ast$ onto $L^1(\cM)$, and
so the linear functional $\trace$ on $L^1(\cM)$ is defined by
$\Tr{D_\psi}=\psi(\I)(\psi\in\cM_\ast)$.

For $p\in(0,\infty)$, the Haagerup (quasi-)norm $\norm{X}_p$ of
$X\in L^p(\cM)$ is defined by $\norm{X}_p=\Tr{\abs{X}^p}^{1/p}$.
When $p\in[1,\infty)$, $L^p(\cM)$ is a Banach space with the norm
$\norm{\cdot}_p$, and its dual Banach space is $L^q(\cM)$, where
$\frac1p+\frac1q=1$ by the following duality:
$$
\Inner{X}{Y} = \Tr{XY} (=\Tr{YX}),\quad X\in L^p(\cM),Y\in L^q(\cM).
$$
In particular, $\cM_\ast\cong L^1(\cM)$ by the isometry $\psi\mapsto
D_\psi$.
\begin{thrm}[Hiai \cite{Hiai}]
Let $L^p(\cM)$ be the Haagerup $L^p$-space for some von Neumann
algebra. For $A,B\in L^p(\cM)^+$, we have
$$
\norm{A^t - B^t}_{p/t}\leqslant \norm{A-B}^t_p,
$$
where $t\in(0,1)$ and $p\in[t,\infty]$.
\end{thrm}

%=================================================================%
\section{Effros' approach---applications of Araki relative modular operator}
%=================================================================%

\subsection{The classical and matrix notions of perspectives}

Given a convex function $f$ defined on a convex set
$\cC\subseteq\real^n$, the \emph{perspective} $g$ is defined on the
subset
$$
L \defeq \Set{(x,t): t>0\ \text{and}\ x/t\in \cC}
$$
by
$$
g(x,t) \defeq f(x/t)t.
$$
It is a simple exercise to verify that $g(x, t)$ is a jointly convex
function in the sense that, if $\lambda\in[0,1]$
$$
g(\lambda x_1 + (1-\lambda)x_2, \lambda t_1 + (1-\lambda)t_2)
\leqslant \lambda g(x_1,t_1) + (1-\lambda)g(x_2,t_2).
$$
An elementary but important example is provided by the continuous
convex function $f(x) = x\log x$, with $f(0)=0$ defined on
$[0,+\infty)\subset\real$. It follows that the perspective function
$$
g(x,t) = t\frac xt\log\frac xt = x\log x - x\log t
$$
is jointly convex. Letting $p=(p_i)$ and $q=(q_i)$ be finite
probability measures with $p_i>0$ and $q_i>0$, the convexity of $f$
implies that the classical entropy
$$
\rH(p) = - \sum_i p_i\log p_i
$$
is concave, and the convexity of $g$ implies that the relative
entropy
$$
(q,p)\mapsto \rH(q||p) = \sum_i p_i\log p_i - p_i\log q_i
$$
is jointly convex on pairs of probability measures.

We recall that if $f: I = [a, b] \to R$ is continuous, and $T$ is an
$n\times n$ self-adjoint matrix with spectrum in $[a, b]$, then we
can define $f_n(T)$ by spectral theory (or by using a basis in which
$T$ is diagonal). $f$ is said to be \emph{matrix convex} if for each
$n \in \natural$, the corresponding function $f_n$ is convex on the
self-adjoint $n\times n$ matrices with spectrum in $[a, b]$.
Throughout the rest of the article we only consider $n\times n$
matrices, and we usually omit the subscript $n$. The following is
the affine version of the Hansen-Pedersen-Jensen inequality

\begin{thrm}
If $f$ is matrix convex, and $A$ and $B$ satisfy $A^\dagger A +
B^\dagger B = \I_n$, then
$$
f(A^\dagger T_1 A + B^\dagger T_2 B)\leqslant A^\dagger f(T_1) A +
B^\dagger f(T_2) B.
$$
\end{thrm}

\begin{thrm}[Effros \cite{Effros}]\label{th:Effros}
Suppose that $f(x)$ is operator convex. When restricted to positive
commuting matrices $\mathbb{L},\mathbb{R}$, i.e.
$[\mathbb{L},\mathbb{R}]=0$, the "perspective function"
\begin{eqnarray}
(\mathbb{L},\mathbb{R})\mapsto g(\mathbb{L},\mathbb{R}) =
f(\mathbb{L}/\mathbb{R}) \mathbb{R}
\end{eqnarray}
is jointly convex in the sense that if
$$
\mathbb{L} = \lambda \mathbb{L}_1 +
(1-\lambda)\mathbb{L}_2\quad\text{and}\quad \mathbb{R} = \lambda
\mathbb{R}_1 + (1-\lambda)\mathbb{R}_2
$$
with $[\mathbb{L}_i,\mathbb{R}_i]=0$ $(i=1,2)$, $\lambda\in[0,1]$,
\begin{eqnarray}
g(\mathbb{L},\mathbb{R}) \leqslant \lambda
g(\mathbb{L}_1,\mathbb{R}_1) + (1-\lambda) g(\mathbb{L}_2,
\mathbb{R}_2).
\end{eqnarray}
\end{thrm}

\begin{proof}
The matrices $A =
(\lambda\mathbb{R}_1)^{\frac12}\mathbb{R}^{-\frac12}$ and $B =
((1-\lambda)\mathbb{R}_2)^{\frac12}\mathbb{R}^{-\frac12}$ satisfy
$A^\dagger A + B^\dagger B = \I$. From the above Theorem, we have
\begin{eqnarray*}
g(\mathbb{L},\mathbb{R}) &=& \mathbb{R}f(\mathbb{L}/\mathbb{R}) = \mathbb{R}^{\frac12}f(\mathbb{R}^{-\frac12}\mathbb{L}\mathbb{R}^{-\frac12})\mathbb{R}^{\frac12}\\
&=& \mathbb{R}^{\frac12}f(A^\dagger(\mathbb{L}_1/\mathbb{R}_1)A + B^\dagger(\mathbb{L}_2/\mathbb{R}_2)B)\mathbb{R}^{\frac12}\\
&\leqslant& \mathbb{R}^{\frac12}\Pa{A^\dagger f(\mathbb{L}_1/\mathbb{R}_1)A + B^\dagger f(\mathbb{L}_2/\mathbb{R}_2)B}\mathbb{R}^{\frac12}\\
&=& (\lambda \mathbb{R}_1)^{\frac12}
f(\mathbb{L}_1/\mathbb{R}_1)(\lambda \mathbb{R}_1)^{\frac12} +
((1-\lambda) \mathbb{R}_2)^{\frac12} f(\mathbb{L}_2/\mathbb{R}_2)((1-\lambda) \mathbb{R}_2)^{\frac12}\\
&=& \lambda g(\mathbb{L}_1,\mathbb{R}_1) + (1-\lambda)
g(\mathbb{L}_2,\mathbb{R}_2).
\end{eqnarray*}
\end{proof}

\begin{cor}
The relative entropy function
$$
(\rho,\sigma)\mapsto \rS(\rho||\sigma) = \Tr{\rho\log\rho
-\rho\log\sigma}
$$
is jointly convex on the strictly positive $n\times n$ density
matrices $\rho,\sigma$.
\end{cor}

\begin{proof}
The function $f(x) = x\log x$ is operator convex and thus
$$
\Inner{\I}{g(\mathbb{L}_\rho,\mathbb{R}_\sigma)(\I)} =
\rS(\rho||\sigma)
$$
is jointly convex.
\end{proof}

\begin{cor}
If $s\in(0,1)$, then the function
$$
F(A,B) = \Tr{A^s K^\dagger B^{1-s} K}
$$
is jointly concave on the strictly positive $n\times n$ matrices
$A,B$.
\end{cor}

\begin{proof}
$f(t) = -t^s$ is operator convex, $- \Tr{A^s K^\dagger B^{1-s} K} =
\Inner{K^\dagger}{g(\mathbb{L}_A,\mathbb{R}_B)(K^\dagger)}$ is
jointly convex.
\end{proof}

\subsection{Mar\'{e}chal's perspectives}

We assume that the functions $f$ and $g$ are defined on an interval
$I\subseteq \real$ and that $0\in I$.

\begin{thrm}\label{th:Marechal}
If $f$ is matrix convex, $f(0)\leqslant0$, and $A$ and $B$ are
matrices with $A^\dagger A + B^\dagger B\leqslant\I_n$, then
$$
f(A^\dagger T_1 A + B^\dagger T_2 B) \leqslant A^\dagger f(T_1) A +
B^\dagger f(T_2) B.
$$
\end{thrm}
Given continuous functions $f$ and $h$, and commuting positive
matrices $\mathbb{L}$ and $\mathbb{R}$, we define
$$
(f\Delta h) (\mathbb{L},\mathbb{R}) \defeq
f(\mathbb{L}/h(\mathbb{R}))h(R)
$$
\begin{thrm}
Suppose that $f$ is matrix convex, $f(0)\leqslant0$, and that $h$ is
matrix concave with $h>0$. Then
$$
(\mathbb{L},\mathbb{R})\mapsto (f\Delta h) (\mathbb{L},\mathbb{R})
$$
is jointly convex on positive commuting matrices
$\mathbb{L},\mathbb{R}$ in the sense of Theorem~\ref{th:Effros}.
\end{thrm}

\begin{proof}
Let us suppose that $\mathbb{L} = \lambda \mathbb{L}_1 +
(1-\lambda)\mathbb{L}_2$ and $\mathbb{R} = \lambda \mathbb{R}_1 +
(1-\lambda)\mathbb{R}_2$ where $\Br{\mathbb{L}_j,\mathbb{R}_j} =0$.
Then
$$
\lambda h(\mathbb{R}_1) + (1-\lambda)h(\mathbb{R}_2) \leqslant
h(\mathbb{R}),
$$
hence
$$
A = (\lambda h(\mathbb{R}_1))^{\frac12}
(h(\mathbb{R}))^{-\frac12},\quad B = ((1-\lambda)
h(\mathbb{R}_2))^{\frac12} (h(\mathbb{R}))^{-\frac12}
$$
satisfy
$$
A^\dagger A + B^\dagger B\leqslant \I.
$$
It follows from Theorem~\ref{th:Marechal} that
\begin{eqnarray*}
(f\Delta h) (\mathbb{L},\mathbb{R}) &=& (h(\mathbb{R}))^{\frac12}f((h(\mathbb{R}))^{-\frac12}\mathbb{L}(h(\mathbb{R}))^{-\frac12})(h(\mathbb{R}))^{\frac12}\\
&=& (h(\mathbb{R}))^{\frac12}f(A^\dagger(\mathbb{L}_1/\mathbb{R}_1)A + B^\dagger(\mathbb{L}_2/\mathbb{R}_2)B)(h(\mathbb{R}))^{\frac12}\\
&\leqslant& (h(\mathbb{R}))^{\frac12}A^\dagger f(\mathbb{L}_1/\mathbb{R}_1)A (h(\mathbb{R}))^{\frac12} + (h(\mathbb{R}))^{\frac12}B^\dagger f(\mathbb{L}_2/\mathbb{R}_2)B (h(\mathbb{R}))^{\frac12}\\
&=& \lambda (f\Delta h)(\mathbb{L}_1,\mathbb{R}_1) +
(1-\lambda)(f\Delta h)(\mathbb{L}_2,\mathbb{R}_2).
\end{eqnarray*}
\end{proof}

\begin{cor}
Suppose that $0<p,q$ and that $p+q\leqslant1$. Then the function
$$
(A,B)\mapsto \Tr{A^q X^\dagger B^p X}
$$
is jointly concave on the positive $n\times n$ matrices.
\end{cor}

\begin{proof}
Since $p+q\leqslant 1$, $p+q$ is a convex combination of $q$ and 1,
i.e. we may choose $t\in[0,1]$ with $p+q = (1-t)q + t1$. If we let
$q=s$, then
$$
p = - tq + t = (1-q)t = (1-s)t.
$$
Thus, it suffices to show that if $s,t\in[0,1]$, then
$$
(A,B)\mapsto - \Tr{A^s X^\dagger B^{(1-s)t}X}
$$
is jointly convex. The functions $f(x) = -x^s$ and $h(y)=y^t$ are
operator convex and concave, respectively, and
$$
(f\Delta h)(\mathbb{L}_A,\mathbb{R}_B) =
h(\mathbb{R}_A)f(\mathbb{L}_A/h(\mathbb{R}_B)) = -
\mathbb{L}_A^s\mathbb{R}_B^{(1-s)t}.
$$
Therefore
$$
- \Tr{A^s X^\dagger B^{(1-s)t}X} = \Inner{X^\dagger}{(f\Delta
h)(\mathbb{L}_A,\mathbb{R}_B)(X^\dagger)}
$$
is jointly convex.
\end{proof}

\begin{lem}\label{lem:positivityforblock}
Let $A,B\in\pd{\cH}$ and $X\in\lin{\cH}$. Then
$\Br{\begin{array}{cc}
                                                      A & X \\
                                                      X^\dagger & B
                                                    \end{array}
}$ is positive semi-definite if and only if $B\geqslant X^\ast
A^{-1}X$.
\end{lem}

\begin{lem}\label{lem:operatorinequality}
Let $A,B,A_i,B_i\in\pd{\cH}(i=1,2)$ be such that
$[A,B]=[A_i,B_i]=0(i=1,2)$, and
$$
A\geqslant \lambda A_1 + (1-\lambda)A_2,\quad B\geqslant \lambda B_1
+ (1-\lambda)B_2,\quad \forall \lambda\in[0,1].
$$ Then
\begin{eqnarray}\label{eq:superadditivity}
A^s B^t \geqslant \lambda A^s_1 B^t_1 + (1-\lambda) A^s_2 B^t_2
\end{eqnarray}
for all $0\leqslant s,t; s+t\leqslant1$.
\end{lem}

\begin{proof}
Let $E$ be the set of all $(s,t)\in[0,1]\times[0,1]$ for which the
inequality Eq.~(\ref{eq:superadditivity}) holds. We first show that
$(\frac12,\frac12)\in E$. From Lemma~\ref{lem:positivityforblock},
it follows that
$$
\left[
  \begin{array}{cc}
    A_1 & \sqrt{A_1B_1} \\
    \sqrt{A_1B_1} & B_1 \\
  \end{array}
\right]\quad \text{and}\quad \left[
  \begin{array}{cc}
    A_2 & \sqrt{A_2B_2} \\
    \sqrt{A_2B_2} & B_2 \\
  \end{array}
\right]
$$
are positive semi-definite. Consequently,
$$
\left[
  \begin{array}{cc}
    \lambda A_1 + (1-\lambda)A_2 & \lambda\sqrt{A_1B_1} + (1-\lambda)\sqrt{A_2B_2} \\
    \lambda\sqrt{A_1B_1} + (1-\lambda)\sqrt{A_2B_2} & \lambda B_1 + (1-\lambda)B_2 \\
  \end{array}
\right]
$$
is positive semi-definite. Using the facts that $A\geqslant\lambda
A_1 + (1-\lambda)A_2$ and $B\geqslant\lambda B_1 + (1-\lambda)B_2$,
we see that
$$
\left[
  \begin{array}{cc}
    A & \lambda\sqrt{A_1B_1} + (1-\lambda)\sqrt{A_2B_2} \\
    \lambda\sqrt{A_1B_1} + (1-\lambda)\sqrt{A_2B_2} & B \\
  \end{array}
\right]
$$
is positive semi-definite and hence
$$
\left[
  \begin{array}{cc}
    \I & A^{-\frac12}\Br{\lambda\sqrt{A_1B_1} + (1-\lambda)\sqrt{A_2B_2}}A^{-\frac12} \\
    A^{-\frac12}\Br{\lambda\sqrt{A_1B_1} + (1-\lambda)\sqrt{A_2B_2}}A^{-\frac12} & A^{-\frac12}BA^{-\frac12} \\
  \end{array}
\right]
$$
is positive semi-definite. Thus, again by
Lemma~\ref{lem:positivityforblock}, we have
$$
A^{-1}B = A^{-\frac12}BA^{-\frac12} \geqslant
\Br{A^{-\frac12}\Br{\lambda\sqrt{A_1B_1} +
(1-\lambda)\sqrt{A_2B_2}}A^{-\frac12}}^2.
$$
Using the fact that the function $g(x)=\sqrt{x}$ is operator
monotone on $[0,+\infty)$, we get
$$
\sqrt{A^{-1}B} \geqslant A^{-\frac12}\Br{\lambda\sqrt{A_1B_1} +
(1-\lambda)\sqrt{A_2B_2}}A^{-\frac12},
$$
which implies
\begin{eqnarray}\label{eq:squarerootsuperadd}
\sqrt{AB} \geqslant \lambda\sqrt{A_1B_1} + (1-\lambda)\sqrt{A_2B_2}.
\end{eqnarray}
This proves that $(\frac12,\frac12)\in E$. Clearly,
$(0,0),(0,1),(1,0)$ are in $E$ and $E$ is closed. If
$(s_1,t_1),(s_2,t_2)\in E$, then it follows as the proof of
Eq.~(\ref{eq:squarerootsuperadd}) that
$(\frac{s_1+s_2}2,\frac{t_1+t_2}2)\in E$, and so $E$ is convex. This
proves the lemma.
\end{proof}

The following theorem is known as Lieb's concavity theorem.

\begin{thrm}[Lieb's concavity theorem]
Let $X\in\lin{\cH}$ and $s,t\geqslant0$ be such that
$s+t\leqslant1$. Then the map
$$
f(A,B) = \Tr{X^\ast A^s X B^t}
$$
is jointly concave on $\pd{\cH}\times\pd{\cH}$.
\end{thrm}

\begin{proof}
Let $A_i,B_i\in\pd{\cH}(i=1,2)$ and $\lambda\in[0,1]$. Let
$\mathbb{L}_{A_i}, \mathbb{L}_A$ be the left multiplication
operators on the space $\lin{\cH}$ induced by $A_i,A=\lambda
A_1+(1-\lambda)A_2$, where $i=1,2$, respectively; $\mathbb{R}_{B_i},
\mathbb{R}_B$ be the right multiplication operators on the space
$\lin{\cH}$ induced by $B_i,B=\lambda B_1+(1-\lambda)B_2$, where
$i=1,2$, respectively. The
$\mathbb{L}_{A_i},\mathbb{L}_A,\mathbb{R}_{B_i},\mathbb{R}_B$ are
positive operators on $\lin{\cH}$. Moreover, $\mathbb{L}_{A_i}$
commutes with $\mathbb{R}_{B_i}$, $\mathbb{L}_A$ commutes with
$\mathbb{R}_B$. Also, we have
$$
\mathbb{L}_A = \lambda\mathbb{L}_{A_1} +
(1-\lambda)\mathbb{L}_{A_2}\quad \text{and}\quad \mathbb{R}_B =
\lambda\mathbb{R}_{B_1} + (1-\lambda)\mathbb{R}_{B_2}.
$$
Therefore, by Lemma~\ref{lem:operatorinequality},
$$
\mathbb{L}_A^s\mathbb{R}_B^t\geqslant \lambda
\mathbb{L}_{A_1}^s\mathbb{R}_{B_1}^t +
(1-\lambda)\mathbb{L}_{A_2}^s\mathbb{R}_{B_2}^t
$$
for $0\leqslant s,t;s+t\leqslant1$. Thus, for every $X\in\lin{\cH}$,
\begin{eqnarray*}
\Inner{X}{\Pa{\mathbb{L}_A^s\mathbb{R}_B^t} (X)}&\geqslant&
\Inner{X}{\Pa{\lambda \mathbb{L}_{A_1}^s\mathbb{R}_{B_1}^t +
(1-\lambda)\mathbb{L}_{A_2}^s\mathbb{R}_{B_2}^t} (X)}\\
&=& \lambda\Inner{X}{\Pa{\mathbb{L}_{A_1}^s\mathbb{R}_{B_1}^t}(X)} +
(1-\lambda) \Inner{X}
{\Pa{\mathbb{L}_{A_2}^s\mathbb{R}_{B_2}^t}(X)}.
\end{eqnarray*}
That is,
$$
\Inner{X}{(\lambda A_1 + (1-\lambda)A_2)^s X(\lambda B_1 +
(1-\lambda)B_2)^t}\geqslant \lambda \Inner{X}{A^s_1 XB^t_1} +
(1-\lambda)\Inner{X}{A^s_2 XB^t_2}.
$$
This completes the proof.
\end{proof}

\begin{remark}
There are two key elements in this proof. One is the replacement of
the noncommuting matrices $A_i$ and $B_i$ by left and right
multiplication operators $\cA_i$ and $\cB_i$, respectively, which
act on matrices and commute. This idea is implicit in proofs based
on Araki's relative modular operator.
\end{remark}

\subsection{Operator extension of strong subadditivity of entropy}

Following Effros, we choose $\mathbb{L}_\rho$ and
$\mathbb{R}_\sigma$ to be superoperators that multiplies matrix from
the left or right. For $X \in \lin{\cH}$, $\mathbb{L}_\rho$ and
$\mathbb{R}_\sigma$ are defined as follows.
\begin{eqnarray}
\mathbb{L}_\rho X = \rho X\quad\text{and}\quad \mathbb{R}_\sigma X =
X \sigma.
\end{eqnarray}
Note in particular, that $\mathbb{L}_\rho$ and $\mathbb{R}_\sigma$
commute with each other. One can also show the following relations.
\begin{eqnarray}
\log (\mathbb{L}_\rho) X = \log (\rho) X\quad\text{and}\quad\log
(\mathbb{R}_\sigma) X = X\log (\sigma).
\end{eqnarray}

Denoting $\widehat{H}_A = - \log (\rho_A)\ot \I_{A^c}$, following
statement follows from Effros' result.
\begin{thrm}[Kim \cite{Kim}]
Let $\rho_{ABC}\in\density{\cH_A\ot\cH_B\ot\cH_C}$. Denote $\widehat
H_{X} = \log(\rho_X) \ot\I_{X^c}$, where $X\in\Set{AB,BC,B,ABC}$.
\begin{eqnarray}
\Ptr{AB}{\rho_{ABC}\Pa{\widehat{H}_{AB} + \widehat{H}_{BC} - \widehat{H}_B - \widehat{H}_{ABC}}} \geqslant 0,\\
\Ptr{BC}{\rho_{ABC}\Pa{\widehat{H}_{AB} + \widehat{H}_{BC} -
\widehat{H}_B - \widehat{H}_{ABC}}} \geqslant 0.
\end{eqnarray}
\end{thrm}

\begin{proof}
Let $f(x) = x\log x$. Since $f(x)$ is operator convex,
$$
g(\mathbb{L}_\rho,\mathbb{R}_\sigma)= \mathbb{L}_\rho \log
(\mathbb{L}_\rho) - \mathbb{L}_\rho \log (\mathbb{R}_\sigma)
$$
is jointly convex in $\mathbb{L}_\rho$ and $\mathbb{R}_\sigma$.
Therefore,
\begin{eqnarray}
\Inner{K}{g(\mathbb{L}_\rho,\mathbb{R}_\sigma)(K)} =
\Tr{\rho\log(\rho)KK^\dagger - \rho K\log(\sigma) K^\dagger}
\end{eqnarray}
is jointly convex in $\mathbb{L}_\rho$ and $\mathbb{R}_\sigma$ for
all $K \in \lin{\cH}$. Choose
$$
\rho = \rho_{ABC},\quad \sigma = \rho_{AB} \ot \I_C/d_C,\quad K =
\I_{AB} \ot P_C,
$$
where $P_C$ is a projector acting on $\cH_C$ and $d_C$ is dimension
of $\cH_C$. Note
\begin{eqnarray}
\I_A/d_A \ot \rho_{BC} = \frac{1}{d_A^2} \sum_{\mu=1}^{d_A^2}
U_{A,\mu} \rho_{ABC} U_{A,\mu}^{\dagger}
\end{eqnarray}
for some unitaries $\set{U_{A,\mu}}$. Using joint convexity, we see
that
\begin{eqnarray*}
&&\Tr{(\I_A/d_A \ot \rho_{BC})\Br{\log\Pa{\I_A/d_A \ot \rho_{BC}} - \log\Pa{\I_A/d_A \ot \rho_B \ot \I_C/d_C}}P_C}\\
&&\leqslant \frac{1}{d_A^2} \sum_{\mu=1}^{d_A^2}\Tr{(U_{A,\mu} \rho_{ABC} U_{A,\mu}^{\dagger})\Br{\log\Pa{U_{A,\mu} \rho_{ABC} U_{A,\mu}^{\dagger}} - \log\Pa{U_{A,\mu} \rho_{AB} U_{A,\mu}^{\dagger} \ot \I_C/d_C}}P_C}\\
&&= \Tr{\rho_{ABC}(\log(\rho_{ABC}) - \log(\rho_{AB} \ot
\I_C/d_C))P_C}.
\end{eqnarray*}
Now denote
\begin{eqnarray*}
\mathrm{L.H.S.} &\defeq& \Tr{(\I_A/d_A \ot \rho_{BC})\Br{\log\Pa{\I_A/d_A \ot \rho_{BC}} - \log\Pa{\I_A/d_A \ot \rho_B \ot \I_C/d_C}}P_C},\\
\mathrm{R.H.S.} &\defeq& \Tr{\rho_{ABC}(\log(\rho_{ABC}) -
\log(\rho_{AB} \ot \I_C/d_C))P_C}.
\end{eqnarray*}
Then
\begin{eqnarray*}
\mathrm{L.H.S.} &=& \Tr{\rho_{BC}\Pa{\widehat H_B - \widehat H_{BC}}P_C} + \log (d_C)\Tr{\rho_C P_C}\\
&=& \Tr{\rho_{ABC}\Pa{\widehat H_B - \widehat H_{BC}}P_C} + \log (d_C)\Tr{\rho_C P_C},\\
\mathrm{R.H.S.} &=& \Tr{\rho_{ABC}\Pa{\widehat H_{AB} - \widehat
H_{ABC}}P_C} + \log (d_C)\Tr{\rho_C P_C}.
\end{eqnarray*}
Since
$$
\Tr{\rho_{ABC}\Pa{\widehat{H}_{AB} + \widehat{H}_{BC} -
\widehat{H}_B - \widehat{H}_{ABC}}P_C} = \mathrm{R.H.S.} -
\mathrm{L.H.S.}\geqslant0
$$
holds for an arbitrary projector $P_C$, it follows that
$$
\Ptr{AB}{\rho_{ABC}\Pa{\widehat{H}_{AB} + \widehat{H}_{BC} -
\widehat{H}_B - \widehat{H}_{ABC}}}\geqslant 0.
$$
That is, $\Ptr{AB}{\rho_{ABC}\Pa{\widehat{H}_{AB} + \widehat{H}_{BC}
- \widehat{H}_B - \widehat{H}_{ABC}}}$ is a positive semi-definite
operator acting on $\cH_C$. Similarly, we have that
$$
\Ptr{BC}{\rho_{ABC}\Pa{\widehat{H}_{AB} + \widehat{H}_{BC} -
\widehat{H}_B - \widehat{H}_{ABC}}}\geqslant0.
$$
\end{proof}
One may wish to find a similar inequality when partial trace is
restricted to $A$ or $B$. In both cases, the resulting operators are
not even hermitian.

\begin{thrm}[Ruskai \cite{Ruskai}]
Let $\rho_{ABC}\in\density{\cH_A\ot\cH_B\ot\cH_C}$. Denote $\widehat
H_{X} = \log(\rho_X) \ot\I_{X^c}$, where $X\in\Set{AB,BC,B,ABC}$.
\begin{eqnarray}
\Ptr{AB}{\Pa{\widehat{H}_{AB} + \widehat{H}_{BC} - \widehat{H}_B - \widehat{H}_{ABC}}\rho_{ABC}} \geqslant 0,\\
\Ptr{BC}{\Pa{\widehat{H}_{AB} + \widehat{H}_{BC} - \widehat{H}_B -
\widehat{H}_{ABC}}\rho_{ABC}} \geqslant 0,\\
\Ptr{AB}{\rho_{AB}\Pa{\widehat{H}_{AB} + \widehat{H}_{BC} -
\widehat{H}_B - \widehat{H}_{ABC}}} \leqslant 0
\end{eqnarray}
\end{thrm}

\begin{cor}[Kim \cite{Kim}]
Let $\rho_{AB}\in\density{\cH_A\ot\cH_B}$. The we have:
\begin{eqnarray}
\Ptr{A}{\rho_{AB}\Pa{\widehat{H}_A + \widehat{H}_B -
\widehat{H}_{AB}}} \geqslant 0,\\
\Ptr{B}{\rho_{AB}\Pa{\widehat{H}_A + \widehat{H}_B -
\widehat{H}_{AB}}} \geqslant 0.
\end{eqnarray}
\end{cor}

\begin{cor}[Ruskai \cite{Ruskai}]
Let $\rho_{AB}\in\density{\cH_A\ot\cH_B}$. The we have:
\begin{eqnarray}
\Ptr{A}{\Pa{\widehat{H}_A + \widehat{H}_B -
\widehat{H}_{AB}}\rho_{AB}} \geqslant 0,\\
\Ptr{B}{\Pa{\widehat{H}_A + \widehat{H}_B -
\widehat{H}_{AB}}\rho_{AB}} \geqslant 0.
\end{eqnarray}
\end{cor}

\begin{exam}
Let $\rho$ be a state and $\mathbb{K}_\rho(X) \defeq \int^1_0 \rho^t
X\rho^{1-t}dt$ defined for Hermite matrices. Recall that if
$\rho=\sum_i\lambda_i\out{\lambda_i}{\lambda_i}$, then
$\rho^{-1}=\sum_i\frac1{\lambda_i}\out{\lambda_i}{\lambda_i}$. For a
super-operator, the spectral projection is
$\mathbb{L}_{\out{\lambda_i}{\lambda_i}}\mathbb{R}_{\out{\lambda_j}{\lambda_j}}$
for which its action is given by
\begin{eqnarray*}
\mathbb{L}_{\out{\lambda_i}{\lambda_i}}\mathbb{R}_{\out{\lambda_j}{\lambda_j}}\mathbb{K}_\rho(X)
&=& \Innerm{\lambda_i}{\int^1_0 \rho^t X\rho^{1-t}dt}{\lambda_j}
\out{\lambda_i}{\lambda_j} \\
&=& \int^1_0
\lambda^t_i\lambda^{1-t}_jdt\Innerm{\lambda_i}{X}{\lambda_j}\out{\lambda_i}{\lambda_j}\\
&=& \frac{\lambda_i - \lambda_j}{\ln\lambda_i - \ln\lambda_j}
\mathbb{L}_{\out{\lambda_i}{\lambda_i}}\mathbb{R}_{\out{\lambda_j}{\lambda_j}}(X),
\end{eqnarray*}
that is
$$
\mathbb{L}_{\out{\lambda_i}{\lambda_i}}\mathbb{R}_{\out{\lambda_j}{\lambda_j}}\mathbb{K}_\rho
=\mathbb{K}_\rho
\mathbb{L}_{\out{\lambda_i}{\lambda_i}}\mathbb{R}_{\out{\lambda_j}{\lambda_j}}
= \frac{\lambda_i - \lambda_j}{\ln\lambda_i - \ln\lambda_j}
\mathbb{L}_{\out{\lambda_i}{\lambda_i}}\mathbb{R}_{\out{\lambda_j}{\lambda_j}}.
$$
This gives that
$$
\mathbb{L}_{\out{\lambda_i}{\lambda_i}}\mathbb{R}_{\out{\lambda_j}{\lambda_j}}\mathbb{K}^{-1}_\rho
=\mathbb{K}^{-1}_\rho\mathbb{L}_{\out{\lambda_i}{\lambda_i}}\mathbb{R}_{\out{\lambda_j}{\lambda_j}}
= \frac{\ln\lambda_i - \ln\lambda_j}{\lambda_i - \lambda_j}
\mathbb{L}_{\out{\lambda_i}{\lambda_i}}\mathbb{R}_{\out{\lambda_j}{\lambda_j}}.
$$
Using the integral representation of $\ln x$:
$$
\ln x = \int^\infty_0 \Pa{\frac1{1+t} - \frac1{x+t}}dt,
$$
it follows that
\begin{eqnarray*}
\frac{\ln\lambda_i - \ln\lambda_j}{\lambda_i - \lambda_j} &=&
\frac1{\lambda_i - \lambda_j}\Br{\int^\infty_0 \Pa{\frac1{1+t} -
\frac1{\lambda_i+t}}dt - \int^\infty_0 \Pa{\frac1{1+t} -
\frac1{\lambda_j+t}}dt}\\
&=& \frac1{\lambda_i -
\lambda_j}\Br{\int^\infty_0\Pa{\frac1{\lambda_j+t} -
\frac1{\lambda_i+t}}dt}\\
&=&\int^\infty_0\frac1{(\lambda_i+t)(\lambda_j+t)}dt.
\end{eqnarray*}
Thus
$$
\mathbb{L}_{\out{\lambda_i}{\lambda_i}}\mathbb{R}_{\out{\lambda_j}{\lambda_j}}\mathbb{K}^{-1}_\rho
=
\int^\infty_0\frac1{(\lambda_i+t)(\lambda_j+t)}dt\mathbb{L}_{\out{\lambda_i}{\lambda_i}}
\mathbb{R}_{\out{\lambda_j}{\lambda_j}}.
$$
Furthermore,
\begin{eqnarray*}
\mathbb{L}_{\out{\lambda_i}{\lambda_i}}\mathbb{R}_{\out{\lambda_j}{\lambda_j}}\mathbb{K}^{-1}_\rho(X)
&=&
\int^\infty_0\frac1{(\lambda_i+t)(\lambda_j+t)}dt\mathbb{L}_{\out{\lambda_i}{\lambda_i}}
\mathbb{R}_{\out{\lambda_j}{\lambda_j}}(X)\\
&=&
\int^\infty_0\frac1{(\lambda_i+t)(\lambda_j+t)}dt\out{\lambda_i}{\lambda_i}X\out{\lambda_j}{\lambda_j}\\
&=&\int^\infty_0(\lambda_i+t)^{-1}\out{\lambda_i}{\lambda_i}X(\lambda_j+t)^{-1}\out{\lambda_j}{\lambda_j}dt.
\end{eqnarray*}
Finally,
\begin{eqnarray*}
\mathbb{K}^{-1}_\rho(X) &=&\sum_{i,j}
\mathbb{L}_{\out{\lambda_i}{\lambda_i}}\mathbb{R}_{\out{\lambda_j}{\lambda_j}}\mathbb{K}^{-1}_\rho(X)\\
&=&\sum_{i,j}\int^\infty_0(\lambda_i+t)^{-1}\out{\lambda_i}{\lambda_i}X(\lambda_j+t)^{-1}\out{\lambda_j}{\lambda_j}dt\\
&=&\int^\infty_0 \Pa{\sum_i(\lambda_i+t)^{-1}\out{\lambda_i}{\lambda_i}}X\Pa{\sum_j(\lambda_j+t)^{-1}\out{\lambda_j}{\lambda_j}}dt\\
&=& \int^\infty_0(\rho+t)^{-1}X(\rho+t)^{-1}dt.
\end{eqnarray*}
In what follows, we show that
$$
\mathbb{K}_\rho\Set{C^\dagger=C: \Tr{\rho C}=0}=\Set{B^\dagger=B:
\Tr{B}=0}.
$$
Since $\Tr{\mathbb{K}_\rho(C)}=\Tr{\rho C}$, it follows that
$$
\mathbb{K}_\rho\Set{C^\dagger=C: \Tr{\rho C}=0} \subseteq
\Set{B^\dagger=B: \Tr{B}=0}.
$$
Now let $B\in\Set{B^\dagger=B: \Tr{B}=0}$. Since $\mathbb{K}_\rho$
is invertible, the equation $B=\mathbb{K}_\rho(X)$ has a unique
solution: $X=\mathbb{K}^{-1}_\rho(B)$. It suffice to show $\Tr{\rho
X}=0$. Clearly
\begin{eqnarray*}
\Tr{\rho X} &=& \Tr{\rho\mathbb{K}^{-1}_\rho(B)} = \Tr{\rho
\int^\infty_0(\rho+t)^{-1}B(\rho+t)^{-1}dt}\\
&=& \int^\infty_0 \Tr{\rho(\rho+t)^{-2}B}dt = \sum_i\int^\infty_0
\frac{\lambda_i}{(\lambda_i+t)^2}dt
\Innerm{\lambda_i}{B}{\lambda_i}\\
&=& \sum_i \Innerm{\lambda_i}{B}{\lambda_i} = \Tr{B} = 0.
\end{eqnarray*}
\end{exam}


\begin{thebibliography}{99}


\bibitem{Ali}
S.T. Ali, F. Bagarello, and G. Honnouvo, \newblock {\em Modular
Structures on Trace Class Operators and Applications to Landau
Levels}, \newblock\jpa: Math. Theor.
\href{http://dx.doi.org/10.1088/1751-8113/43/10/105202}{\textbf{43},
105202 (2010).}

\bibitem{Araki}
H. Araki and T. Masuda, \newblock {\em Positive cones and
$L_p$-spaces for von Neumann algebras}, \newblock Publ. RIMS, Kyoto
Univ. \href{http://dx.doi.org/10.2977/prims/1195183577}{\textbf{18},
339-411 (1982).}

\bibitem{Audenaert}
K.M.R. Audenaert \emph{et al.}, \newblock {\em The quantum Chernoff
bound}, \newblock \prl
\href{http://dx.doi.org/10.1103/PhysRevLett.98.160501}{\textbf{98},
160501 (2007).}

\bibitem{Bagarello}
F. Bagarello, \newblock {\em Modular Structures and Landau Levels},
\newblock Quantum Probability and Related Topics
\href{http://dx.doi.org/10.1142/9789814338745_0002}{\textbf{27},
34-51 (2011).}

\bibitem{Bratteli}
O. Bratteli, D.W. Robinson, \newblock {\em Operator algebras and
quantum statistical mechanics 1: C*-and W*-algebras. Symmetry
groups. Decomposition of states}, \newblock 2nd Edition,
Springer-Verlag, Berlin Heidelberg (2011).

\bibitem{Connes}
A. Connes, \newblock {\em Une classification des facteurs de type
III}, \newblock Ann. Ec. Norm. Sup.
\href{http://dx.doi.org/10.1007/BFb0080062}{\textbf{6}, 133-252
(1973).}

\bibitem{Effros}
E.G. Effros, \newblock {\em A matrix convexity approach to some
celebrated quantum inequalities}, \newblock Proc. Nat. Acad. Sci.
\href{http://dx.doi.org/10.1073/pnas.0807965106}{\textbf{106},
1006-1008 (2009).}

\bibitem{Hiai}
F. Hiai, \newblock {\em Distance between unitary orbits in von
Neumann algebras}, \newblock Pacific J. Math.
\href{http://projecteuclid.org/euclid.pjm/1102650151}{\text{138}(2),
259-294 (1989).}

\bibitem{Hoa}
D.T. Hoa \emph{et al.}, \newblock {\em On generalized
Powers-St\"{o}mer's inequality}, \newblock Linear Algebra and Its
Applications
\href{http://dx.doi.org/10.1016/j.laa.2012.07.053}{\textbf{438},
242-249 (2012).}

\bibitem{Jaksic}
V. Jak\v{s}i\'{c} \emph{et al.}, \newblock {\em Quantum hypothesis
testing and non-equilibrium statistical mechanics}, \newblock \rmp
\href{http://dx.doi.org/10.1142/S0129055X12300026}{\textbf{24},
1230002 (2012).}

\bibitem{Jencova}
A. Jen\v{c}ov\'{a}, M.B. Ruskai, \newblock {\em A unified treatment
of convexity of relative entropy and related trace functions, with
conditions for equality}, \newblock \rmp
\href{http://dx.doi.org/10.1142/S0129055X10004144}{\textbf{22}, 1099
(2010).}

\bibitem{Kim}
I.H. Kim, \newblock {\em Operator extension of strong subadditivity
of entropy}, \newblock \jmp
\href{http://dx.doi.org/10.1063/1.4769176}{\textbf{53}, 122204
(2012).}

\bibitem{Ogata}
Y. Ogata, \newblock {\em A Generalization of Powers-St\"{o}rmer
Inequality}, \newblock \lmp
\href{http://dx.doi.org/10.1007/s11005-011-0504-y}{\textbf{97},
339-346 (2011).}

\bibitem{Phillips}
J.~Phillips, \newblock {\em Generalized Powers-Stormer
inequalities}, \newblock Talk at the Canadian Operator Theory
Conference, Victoria, July 1986 (unpublished).

\bibitem{Powers}
R.T.~Powers, \newblock {\em Free states of the canonical
anticommutation relations}, \newblock \cmp
\href{http://dx.doi.org/10.1007/BF01645492}{\textbf{16}, 1-33
(1970).}

\bibitem{Ruskai}
M.B. Ruskai, \newblock {\em Remarks on Kim's Strong Subadditivity
Matrix Inequality: Extensions and Equality Conditions},
\newblock \jmp
\href{http://dx.doi.org/10.1063/1.4823581}{\textbf{54}, 102202
(2013).}

\bibitem{Watrous}
J. Watrous, \newblock {\em The theory of quantum information},
University of Waterloo (2016). \newblock See
\url{https://cs.uwaterloo.ca/~watrous/TQI/}


\end{thebibliography}
\end{document}